\newtheorem{theorem}{Theorem}
\newtheorem{corollary}{Corollary}
\newtheorem{lemma}{Lemma}
\newtheorem{proposition}{Proposition}
\theoremstyle{definition}
\newtheorem{definition}{Definition}
\newtheorem{reduction}[]{Reduction}
\newcommand{\PW}{\mbox{PW}}
\newcommand{\NW}{\mbox{NW}}
\newcommand{\eat}[1]{}
\newcommand{\linearpartial}{\mbox{partial chain}}
\newcommand{\PSR}{\mbox{PSR}}
\newcommand{\PWPC}{\mbox{PW-PC}}
\newcommand{\PWPV}{\mbox{PW-PP}}
\newcommand{\PWDV}{\mbox{PW-DTB}}
\newcommand{\PWTV}{\mbox{PW-TTB}}
\newcommand{\PWBV}{\mbox{PW-BTB}}
\newcommand{\PTIME}{\mbox{\sc P}}
\newcommand{\NP}{\mbox{\sc NP}}
\newcommand{\maxpartial}[1]{s^{\text{max}}(#1, \boldsymbol{P}, \boldsymbol{Q})}
\newcommand{\Ptime}{\mbox{P}}
\title{The Complexity of Possible Winners on Partial Chains}
\author[1]{Vishal Chakraborty}
\author[1,2]{ Phokion G. Kolaitis}
\affil[1]{\footnotesize Department of Computer Science and Engineering, University of California, Santa Cruz}
\affil[2]{\footnotesize IBM Research - Almaden}
\date{February 26, 2020}
\begin{document}
\maketitle
\begin{abstract}
 The {\sc Possible Winner} ({\sc PW}) problem, a fundamental algorithmic problem in computational social choice, concerns elections where voters express only partial preferences between candidates. Via a sequence of investigations, a complete classification of the complexity of the PW problem was established for all pure positional scoring rules: the PW problem is in \PTIME~for the plurality and veto rules, and \NP -complete for all other such rules. More recently, the PW problem was studied on classes of restricted partial orders that arise in natural settings, such as partitioned partial orders  and truncated partial orders; in particular, it was shown that there are rules for which the PW problem drops from \NP -complete to \PTIME~on such restricted partial orders. Here, we investigate the PW problem on partial chains, i.e., partial orders that are a total order on a subset of their domains. Such orders arise naturally in a variety of settings, including rankings of movies or restaurants. We classify the complexity of the PW problem on partial chains by establishing that, perhaps surprisingly, this restriction does not change the complexity of the problem, namely, the PW problem is \NP -complete for all pure positional scoring rules other than the plurality and veto rules.  As a byproduct, we obtain a new and more principled proof of the complexity of the PW problem on arbitrary partial orders.
\end{abstract}


\section{Introduction}
Determining the winners in an election under various voting rules has been a mainstream topic of research in computational social choice. Ideally, each voter has a clear ranking among the candidates, from the most preferred one to the least preferred one. In reality, however, a voter may have only limited information about the candidates, which translates to the voter providing only a partial order among the candidates that reflects the voter's incomplete preferences (see the survey \cite{DBLP:reference/choice/BoutilierR16}). This state of affairs motivated \cite{konczak2005voting} to introduce the notion of \emph{possible winners} and \emph{necessary winners}, where a candidate is a possible (necessary) winner if the candidate is a winner in at least one (respectively, in all) sets of linear orders that extend the set of partial orders provided by the voters.

There has been an extensive study of the complexity of the associated decision problems {\sc Possible Winner} ({\sc PW}) and {\sc Necessary Winner} ({\sc NW}) with respect to a variety of voting rules.  Through a series of investigations  \cite{konczak2005voting,DBLP:journals/jair/XiaC11,DBLP:journals/jcss/BetzlerD10,DBLP:journals/ipl/BaumeisterR12}, the complexity of these problems has been  classified for all \emph{pure positional scoring rules} (see Section \ref{sec:prelim} for the precise definitions). Specifically,  {\sc NW} is in {\sc P} w.r.t.\ every pure positional scoring rule (where {\sc P} is the class of all decision problems solvable in polynomial time), while   {\sc PW} is in {\sc P} w.r.t.\ the plurality rule and the veto rule, but it is \NP-complete w.r.t.\ all other such rules. 

 More recently, the PW problem was studied on classes of restricted partial orders that arise in natural settings. For example, the preferences of the voters may be provided as \emph{top-truncated} partial orders, that is, partial orders in which each voter linearly orders  some top candidates, but  expresses no preference on  the rest. At the other end, we may have \emph{bottom-truncated} partial orders, where each voter linearly orders some bottom candidates (i.e., ``anybody but" candidates), but expresses no preference on the rest.
We may also have \emph{doubly-truncated} partial orders, where each voter linearly orders some top and bottom candidates, but expresses no preference for the ones  in the middle. In \cite{betzler2011unweighted,davies2011complexity,baumeister2012campaigns}, the complexity of the {\sc PW}~problem on such truncated partial orders was investigated. While no complete classification was obtained, it was shown that there are pure positional scoring rules, such as the $2$-approval rule, for which the complexity of {\sc PW} drops from \NP-complete to {\sc P}~on doubly-truncated partial orders. 

A partial order is \emph{partitioned} if its  elements can be partitioned into disjoint sets with a linear order between the disjoint sets, but no preference between elements in each set. 
In the
machine learning community, such partial orders were shown to be common in many real-life
datasets; furthermore, they have been used for learning statistical models on full and partial rankings \cite{lebanon2008non,lu2014effective,huang2012riffled}. Clearly, doubly-truncated partial orders are a special case of partitioned partial orders. In  \cite{DBLP:conf/atal/Kenig19},  the complexity of the {\sc PW} problem on partitioned partial orders was investigated and a nearly complete classification was obtained for positional scoring rules. In particular, it was shown that, for all $2$-valued rules (which contain $2$-approval as a special case) and also for the rule with scoring vectors of the form $(2,1,\ldots,1,0)$,  the  complexity of {\sc PW} on partitioned partial orders drops from \NP-complete to {\sc P}.

\paragraph{Summary of results}
In this paper, we  investigate  the  {\sc PW}  problem  on  \emph{partial chains}, i.e., partial orders that are a total order on a subset of their domains.  Such orders arise naturally in elections in which the number of candidates is large and, as a result, each voter can rank only a subset of the candidates. For example, consider the movies released in 2019.  Most viewers have seen only a subset of these movies and so they can only rank the movies they have seen. A similar state of affairs holds for songs, books, restaurants, and so on.
Partial chains are the most fitting model for this type of scenario. Indeed, it might be the case that a voter will like a movie they have not seen so far more than any of the the movies they have already seen (or less than any of the the movies they have already seen). This state of affairs can be modelled by partial chains, but not by partitioned, doubly-truncated, top-truncated, or bottom-truncated partial orders. 

We obtain a complete classification of the complexity of  the {\PW} problem on partial chains by establishing that this restriction does not change the complexity of the problem, namely, {\PW} is \NP-complete for all pure positional scoring rules other than the plurality rule and the veto rule.  This result should be contrasted with the aforementioned results about the drop in complexity of {\sc PW} on doubly-truncated partial orders and on partitioned partial orders. Our result also yields a new, self-contained proof of the classification of the complexity of {\sc PW} on arbitrary partial orders. Moreover, unlike the proof of the original classification theorem, our proof uses reductions from a single \NP-complete problem, namely, the {\sc 3-Dimensional Matching} problem.
  
Finally, we obtain new results about the complexity of the {\sc PW} problem on doubly-truncated partial orders by establishing that this problem is \NP-complete for a variety of pure positional scoring rules that were not covered by the earlier work on this problem. These rules include a broad group of both $p$-valued rules as well as unbounded rules.

\section{Preliminaries and Earlier Work} \label{sec:prelim}
\paragraph{Voting profiles}
A \emph{(strict) partial order} on a set $C$ is a binary relation $\succ$ on $C$ that is irreflexive (i.e., $a \not \succ a$, for every $a\in C$) and transitive (i.e., $a \succ b$ and $b \succ c$ imply $a \succ c$, for all $a,b,c \in C$. A \emph{total order} on $C$ is a partial order $\succ$ on $C$ such that for all $a,b\in C$, we have $a= b$ or $a \succ b$ or $ b \succ a$. 

Let $C = \{c_1, \ldots, c_m\}$ be a set of \emph{candidates} and let $V = \{v_1,\ldots,v_n\}$ be a set of voters.
A \emph{(complete) voting profile} is a tuple ${\bf T}=(T_1,\ldots,T_n)$ of total orders on elements of $C$, where each $T_l$ represents the ranking (preference) of voter $v_l$ on the candidates in $C$.
 Similarly, a \emph{partial voting profile} is a tuple
${\bf P}=(P_1,\ldots,P_n)$ of partial orders on $C$, where each $P_l$ represents the partial preferences of voter $v_l$ on the candidates in $C$.   
A \emph{completion} of a partial voting profile
${\bf P}= (P_1,\ldots,P_n)$ is a complete voting profile ${\bf T}=
(T_1,\ldots,T_n)$ such that each $T_l$ is a completion of the partial
order $P_l$, i.e., $T_l$ is a total order that extends $P_l$. Note that a  partial voting profile may have exponentially many completions.

\paragraph{Voting rules} We focus on \emph{positional scoring rules}, a widely studied class of voting rules.  A positional scoring rule $r$ on a set of $m$
candidates is specified by a scoring vector ${\bf s}=(s_1,\ldots,s_m)$
of non-negative integers, called the \emph{score values}, such that
$s_1\geq s_2\geq \ldots \geq s_m$ and $s_1 > s_m$.
Suppose that ${\bf
  T}=(T_1,\ldots,T_n)$ is a total voting profile. The score $s(T_l,c)$
of a candidate $c$ on $T_l$ is the score value $s_k$ where $k$ is the
position of candidate $c$ in $T_l$.  The \emph{score} of $c$ under the positional scoring rule $r$ on the total profile ${\bf T}$ is the  sum $\sum_{l=1}^ns(T_l,c)$. A candidate $c$ is a \emph{winner} if $c$'s score is greater than or equal to the scores of all other candidates; similarly, $c$ is a \emph{unique winner} if $c$'s score is greater than the scores of all other candidates. The set of all winners is denoted by
 $\mbox{W}(r, {\bf T})$.

We consider positional scoring rules that are defined for
every number $m$ of candidates. Thus, a \emph{positional scoring rule}
is an infinite sequence $\boldsymbol{s}_1, \boldsymbol{s}_2, \ldots, \boldsymbol{s}_m,
\ldots$ of scoring vectors such that each $\boldsymbol{s}_m$ is a scoring
vector of length $m$. Alternatively, a positional scoring rule is a
function $r$ that takes as argument a pair $(m,s)$ of positive
integers with $s\leq m$ and returns as value a non-negative integer
$r(m,s)$ such that $r(m,1) \geq r(m,2) \ldots \geq r(m,m)$. We assume that the function $r$ is computable in time polynomial in $m$, hence
 the winners can be computed in polynomial time.
Such a rule is \emph{pure} if the scoring vector ${\bf s}_{m+1}$ of
length $(m+1)$ is obtained from the scoring vector $\boldsymbol{s}_m$ of
length $m$ by inserting a score value in some position of 
$\boldsymbol{s}_ m$, provided that the non-increasing order of score values is
maintained. For every  scoring rule $\boldsymbol{s}_m$, multiplying all score  values by the same value, and adding the same constant to all  score values  does not change the winners; thus,   we assume that the $s_1, \hdots, s_m$ are co-prime and that there exists a $k$ such that $s_j = 0$ for all $j > k$. Such a $\boldsymbol{s}_m$ is called a \emph{normalised} scoring vector. Note that this is not a restriction. 
The  plurality rule $(1,0,\ldots, 0)$, the veto rule
$(1,\ldots,1,0)$, the $t$-approval rule $(\underbrace{1,\ldots, 1}_t, 0,\ldots, 0)$ with a fixed $t\geq 2$, for $m >2$, and the Borda count $(m-1,m-2, \ldots, 1, 0)$  are  prominent pure positional scoring rules.

\paragraph{Necessary and possible winners} 
Let $r$ be a voting rule and ${\bf P}$ a partial voting profile. 
The following notions 
were introduced by Konczak and
Lang~\cite{konczak2005voting}.

\begin{itemize}

\item 
The set $\PW(r,{\bf P})$ of the \emph{possible winners} w.r.t.\
 $r$ and $\bf P$ is the union of the sets $\text{W}(r,{\bf T})$, where
$\bf T$ varies over all completions of $\bf P$.  Thus, a
candidate $c$ is a \emph{possible winner} w.r.t.\ $r$ and $\bf
P$, if $c$ is  in the set $\text{W}(r,{\bf T})$ of winners, for at least one completion
$\bf T$ of $\bf P$.

The {\sc Possible Winner problem} ({\sc PW}) w.r.t.\  $r$ asks: given a set of candidates $C$, a partial profile $\bf P$, and a distinguished candidate $c \in C$, is $c \in  \PW(r,{\bf P})$?

\item 
The
set $\NW(r,{\bf P})$ of the \emph{necessary winners} w.r.t.\
$r$ and $\bf P$ is the intersection of the sets $\text{W}(r,{\bf T})$, where
$\bf T$ varies over all completions of $\bf P$. Thus, a
candidate $c$ is a \emph{necessary winner} w.r.t.\ $r$ and
$P$, if $c$ is  in the set  $\text{W}(r,{\bf T})$ of winners, for every completion $\bf
T$ of $\bf P$.

The {\sc Necessary Winner problem} ({\sc NW}) w.r.t.\  $r$ asks: given a set of candidates $C$, a partial profile $\bf P$, and a distinguished candidate $c \in C$, is $c \in  \NW(r,{\bf P})$?

\end{itemize}

The notions of \emph{necessary unique winners} and \emph{possible unique winners} are defined in an analogous manner.

Through the initial investigation by
\cite{konczak2005voting}~and subsequent investigations by~\cite{DBLP:journals/jair/XiaC11}, 
\cite{DBLP:journals/jcss/BetzlerD10}, and~\cite{DBLP:journals/ipl/BaumeisterR12}, the following
  classification of the complexity of the necessary and
the possible winners  for \emph{all} pure positional scoring rules was
established.

\begin{theorem} \label{class-thm} {\rm[Classification Theorem]}
The following hold.
\begin{itemize}
\item 
%
For every pure positional scoring rule $r$, the necessary winner problem \emph{\NW} w.r.t.\ $r$ is in \emph{\PTIME}.
\item The possible winner problem
\emph{\PW}  w.r.t.\ the plurality rule and the veto rule is in \emph{\PTIME}.   For
  all other pure positional scoring rules $r$, this problem  is \emph{\NP}-complete. 
\end{itemize}
Furthermore, the same classification holds for necessary unique winners and possible unique winners.
\end{theorem}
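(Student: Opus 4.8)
The plan is to treat the three constituent claims separately, since they rely on quite different ideas. For the necessary-winner part, I would show that \NW~lies in \PTIME~by reducing the question ``is $c$ a necessary winner?'' to a collection of pairwise tests, one for each rival candidate $d$: namely, whether $c$ ties or beats $d$ in \emph{every} completion of ${\bf P}$. These tests suffice because $c$ fails to be a necessary winner exactly when some completion and some rival witness $c$ losing, and the two universal quantifiers (over completions and over rivals) commute. The key observation is that the quantity $\max_{\bf T}\sum_{l=1}^{n}\bigl(s(T_l,d)-s(T_l,c)\bigr)$, where ${\bf T}=(T_1,\ldots,T_n)$ ranges over completions of ${\bf P}$, equals $\sum_{l=1}^{n}\max_{T_l}\bigl(s(T_l,d)-s(T_l,c)\bigr)$, since the voters complete their partial orders independently. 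Within a single voter, maximizing $d$'s score minus $c$'s score over the linear extensions of $P_l$ is a local optimization that depends only on the ranks assigned to $c$ and $d$, and I would solve it in polynomial time by testing the extendability of $P_l$ under each feasible pair of ranks for $(d,c)$. Then $c$ is a necessary winner iff this sum is $\le 0$ for every rival $d$.

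For the tractable cases of \PW, I would give direct algorithms for plurality and veto. Under plurality only the top position scores, so I would first place $c$ first in every voter that permits it, fixing $c$'s score at its maximum, and then decide, via a bipartite matching / maximum-flow formulation, how to fill the first positions of the remaining voters (each of which must receive a candidate with nothing ranked above it) so that no rival's first-place count exceeds $c$'s score. Veto is dual: only the last position fails to score, so I would keep $c$ out of the last position wherever possible and force rivals into last positions using the same flow machinery. In both cases feasibility becomes a flow problem, yielding membership in \PTIME.

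The substantive part is the \NP-completeness of \PW~for every pure positional scoring rule other than plurality and veto. Membership in \NP~is immediate: guess a completion ${\bf T}$ of ${\bf P}$ and verify in polynomial time that $c\in\mbox{W}(r,{\bf T})$, using that $r$ is polynomial-time computable. The hard part will be \NP-hardness, and the main obstacle is uniformity, since the reduction must cover the full range of admissible scoring vectors, whose shapes---the number of distinct score values and the positions of the ``jumps'' in $\boldsymbol{s}_m$---vary enormously. I would reduce from a fixed \NP-complete problem and build gadget voters whose partial orders are designed so that the only freedom in completing them corresponds exactly to a choice in the source instance, with the combinatorial constraints encoded as score differences; the positions at which $\boldsymbol{s}_m$ strictly decreases (and which, for any rule other than plurality and veto, cannot all sit at the very top or the very bottom) furnish the ``handle'' that drives the encoding. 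I expect that a single gadget template will not suffice, and that the argument must be organized into a few cases according to the structure of the scoring vector---for instance, rules with many candidates but few distinct score values versus strictly decreasing rules such as Borda.

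Finally, for the unique-winner variants I would adapt each argument by replacing the non-strict score comparisons with strict ones. A standard device of adjoining a controlled number of padding candidates (or of perturbing scores by a bounded amount) turns a winner into a unique winner, and conversely, without disturbing the algorithms or the reductions, so the same classification is inherited.
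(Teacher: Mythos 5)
Your proposal is a correct outline for the two tractable halves of the theorem, but it does not actually prove the part that carries all the weight. First, for context: the paper itself does not prove Theorem~\ref{class-thm} directly --- it is assembled from prior literature --- but the paper's main result (Theorem~\ref{PWPC-hard-thm}) supplies a new, self-contained proof of the \NP-hardness half, by reducing the \emph{single} problem {\sc 3DM} to \PWPC\ (hence to \PW), after grouping the rules into $2$-valued, $p$-valued with $p\geq 3$, and unbounded, and after establishing a score-setting tool (Lemma~\ref{lemmaDM}) that builds, in polynomial time, a total profile realizing any prescribed score offsets for the rival candidates relative to the distinguished one. Your sketches for the other parts are essentially the standard ones from the cited literature: the \NW\ algorithm via exchanging the two universal quantifiers and maximizing the per-voter score difference $s(T_l,d)-s(T_l,c)$ independently across voters (this is sound, though you silently rely on the fact that feasibility of a given pair of ranks for $(d,c)$ in a linear extension of $P_l$ is decidable in polynomial time), and the flow/matching algorithms for plurality and veto.

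The genuine gap is the \NP-hardness claim for every pure positional scoring rule other than plurality and veto, which you yourself call the substantive part and then leave as a statement of intent: ``reduce from a fixed \NP-complete problem,'' ``build gadget voters,'' ``organize into a few cases according to the structure of the scoring vector.'' No source problem is fixed, no gadget is exhibited, no mechanism is given for pinning the rivals' scores (the role played in the paper by Lemma~\ref{lemmaDM}), and no correctness argument is sketched. The uniformity obstacle you correctly identify --- that the reduction must work for \emph{all} admissible shapes of $\boldsymbol{s}_m$ --- is exactly what requires real work: the paper overcomes it with structural facts about pure rules (Proposition~\ref{prop:p_val_len} guaranteeing a block of $3q$ equal score values in some polynomial-length vector for bounded rules, and Proposition~\ref{prop:ub_val_len} giving the block-versus-many-values dichotomy for unbounded rules), and then with three concrete reductions from {\sc 3DM} whose correctness hinges on carefully chosen score targets ($\lambda\pm\delta$-type constraints forcing a perfect matching). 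As written, your proposal establishes \NP-membership of \PW\ and the polynomial-time cases, but the hardness classification --- and therefore the theorem --- remains unproven.
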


The proof of the above classification is rather involved; also, it is not self-contained as it spans several papers. The  proofs of NP-hardness for various positional scoring rules use reductions from 
several different known \NP -complete problems, including  {\sc $3$-Dimensional Matching}, {\sc Exact $3$-Cover}, {\sc Hitting Set},  {\sc $3$-SAT}, and {\sc Multicoloured Cliques}.
\section{Complexity of PW on Partial Chains} \label{sec:results}
This section contains the main result of the paper. We begin by defining the concept of a partial chain.

\begin{definition}
A partial order on a set $C$ is  a \emph{\linearpartial} if it is a linear order on a non-empty subset $C'$ of $C$.
\end{definition}

Let $C = \{a, b, c, d, e \}$ be a set of candidates. Clearly, every total order on $C$ is  a partial chain. Two other examples of partial chains on $C$ are  $a \succ d \succ c$  and $d \succ a \succ c \succ b$.

\begin{definition}
We write \PWPC~to denote the restriction of the \PW \ problem to partial chains. More precisely, the \PWPC~problem asks: given a set of candidates $C$, a partial profile $\bf P$ in which every partial order $P_l$, $1\leq l \leq n$, is a partial chain, and a distinguished candidate $c \in C$, is $c \in  \PW(r,{\bf P})$?

\end{definition}

Since \PWPC~is a special case of \PW, Theorem \ref{class-thm} implies that if $r$ is the plurality rule or the veto rule, then the \PWPC~problem with respect to $r$ is in \PTIME. 
The main result of this paper asserts that these are the only tractable cases.

\begin{theorem}\label{PWPC-hard-thm}
Let $r$ be a pure positional scoring rule other than the plurality and the veto rules. Then the \emph{\PWPC}~problem with respect to $r$ is \emph{\NP}-complete.

\end{theorem}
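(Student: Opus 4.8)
The plan is to first dispatch membership in \NP{} and then concentrate on \NP-hardness, which I would prove by a reduction from \textsc{3-Dimensional Matching} (\textsc{3DM}). Membership is routine: a completion $\mathbf{T}$ of the input partial profile $\mathbf{P}$ has size polynomial in the input, and since $r$ is computable in polynomial time, one can check in polynomial time that the distinguished candidate $c$ is a winner of $\mathbf{T}$; hence a completion witnessing $c \in \PW(r,\mathbf{P})$ is a polynomial-size certificate. For hardness, given a \textsc{3DM} instance with disjoint sets $A,B,D$ of size $q$ and a triple set $S \subseteq A \times B \times D$, I would build a candidate set $C$ containing one candidate per element of $A \cup B \cup D$, the distinguished candidate $c$, and padding candidates, together with a partial profile $\mathbf{P}$ all of whose orders are partial chains, so that $c$ is a possible winner with respect to $r$ if and only if $S$ admits a perfect matching.

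The profile uses two kinds of voters. \emph{Filler} voters cast total orders, which are themselves partial chains, and serve only to fix precise baseline scores, driving $c$ to a target value $T$ and keeping every rival just below $T$. \emph{Selector} voters cast genuine partial chains that rank every candidate except a few, and the placement of the unranked candidates relative to a \emph{descent} of the scoring vector encodes which triples enter the matching. The engine is the following observation: because $r$ is pure and is neither plurality nor veto, I can choose a number $m$ of candidates, large enough to host the encoding, at which the normalized vector $\mathbf{s}_m$ is neither the plurality pattern nor the veto pattern, and hence has an exploitable descent $s_p > s_{p+1}$. If a selector voter ranks every candidate but one free candidate $f$, then in any completion $f$ is inserted into the ranked chain, and whether $f$ lands at or above position $p$ determines which of two designated candidates receives the larger score value, with the scores of the remaining candidates shifted in a controlled way. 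This yields a binary routing of score mass to one of two candidates, exactly the primitive needed to express the covering constraints of \textsc{3DM}, in which each element of $A \cup B \cup D$ must be covered by precisely one selected triple.

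I would organize the argument into a small number of cases according to the shape of $\mathbf{s}_m$: the two-valued $t$-approval rules with $2 \le t \le m-2$ on one hand, and rules with an additional descent or a third distinct score value on the other. In each case the exploited position $p$ and the number of free candidates per selector voter are chosen to fit the shape, while the source problem \textsc{3DM} and the overall template (filler voters setting $T$, selector voters routing score mass) stay fixed throughout, which is what gives the uniform, self-contained reduction promised in the introduction.

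The main obstacle I anticipate is that partial chains are harder to control than general partial orders: the unranked candidates of a selector voter float entirely freely, incomparable both to one another and to the ranked candidates, so I cannot simply assert that two chosen candidates occupy the top two positions in some order. A free candidate placed at an unintended position still perturbs many candidates' scores, and I must guarantee that no such stray completion ever lets $c$ or a rival exceed $T$ unless the placements genuinely encode a perfect matching. Overcoming this forces each selector voter to rank almost all candidates, requires enough padding candidates to absorb shifts occurring below the relevant boundary, and reduces to verifying a family of score inequalities showing that $c$ wins in some completion exactly when a valid matching exists. Carrying out this bookkeeping uniformly across all admissible rule shapes is where the real work of the proof lies.
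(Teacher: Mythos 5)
Your proposal follows essentially the same route as the paper: \NP-membership via a polynomial-size completion certificate, and \NP-hardness by reductions from {\sc 3DM} alone, using a two-part profile in which total ``filler'' votes calibrate exact baseline scores (this is the paper's Lemma~\ref{lemmaDM}) while partial chains missing only one or two candidates act as ``selectors'' whose completion position relative to a descent of the scoring vector routes score mass, with padding candidates absorbing the side effects. The paper organizes the case analysis as $2$-valued rules, $p$-valued rules with $p \geq 3$, and unbounded rules (the last requiring a dichotomy between a long block of repeated scores and many distinct scores), which is a finer split than yours but amounts to the same bookkeeping you explicitly defer, not a different method.
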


\begin{corollary} \label{PWPC-class-thm} {\rm[Classification Theorem for Partial Chains]}
\begin{enumerate}
\item If $r$ is plurality rule or the veto rule, then the \emph{\PWPC} problem with respect to $r$  is in \emph{\PTIME}.

\item For
  all other pure positional scoring rules $r$, the \emph{\PWPC} problem with respect to $r$ is \emph{\NP}-complete.
\end{enumerate}
\end{corollary}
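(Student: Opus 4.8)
The plan is to assemble the two parts of the classification from results already in hand, since each part is an immediate consequence of an earlier statement; the corollary is essentially a bookkeeping step that records the two complexity verdicts side by side.

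For part (1), I would invoke the fact that \PWPC~is by definition a restriction of \PW: every instance of \PWPC~(a candidate set $C$, a partial profile $\bf P$ in which each $P_l$ is a partial chain, and a distinguished candidate $c$) is in particular a legitimate instance of \PW. Hence any algorithm that decides \PW~with respect to a fixed rule $r$ also decides \PWPC~with respect to $r$ when its input happens to lie in this sub-class, and its running time is bounded by the same polynomial. Since Theorem \ref{class-thm} places \PW~with respect to the plurality rule and the veto rule in \PTIME, the same polynomial-time procedure settles \PWPC~with respect to these two rules, which gives part (1). The key observation is merely that shrinking the set of admissible inputs cannot increase the complexity, so the upper bound for \PW~transfers verbatim.

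For part (2), I would cite Theorem \ref{PWPC-hard-thm}, which asserts exactly that for every pure positional scoring rule $r$ other than plurality and veto, the \PWPC~problem with respect to $r$ is \NP-complete. For completeness it is worth recording the membership half explicitly: \PWPC~is in \NP~because, given a partial-chain profile, one can nondeterministically guess a completion---each partial chain extends to a total order describable in polynomial space---and then verify in polynomial time, using the assumed polynomial-time computability of $r$, that the distinguished candidate is a winner in the resulting total profile. The \NP-hardness half is the substantive content and is supplied directly by Theorem \ref{PWPC-hard-thm}.

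The corollary therefore presents no genuine obstacle of its own: both directions are repackagings of prior results, with all the real difficulty having been absorbed into Theorem \ref{PWPC-hard-thm} (the hardness reductions from \textsc{3-Dimensional Matching}) and into Theorem \ref{class-thm} (the tractability of \PW~for plurality and veto). The single point requiring any care is to note that restricting to partial chains cannot worsen membership in \PTIME~or in \NP, so that the relevant upper bounds for \PW~descend to \PWPC~without modification.
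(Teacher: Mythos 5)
Your proposal is correct and matches the paper's own reasoning exactly: part (1) is obtained by noting that \PWPC~is a restriction of \PW~and invoking Theorem \ref{class-thm} for plurality and veto, while part (2) is precisely Theorem \ref{PWPC-hard-thm}. Your added remark on \NP-membership via guessing a completion is a harmless explicit filling-in of what Theorem \ref{PWPC-hard-thm} already asserts, so there is nothing further to reconcile.
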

\subsection{Proof outline of Theorem \ref{PWPC-hard-thm}}

\paragraph{\NP-complete problem used}
As mentioned earlier, the \NP-completeness of \PW~for rules other than plurality and veto in Theorem \ref{class-thm} was established via reductions from a variety of well known \NP-complete problems. Furthermore, none of these reductions used partial chains in the \PW-instances constructed. Here, we will establish the \NP-hardness of \PWPC~for rules other than plurality and veto via reductions from a \emph{single} well known \NP-complete problem, namely,  the 
{\sc 3-Dimensional Matching (3DM)} Problem
(Problem [SP1]  in \cite{garey1979guide}). This problem asks:
given three disjoint  sets $\mathcal{X} = \{x_1, \hdots, x_q\},~ \mathcal{Y} = \{y_1, \hdots, y_q\},~ \mathcal{Z}= \{z_1, \hdots, z_q\}$ of the same size, and a set   $\mathscr{S} \subseteq \mathcal{X} \times \mathcal{Y} \times \mathcal{Z}$, 
is there a subset $\mathscr{S}' \subseteq \mathscr{S}$ 
such that $| \mathscr{S}'| = q$ and $\mathscr{S}'$ does not contain two different triples that agree in at least one of their coordinates?

\paragraph{Grouping of Pure Positional Scoring Rules} The
 NP-hardness of \PW~with respect to rules other than plurality and veto in Theorem \ref{class-thm} was established by considering either groups of rules with similar characteristics \cite{DBLP:journals/jair/XiaC11} or individual rules, e.g., the rule
 with scoring vectors of the form $(2,1,\ldots,1,0)$
 \cite{DBLP:journals/ipl/BaumeisterR12}. Here, 
 we will establish the NP-hardness of \PWPC~with respect to rules other than plurality and veto  by grouping the pure positional scoring rules into two different groups, namely, \emph{bounded} rules
 and \emph{unbounded} rules.

\eat{
We show \NP-completeness using reductions from  The problem is defined as follows.

\begin{definition}
 Given disjoint  sets $\mathcal{X} = \{x_1, \hdots, x_q\}, \mathcal{Y} = \{y_1, \hdots, y_q\}, \text{ and } \mathcal{Z}= \{z_1, \hdots, z_q\}$, a family of triples, $\mathscr{S} \subseteq \mathcal{X} \times \mathcal{Y} \times \mathcal{Z}  = \{S_1, S_2, S_3, \hdots, S_t\}$, it asks whether there is a subset $\mathscr{S}' \subseteq \mathcal{S}$ 
such that $| \mathscr{S}'| = q$ and no co-ordinates of the triples in $\mathscr{S}'$ are identical.
\end{definition}

Define for $x \in \mathcal{X} \cup \mathcal{Y} \cup \mathcal{Z}$, $n_x \in \mathbb{N} = |\{S_i \mid x \in S_i \} |.$ This problem, even with the restriction that $\forall x \in \mathcal{X} \cup \mathcal{Y} \cup \mathcal{Z}, n_x \in \{2,3\}$ is know to be is \NP-complete \footnote{A more general version of the problem where the cardinality of $\mathcal{X}$, $\mathcal{Y}$, and $\mathcal{Z}$ are not equal was shown to be \NP-complete in \cite{karp1972reducibility}} [problem [SP1] page 221 in \cite{garey1979guide}].

In the reductions to prove \NP-completeness, we exploit specific properties of \PSR s. These properties give rise to a natural categorisation of \PSR s. First, we present the following definition.
}

\begin{definition} \label{bdd-value-defn}
Let $r$ be a pure positional scoring rule.
\begin{itemize}
    \item We say that $r$ is
    \emph{$p$-valued}, where $p$ is a positive integer greater than $1$,  if there exists a positive integer $n_0$ such that for all $m \geq n_0$, the scoring vector $\boldsymbol{s}_m$ of $r$ contains exactly $p$ distinct values. 
    \item We say that $r$ is \emph{bounded} if $r$ is $p$-valued, for some $p >1$; otherwise, $r$ is \emph{unbounded}.
\end{itemize}
\end{definition}

Clearly, the plurality rule, the veto rule, and the $t$-approval rule with fixed $t\geq 2$, are $2$-valued rules. For a different example of a $2$-valued rule, consider the rule with scoring vectors 
${\bf s}_{2m}=(\underbrace{1,\ldots,1}_m,\underbrace{0,\ldots,0}_m)$ and 
${\bf s}_{2m+1}=(\underbrace{1,\ldots,1}_{m+1},\underbrace{0,\ldots,0}_m)$, where $m\geq 1$. Furthermore,
the rule with scoring vectors of the form $(2,1,\dots,1,0)$ is $3$-valued, while  the Borda count $(m-1,m-2,\ldots,0)$ is an unbounded rule.
Note also that, unlike the Borda count, an unbounded scoring rule may have score values that are not decreasing at the same rate or may have arbitrarily long repeating score values. 

\paragraph{Main Steps}
The technical cornerstones of the proof of Theorem \ref{PWPC-hard-thm}  are three polynomial-time reductions, each of which reduces the {\sc 3DM} problem to the \PWPC~problem with respect to the following types of pure positional scoring rules: 
\begin{itemize}
    \item $2$-approval, which is then extended to all $2$-valued rules other than plurality and veto.
    \item $3$-valued rules, which is then extended to all $p$-valued rules with $p> 3$.
    \item unbounded scoring rules.
\end{itemize}

In each reduction, the partial profile we construct from an arbitrary {\sc 3DM} instance consists of two parts. The first part is a set of \linearpartial s  (which are not total orders). These encode the given instance of the {\sc 3DM} problem. It is worth pointing out that these partial chains have \emph{at most two}  candidates ``missing".
The high-level idea of the construction is as follows. In order for candidate $c$ to win in some  completion of the partial chains, some  other candidates have to lose points. Suppose $c'$ is one such candidate. To lose points, $c'$ has to be in a higher position. Whenever $c'$ is in a higher position, a few other candidates are ``pushed up" to lower positions, and they gain points. The score of these candidates are set in such a way that they can be ``pushed up" only once. We set the specific scores for every candidate using the second part of the partial profile, which consists of a total profile. These votes, which fulfil certain properties, can be constructed in time polynomial in the number of candidates due to a result similar to the one in \cite[Lemma 4.2]{baumeister2011computational}. A variant of this result has been used in the literature \cite{dey2016exact,dey2016complexity,dey2016kernelization,dey2018complexity,DBLP:conf/atal/Kenig19}. To make our work self-contained, we state and prove the following variant of the original result and use it in all our reductions from {\sc 3DM} to \PWPC.
\begin{lemma} \label{lemmaDM}
Given a set
$C = \{ c_1 , \hdots, c_{m} \}$ of candidates, a singleton
$D = \{d\}$, 
a normalised scoring vector $\mathbf{s}$ of length $m+1$,
and for every $c_i$, a list of integers $\eta_{i,1}, \hdots, \eta_{i,m}$ with $\sum\limits_{j=1}^{m} | \eta_{i,j} | \leq O(m^4),$ one can construct, in time polynomial in $m$, a total voting profile $\boldsymbol{Q}$ and a  $\lambda_{\boldsymbol{Q}} \in \mathbb{N}$ such that, for $1 \leq i \leq m,$ the score  $s(\boldsymbol{Q}, c_i) = \lambda_{\boldsymbol{Q}} + R_i$ where $R_i = \sum\limits_{j=1}^{m} \eta_{i,j} (s_j - s_{j+1})$ 
and $s(\boldsymbol{Q}, d) < \lambda_{\boldsymbol{Q}}$. 
In particular, the number of votes in the profile $\boldsymbol{Q}$ is polynomial in $m$.
\eat{ 
$\cdot \sum\limits_{i=1}^{m} | R_i |.$ 
}
\end{lemma}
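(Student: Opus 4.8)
The plan is to build $\boldsymbol{Q}$ in two layers: a uniform \emph{base} profile that assigns one common score to every $c_i$ together with a strictly smaller score to $d$, followed by a family of \emph{adjustment} votes that inject the prescribed residuals one consecutive gap at a time. The single primitive I need is the effect of a pair of ``mirror'' votes. Fix $i$ and a position $j$, and consider two votes that are identical except that in one vote $c_i$ sits in position $j$ and $d$ in position $j+1$, while in the other the two are interchanged; such a pair contributes $s_j + s_{j+1}$ to each of $c_i$ and $d$ and leaves the pair perfectly balanced. Replacing the second vote of the pair by a second copy of the first then shifts \emph{exactly} $(s_j - s_{j+1})$ from $d$ to $c_i$ and changes no other candidate's score. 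Since $s_j - s_{j+1}\ge 0$ and the target is written precisely in this gap basis as $R_i = \sum_{j}\eta_{i,j}(s_j - s_{j+1})$, each unit of $\eta_{i,j}$ corresponds to flipping one balanced pair at positions $(j,j+1)$ (and flipping in the opposite direction when $\eta_{i,j}<0$).

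For the base layer I would use several identical rounds, each round being the $m$ cyclic shifts of a fixed order of $c_1,\ldots,c_m$ in the top $m$ positions with $d$ frozen in position $m+1$. Over one round each $c_i$ occupies each of the positions $1,\ldots,m$ exactly once, so all $c_i$ receive the common value $\sum_{j=1}^m s_j$ while $d$ receives $m\,s_{m+1}$; because $s_1 > s_{m+1}$, each additional round widens the gap between the common $c_i$-score and $d$'s score by $\sum_{j=1}^m (s_j - s_{m+1}) > 0$. This both pins down the baseline $\lambda_{\boldsymbol{Q}}$ and, by appending enough rounds at the end, guarantees $s(\boldsymbol{Q}, d) < \lambda_{\boldsymbol{Q}}$. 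For the adjustment layer I would, for each pair $(i,j)$, flip $|\eta_{i,j}|$ balanced pairs as above, so that $c_i$ accumulates exactly $\eta_{i,j}(s_j - s_{j+1})$ and, summing over $j$, exactly $R_i$. The number of flipped pairs is $\sum_{i,j}|\eta_{i,j}| \le m\cdot O(m^4)$, and even after expanding each into a balanced block (see below) the total stays polynomial in $m$; this is precisely where the hypothesis $\sum_{j}|\eta_{i,j}|\le O(m^4)$ is used.

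The main obstacle, and the step I would treat most carefully, is guaranteeing that injecting $R_i$ into $c_i$ leaves every other candidate $c_{i'}$ at the \emph{single} common value $\lambda_{\boldsymbol{Q}}$. A gadget vote assigns positions to all $m+1$ candidates, so the $m-1$ candidates other than $c_i$ and $d$ also receive scores in it, and a fixed filling of the remaining positions would give different candidates different totals. To neutralise these side effects I would fill the positions outside $\{j,j+1\}$ according to a balanced, Latin-square schedule, so that across the whole family of balanced pairs each candidate $c_{i'}$ spends an equal amount of time in each position and hence accrues an identical total, which is absorbed into $\lambda_{\boldsymbol{Q}}$. Verifying this balancing, and the accompanying bookkeeping that fixes the exact value of $\lambda_{\boldsymbol{Q}}$ and confirms both $s(\boldsymbol{Q}, c_i) = \lambda_{\boldsymbol{Q}} + R_i$ and $s(\boldsymbol{Q}, d) < \lambda_{\boldsymbol{Q}}$, is routine but is where essentially all of the care resides.
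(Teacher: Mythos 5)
Your overall strategy---transferring $\delta_j = s_j - s_{j+1}$ between $d$ and $c_i$ one gap at a time, on top of a uniform cyclic base profile that keeps $d$ strictly below the common score---is the same idea as the paper's proof, but the step you defer to ``routine'' bookkeeping is exactly where your construction breaks, and the invariant you propose to enforce there is the wrong one. Decompose the score that each candidate $c_{i'}$ receives from the adjustment layer as $T_{i'} + B_{i'}$, where $T_{i'}$ comes from votes in which $c_{i'}$ is the target and $B_{i'}$ from votes in which it is a bystander. In your gadget, each positive unit of $\eta_{i,j}$ contributes two copies of a vote with $c_i$ in position $j$ (so $2s_j$ to $c_i$), and each negative unit contributes $2s_{j+1}$; hence
\begin{equation*}
T_{i'} \;=\; \sum_{j}|\eta_{i',j}|\,(s_j+s_{j+1}) \;+\; \sum_{j}\eta_{i',j}\,(s_j-s_{j+1}) \;=\; A_{i'} + R_{i'},
\qquad\text{where } A_{i'} := \sum_{j}|\eta_{i',j}|\,(s_j+s_{j+1}).
\end{equation*}
The lemma needs $T_{i'}+B_{i'}$ (plus the base-layer constant) to equal $\lambda_{\boldsymbol{Q}}+R_{i'}$ with a \emph{single} $\lambda_{\boldsymbol{Q}}$, i.e.\ it needs $B_{i'}+A_{i'}$ to be the same for every $i'$. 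Your Latin-square schedule instead aims to make the $B_{i'}$ equal to one another, which is not the same thing: the baselines $A_{i'}$ are different integer combinations of score values for different candidates (they depend on which gaps $j$ carry the nonzero $\eta_{i',j}$), so equal bystander totals leave each candidate's score off from $\lambda_{\boldsymbol{Q}}+R_{i'}$ by the non-constant amount $A_{i'}$. Moreover, even the equal-$B_{i'}$ goal is not clearly attainable exactly, since each candidate is a bystander in a different number of votes (it misses its own $\sum_j 2|\eta_{i',j}|$ target votes) whose free positions vary; and achieving the correct goal $B_{i'} = \mathrm{const} - A_{i'}$ amounts to constructing votes realizing prescribed candidate-dependent score offsets, which is the very problem the lemma is meant to solve, so the argument as structured is circular. (The same obstruction persists if one keeps the balanced pairs in the profile and ``flips'' some of them: the unflipped profile must already give all $c_{i'}$ a common score, which requires the same compensation.)

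The paper sidesteps this with a different unit gadget: for one unit of $\eta_{i,j}>0$ it adds an entire block of $m$ votes in which $d$ is pinned at position $j$ and the other $m$ candidates cycle so that each occupies every remaining position exactly once, and then swaps $d$ with $c_i$ in the single vote where $c_i$ sits at position $j+1$. Within such a block, every candidate other than $d$ and $c_i$ automatically receives the identical score $\lambda_{\beta(d,j)}=\sum_k s_k - s_j$, while $c_i$ receives $\lambda_{\beta(d,j)}+\delta_j$; these block baselines accumulate into $\lambda_{\boldsymbol{Q}}$ no matter which pairs $(i,j)$ they come from, and extra blocks with $d$ in the last position keep $d$'s total below $\lambda_{\boldsymbol{Q}}$. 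In short: if you perform your Latin-square balancing \emph{within} each unit gadget (turning your two-vote gadget into this $m$-vote block) rather than globally across all gadgets, your proof becomes the paper's; as written, it has a genuine gap.
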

\begin{proof}
Before proving the lemma, we introduce some notation which will be useful for the proof. Let, for $1 \leq j \leq m$, the value $\delta_{j} = s_j - s_{j+1}$.
 For $1 \leq j \leq m+1$, let $\beta(d, j)$ be a block of $|C| = m$ votes where
\begin{itemize}
    \item in each vote in the block, $d$ is in position $j$.
    \item in the $m$ votes, all the candidates besides $d$ are in each position exactly once.
\end{itemize}
 For a given $j$, note that there are many ways to construct such a block $\beta(d, j)$. Given $j$, we fix a block $\beta(d, j)$. The following is an example of the block $\beta(d, 1)$.
\begin{align*}
    d &>& c_1       &>& c_2 &>& \hdots &>& c_{m-1} &>& c_m \\
    d &>& c_2       &>& c_3 &>& \hdots &>& c_{m} &>& c_{1} \\
     d &>& c_3       &>& c_4 &>& \hdots &>& c_{1} &>& c_{2} \\
     \vdots & & \vdots       & & \vdots & & \vdots & & \vdots & & \vdots \\
    d &>& c_{m}     &>& c_1 &>& \hdots &>& c_{m-2} &>& c_{m-1} \\
\end{align*}
When we talk about the score of a candidate in a block, we refer to the score of the candidate in a profile containing only the votes of that block. Observe that no matter how a block is constructed, the score of all the candidates is always the same. The score of the candidates in the block $\beta(d, j)$ are as follows.
\begin{itemize}
    \item $s(\beta(d,j), d) = m s_j$
    
    \item For all $ x \in C \setminus \{d\}, \text{ we have }s(\beta(d,j), x) = - s_j +  \sum\limits_{k=1}^{m+1} s_k = \lambda_{\beta(d,j)}$
\end{itemize}
If candidate $d$ in position $j$ and candidate $c$ is in position $j+1$ are swapped, the score of $c$ \textit{increases} by $\delta_j,$ i.e., the score of $c$ is $ \lambda_{\beta(d,j)} + \delta_j.$ The score of $d$ decreases and the scores of all the candidates in $C \setminus \{d,c\}$ remain unchanged. We will use this idea construct the total profile $\boldsymbol{Q}.$
\\
Let $\boldsymbol{Q}$ be an empty profile and $\lambda_{\boldsymbol{Q}} = 0.$ We will construct the total profile $\boldsymbol{Q}$ incrementally. 
For each $c_i \in C$, for each $\eta_{i,j}$, where $1 \leq j \leq m$, and $\eta_{i,j} \neq 0$, we add votes to $\boldsymbol{Q}$ in the following two steps.
\begin{enumerate}
        \item This consists of two cases. 
        \begin{description}
            \item[Case I.] $\eta_{i,j} > 0$\\
            We take the block $\beta(d,j)$. 
            \begin{itemize}
                \item $s(\beta(d,j), d) = m s_j$
                \item For all $ x \in C \setminus \{d\}, \text{ we have }s(\beta(d,j), x) = \lambda_{\beta(d,j)} = - s_j +  \sum\limits_{k=1}^{m+1} s_k$
            \end{itemize}
            Consider the vote where $c_i$ is in position $j+1$. By construction, in every block, such a vote exists. Swap the positions of candidate $d$ and $c_i$. Let this block of votes be $\beta^+(d,j)$. 
            \begin{itemize}
                \item $s(\beta^+(d,j), d) = m s_j - s_j + s_{j+1} = m s_j - \delta_j$
                \item $s(\beta^+(d,j), c_i) = \lambda_{\beta(d,j)} - s_{j+1} + s_j = \lambda_{\beta(d,j)} + \delta_{j}$
        
                \item For all $x \in C \setminus \{c_i, d\},$ we have 
                $s(\beta^+(d,j), x) = \lambda_{\beta(d,j)}.$
            \end{itemize}
            
            We add $\eta_{i,j}$ copies of the votes in the block $\beta^+(d,j)$ to the profile $\boldsymbol{Q}$. We add $\eta_{i,j} \lambda_{\beta(d,j)}$ to $\lambda_{\boldsymbol{Q}}$.
                
            \item[Case II.] $\eta_{i,j} < 0$\\
             We take the block $\beta(d,j+1)$. 
             \begin{itemize}
                \item $s(\beta(d,j+1), d) = m s_{j+1}$
                \item For all $ x \in C \setminus \{d\}, \text{ we have }s(\beta(d,j+1), x) = \lambda_{\beta(d,j+1)} = - s_{j+1} +  \sum\limits_{k=1}^{m+1} s_k$
            \end{itemize}
            Consider the vote where $c_i$ is in position $j$. By construction, in every block, such a vote exists. Swap the positions of candidate $d$ and $c_i$. Let this block of votes be $\beta^-(d,j)$. 
            \begin{itemize}
                \item $s(\beta^-(d,j+1), d) = m s_{j+1} + s_j - s_{j+1} = m s_j + \delta_j$
            
                \item $s(\beta^-(d,j+1), c_i) = \lambda_{\beta(d,j+1)} + s_{j+1} - s_j = \lambda_{\beta(d,j+1)} - \delta_{j}$
            
                \item For all $ x \in C \setminus \{c_i, d\},$ we have
                    $s(\beta^-(d,j+1), x) = \lambda_{\beta(d,j+1)}.$
            \end{itemize}
            We add $\eta_{i,j}$ copies of the votes in the block $\beta^-(d,j)$ to the profile $\boldsymbol{Q}$. We add $\eta_{i,j} \lambda_{\beta(d,j+1)}$ to $\lambda_{\boldsymbol{Q}}$.
    \end{description}
    The number of votes produced in this step is $\eta_{i,j} m.$ The time to construct these votes is bounded by $O(\eta_{i,j}, m).$
        \item Observe that the score of $d$ in the block of votes obtained from either of the above cases can be \emph{more} than that of some $c_w \in C$. In particular, the difference between the scores of $d$ and $c_w$ is always strictly less than $\eta_{i,j} m s_1$. Consider the block $\beta(d,m+1)$.The score of $d$ in this block is $0.$  The score of all the candidates in $C$, including $c_w$, in this block is $\lambda_{\beta(d,m+1)} = \left(\sum\limits_{k=1}^{m} s_k \right).$
        To ensure that $c_w$ is never defeated by $d$, we add $\eta_{i,j} m$ copies of the votes in the block $\beta(d,m+1)$ to the profile $\boldsymbol{Q}$. We add $\eta_{i,j}m \lambda_{\beta(d,m+1)}$ to $\lambda_{\boldsymbol{Q}}.$\\
        The number of votes produced in this step is $\eta_{i,j} m^2.$ The time required to construct these votes is bounded by $O(\eta_{i,j}, m^2).$
\end{enumerate}

Now we compute the upper-bound of the number of votes in $\boldsymbol{Q}$ and time taken to construct the profile. For $1 \leq i \leq m$, to set the score of candidate $c_i$ we added $\left( m + m^2 \right) \sum\limits_{j=1}^{m} \eta_{i,j} $ votes in $\boldsymbol{Q}$, and thus a total of $\left( m + m^2 \right) \sum\limits_{i=1}^{m}\sum\limits_{j=1}^{m} \eta_{i,j} $ votes.
Since for each $i$, we have $\sum\limits_{j=1}^{m} \eta_{i,j} \leq O(m^4),$ the total number of votes in $\boldsymbol{Q}$ is bounded above by a polynomial in $m$. The total time required to construct these votes is also bounded above by a polynomial in $m$.
\end{proof}

Let $(\mathcal{X}, \mathcal{Y}, \mathcal{Z}, \mathscr{S})$ be a {\sc 3DM} instance where $\mathscr{S} = \{S_1, \hdots S_t\} \subseteq \mathcal{X} \times \mathcal{Y} \times \mathcal{Z}$ such that $S_i = (x_{i_1}, y_{i_2}, z_{i_3})$, for $1 \leq i \leq t.$
In all the reductions from {\sc 3DM} to \PWPC, 
for each $c_i \in C$, 
the value $R_i$ will be of the form $R_i = \sum\limits_{k=1}^{m} l_k \delta_{k} + \sum\limits_{k=1}^{m+1} h_k s_{k}$ where 
 $\sum\limits_{k=1}^{m} l_k \leq O(m)$ and each $\sum\limits_{k=1}^{m+1}h_k \leq t \leq O(m^3)$.  
Since, for $1 \leq k \leq m$, the score value $s_k = (\delta_k + \hdots + \delta_m),$ and $s_m = 0,$ we have that $R_i = \sum\limits_{k=1}^{m} l_k \delta_{k} + \sum\limits_{k=1}^{m} h_k \left( \sum\limits_{l=k}^{m} \delta_l \right)$. From this, it follows that $R_i = \sum\limits_{j=1}^{m}\eta_{i,j} \delta_j$, where each $\eta_{i,j}$  is the sum of suitable $l_k$'s and $h_k$'s.
\\

In the reductions, we call the candidates corresponding to the elements of the sets in {\sc 3DM}, the \emph{element candidates}. 
We will often need to define some arbitrary total order on a set of candidates with specific properties. For a set $S$, we denote an arbitrary total order on $S$ as $\overrightarrow{S}.$ For $a,b \in S,$ if we want $a$ to be in a higher position than $b$, i.e., $a$ has score less than or equal to $b$, in the total order, we simply state that $b \succ a$ in $\overrightarrow{S}.$

\eat{
Due to space limitations, we give only the first reduction from {\sc 3DM} to the \PWPC~problem with respect to  $2$-approval, and then show how to extend the reduction to the \PWPC~problem with respect to an arbitrary $2$-valued rule other than plurality and veto. We chose the $2$-approval rule because, as mentioned in the Introduction, the \PW~problem with respect to $2$-approval is in \PTIME, when restricted to partitioned partial orders \cite{DBLP:conf/atal/Kenig19} and to truncated partial orders \cite{baumeister2012campaigns}.
}
We start with the reduction from {\sc 3DM} to the \PWPC~problem with respect to  $2$-approval, and then show how to extend the reduction to the \PWPC~problem with respect to an arbitrary $2$-valued rule other than plurality and veto. This is an interesting case because, as mentioned in the Introduction, the \PW~problem with respect to $2$-approval is in \PTIME, when restricted to partitioned partial orders \cite{DBLP:conf/atal/Kenig19} and to truncated partial orders \cite{baumeister2012campaigns}.

\subsection{Hardness of \PWPC \ w.r.t.\ $2$-valued rules}
We first present the reduction of {\sc 3DM} to \PWPC \ w.r.t.\ $2$-approval, and then prove its correctness. 
\begin{reduction}
\label{red:2val}
Let $(\mathcal{X}, \mathcal{Y}, \mathcal{Z}, \mathscr{S})$ be a {\sc 3DM} instance where $\mathscr{S} = \{S_1, \hdots S_t\} \subseteq \mathcal{X} \times \mathcal{Y} \times \mathcal{Z}$ such that $S_i = (x_{i_1}, y_{i_2}, z_{i_3})$, for $1 \leq i \leq t.$ We construct an instance of the \PWPC~ problem as follows.

\begin{enumerate}
    \item The set of candidates is $C = X \cup Y \cup Z \cup \{c, d_1, w\}$ where the sets $X, Y, \text{ and } Z$ comprise of candidates corresponding to the elements of the sets $\mathcal{X}, \mathcal{Y}, \text{ and } \mathcal{Z}$ of the {\sc 3DM} instance.
    
    \item We construct the partial profile $\boldsymbol{P}$ as follows.
    \begin{itemize}
        \item For each $S_i = ( x_{i_1}, y_{i_2}, z_{i_3} )$, let $C_i' = C \setminus ( \{x_{i_1}, y_{i_2}, z_{i_3} \} \cup \{d_1 \})$ and $\overrightarrow{C_i'}$ be such that $c \succ w$, i.e., $w$ is in a position higher than that of $c$.
        \begin{align*}
            p'_i &= x_{i_1} \succ y_{i_2} \succ z_{i_3} > d_1 \succ \overrightarrow{C_i'}\\
            p_i &= x_{i_1} \succ y_{i_2} \succ \overrightarrow{C_i'} 
        \end{align*}
        \item $\boldsymbol{P} = \bigcup_{i=1}^{l} p_i$ is a partial profile where each vote is a partial chain.

         $\boldsymbol{P}' = \bigcup_{i=1}^{l} p'_i$ is a total profile. 
        Moreover, each $p'_i$ extends $p_i.$ Let $s(\boldsymbol{P}', c) = \lambda_{\boldsymbol{P}'} = 0.$ 
        Since $w$ is placed at a position greater $c$ in all the votes of $\boldsymbol{P}',$ we have $s(\boldsymbol{P}', w) = \lambda_{\boldsymbol{P}'}.$ 
    \end{itemize}
    
   \item Consider $C = X \cup Y \cup Z \cup \{ c, d_1 \} \cup \{w\}$. Let $\{w\}$ be the set $D$ required in Lemma \ref{lemmaDM} and $\mathbf{R}$ be as follows.
   \begin{itemize}
       \item  $R_{x_i} = 1 - \left(s(\boldsymbol{P}', x_i) - \lambda_{\boldsymbol{P}'} \right)$, for $1 \leq i \leq q.$
       
       \item $R_{y_i} = 1 - \left(s(\boldsymbol{P}', y_i) - \lambda_{\boldsymbol{P}'} \right)$,
       for $1 \leq i \leq q.$ 
       
       \item $R_{z_i} =  - 1 - \left(s(\boldsymbol{P}', z_i) - \lambda_{\boldsymbol{P}'} \right)$, for $1 \leq i \leq q$. 
       
       \item $R_{d_1} = -q - \left(s(\boldsymbol{P}', d_1) - \lambda_{\boldsymbol{P}'} \right).$
       
       \item $R_{c} = 0.$
   \end{itemize}
    
   \item By Lemma \ref{lemmaDM}, there exist a $\lambda_{\boldsymbol{Q}} \in \mathbb{N}$ and a total profile $\boldsymbol{Q}$ which can be constructed in time polynomial in $m$ such that the scores of the candidates in the profile $\boldsymbol{P}' \cup \boldsymbol{Q}$ are as follows. Let $\lambda_{\boldsymbol{P}'} + \lambda_{\boldsymbol{Q}} = \lambda.$
        
    \begin{itemize}
         \item For all $x \in X$, we have
        $s( \boldsymbol{P}' \cup \boldsymbol{Q}, x) = s(\boldsymbol{P}', x) + s(\boldsymbol{Q},x)$ 
            \begin{align*}
                & = \left(\lambda_{\boldsymbol{P}'} + s(\boldsymbol{P}',x) - \lambda_{\boldsymbol{P}'} \right) + \left(\lambda_{\boldsymbol{Q}} +  R_x \right)  = \lambda + 1.
            \end{align*}  
        
        \item For all $y \in Y$, we have
        $s( \boldsymbol{P}' \cup \boldsymbol{Q}, y) = s(\boldsymbol{P}', y) + s(\boldsymbol{Q},y)$ 
            \begin{align*}
                & = \left(\lambda_{\boldsymbol{P}'} + s(\boldsymbol{P}',y) - \lambda_{\boldsymbol{P}'} \right) + \left(\lambda_{\boldsymbol{Q}} +  R_y \right) = \lambda + 1.
            \end{align*}

        \item For all $z \in Z$, we have
        $s( \boldsymbol{P}' \cup \boldsymbol{Q}, z) = s(\boldsymbol{P}', z) + s(\boldsymbol{Q},z)$ 
            \begin{align*}
                & = \left(\lambda_{\boldsymbol{P}'} + s(\boldsymbol{P}',z) - \lambda_{\boldsymbol{P}'} \right) + \left(\lambda_{\boldsymbol{Q}} +  R_z \right) = \lambda - 1.
            \end{align*}  
    
        \item $s( \boldsymbol{P}' \cup \boldsymbol{Q}, c) = s(\boldsymbol{P}', c) + s(\boldsymbol{Q},c)$ $= \lambda_{\boldsymbol{P}'} + \lambda_{\boldsymbol{Q}} = \lambda.$
        
        \item $s( \boldsymbol{P}' \cup \boldsymbol{Q}, d_1) = s(\boldsymbol{P}', d_1) + s(\boldsymbol{Q},d_1)$ 
        \begin{align*}
                & = \left(\lambda_{\boldsymbol{P}'} + s(\boldsymbol{P}',d_1) - \lambda_{\boldsymbol{P}'} \right) + \left(\lambda_{\boldsymbol{Q}} +  R_{d_1} \right) = \lambda - q.
            \end{align*}   
            
       \item $s( \boldsymbol{P}' \cup \boldsymbol{Q}, w) = s(\boldsymbol{P}', w) + s(\boldsymbol{Q},w)$ $< \lambda_{\boldsymbol{P}'} + \lambda_{\boldsymbol{Q}}  < \lambda.$
       
    \end{itemize}
    
    \item We let $C$, the partial profile $\boldsymbol{V} =  \boldsymbol{P} \cup \boldsymbol{Q}, \text{ and } c$ be the input to the \PWPC~ problem.
\end{enumerate}
\end{reduction}
\begin{proposition}
\label{c_equals_phi_t2}
Let $\boldsymbol{P}, \boldsymbol{P}',$ and $\boldsymbol{Q}$ be the profiles as in the construction above. For all $\overline{\boldsymbol{P}}$ which extend $\boldsymbol{P}$, we have $s(\overline{\boldsymbol{P}} \cup \boldsymbol{Q}, c) = s(\boldsymbol{P}' \cup \boldsymbol{Q}, c) = \lambda.$ 
\end{proposition}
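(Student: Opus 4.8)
The plan is to exploit the fact that the rule here is $2$-approval, with scoring vector $(1,1,0,\ldots,0)$, so that a candidate scores a point in a vote precisely when it occupies one of the top two positions. The key observation is structural: in each partial chain $p_i = x_{i_1} \succ y_{i_2} \succ \overrightarrow{C_i'}$, candidate $c$ belongs to $C_i'$ and therefore lies strictly below both $x_{i_1}$ and $y_{i_2}$. The only two candidates missing from the domain of $p_i$ are $z_{i_3}$ and $d_1$, so forming a completion amounts to inserting exactly these two candidates into the chain; such an insertion can only add candidates above $c$ and can never remove $x_{i_1}$ or $y_{i_2}$ from above it.

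First I would argue that for every completion $\overline{p_i}$ of $p_i$, candidate $c$ occupies a position at least $3$. Indeed, $p_i$ already forces $x_{i_1} \succ c$ and $y_{i_2} \succ c$, and these two relations are preserved in any total order extending $p_i$; hence at least two candidates remain ranked above $c$ in $\overline{p_i}$. Under $2$-approval this yields $s(\overline{p_i}, c) = 0$ for every completion, and in particular $s(p'_i, c) = 0$. Summing over all $i$, I obtain $s(\overline{\boldsymbol{P}}, c) = 0 = s(\boldsymbol{P}', c) = \lambda_{\boldsymbol{P}'}$ for every $\overline{\boldsymbol{P}}$ extending $\boldsymbol{P}$.

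Next I would fold in the total profile $\boldsymbol{Q}$. Since $R_c = 0$, Lemma \ref{lemmaDM} gives $s(\boldsymbol{Q}, c) = \lambda_{\boldsymbol{Q}} + R_c = \lambda_{\boldsymbol{Q}}$. Because scores are additive over the union of disjoint profiles, I get $s(\overline{\boldsymbol{P}} \cup \boldsymbol{Q}, c) = s(\overline{\boldsymbol{P}}, c) + s(\boldsymbol{Q}, c) = 0 + \lambda_{\boldsymbol{Q}} = \lambda$, where I use $\lambda = \lambda_{\boldsymbol{P}'} + \lambda_{\boldsymbol{Q}}$ together with $\lambda_{\boldsymbol{P}'} = 0$. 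Applying the same computation to $\boldsymbol{P}'$ in place of $\overline{\boldsymbol{P}}$ gives $s(\boldsymbol{P}' \cup \boldsymbol{Q}, c) = \lambda$, which establishes both claimed equalities.

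This statement is essentially an observation rather than a calculation, so I do not expect a serious obstacle. The one point requiring care is verifying that inserting the two missing candidates $z_{i_3}$ and $d_1$ cannot dislodge $x_{i_1}$ or $y_{i_2}$ from above $c$; this holds precisely because the chain $x_{i_1} \succ y_{i_2} \succ c$ is already forced by $p_i$ and is inherited by every completion, which is exactly what guarantees that $c$'s $2$-approval score is invariant across all extensions of $\boldsymbol{P}$.
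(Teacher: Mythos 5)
Your proposal is correct and follows essentially the same argument as the paper: both rest on the observation that the forced relations $x_{i_1} \succ y_{i_2} \succ c$ survive in every completion of each partial chain $p_i$, so $c$ never occupies a top-two position and its $2$-approval score is invariant across all extensions of $\boldsymbol{P}$. The paper states this tersely and leaves the bookkeeping with $\boldsymbol{Q}$ implicit, whereas you spell out the insertion argument for $z_{i_3}$ and $d_1$ and the additivity computation $s(\overline{\boldsymbol{P}} \cup \boldsymbol{Q}, c) = 0 + \lambda_{\boldsymbol{Q}} = \lambda$; this is a matter of detail, not of method.
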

\begin{proof}
Recall that the scoring vector is $(1,1, 0, \hdots, 0).$ By construction, in every completion of the partial chain $p_i$ in $\boldsymbol{P}$, for $1 \leq i \leq m$, candidate $c$ is always in a position greater than two. Thus, for all total profiles $\overline{\boldsymbol{P}}$ which extend the partial profile $\boldsymbol{P},$ the score of $c$ never changes.
\end{proof}
\begin{lemma}
\label{lemma_2val_2}
 \emph{\PWPC}~w.r.t.\ $2$-approval is \emph{\NP}-complete.
\end{lemma}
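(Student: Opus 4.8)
The plan is to establish membership in \NP\ and then \NP-hardness via the correctness of Reduction~\ref{red:2val}. Membership is immediate: \PWPC\ is a special case of \PW, and \PW\ lies in \NP\ since one can guess a completion of $\boldsymbol{V}=\boldsymbol{P}\cup\boldsymbol{Q}$ and verify in polynomial time that $c$ is a winner. For hardness I would show that $c$ is a possible winner of the constructed instance if and only if the input {\sc 3DM} instance $(\mathcal{X},\mathcal{Y},\mathcal{Z},\mathscr{S})$ has a matching $\mathscr{S}'$; the construction runs in polynomial time because $\boldsymbol{P}$ and $\boldsymbol{P}'$ each have $t$ votes and $\boldsymbol{Q}$ is polynomial-size by Lemma~\ref{lemmaDM}.

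The crux is a complete analysis of the completions of $\boldsymbol{P}$. In each partial chain $p_i = x_{i_1}\succ y_{i_2}\succ\overrightarrow{C_i'}$, the only missing candidates are $z_{i_3}$ and $d_1$, while every candidate of $C_i'$ (in particular $c$ and $w$) stays below $y_{i_2}$, which stays below $x_{i_1}$. Hence in any completion at least two candidates outrank every element of $C_i'$, so $c$ and $w$ fall outside the top two positions and score $0$ on this vote under $(1,1,0,\dots,0)$; this reproves Proposition~\ref{c_equals_phi_t2} ($s(c)=\lambda$ in every completion) and keeps $s(w)<\lambda$. Consequently the two scoring positions of $p_i$ can be filled only by candidates from $\{x_{i_1},y_{i_2},z_{i_3},d_1\}$, and since $x_{i_1}\succ y_{i_2}$ forces $x_{i_1}$ above $y_{i_2}$, the achievable top-two sets are exactly $\{x_{i_1},y_{i_2}\}$ (the choice realized by $\boldsymbol{P}'$), $\{x_{i_1},z_{i_3}\}$, $\{x_{i_1},d_1\}$, and $\{z_{i_3},d_1\}$. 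I would then record, relative to the baseline scores of $\boldsymbol{P}'\cup\boldsymbol{Q}$ given in the reduction, that $x_{i_1}$ loses a point \emph{only} under the choice $\{z_{i_3},d_1\}$; $y_{i_2}$ loses a point under every non-default choice; $z_{i_3}$ gains a point under $\{x_{i_1},z_{i_3}\}$ and $\{z_{i_3},d_1\}$; and $d_1$ gains a point under $\{x_{i_1},d_1\}$ and $\{z_{i_3},d_1\}$.

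For the forward direction, given a matching $\mathscr{S}'$ I would select the choice $\{z_{i_3},d_1\}$ for each $p_i$ with $S_i\in\mathscr{S}'$ and the default elsewhere; distinctness of coordinates makes every $x$ and $y$ drop by exactly $1$ to $\lambda$, every $z$ rise by exactly $1$ to $\lambda$, and $d_1$ rise by exactly $q$ to $\lambda$, so all scores are $\le\lambda=s(c)$ and $c$ wins. The backward direction is the main obstacle and rests on an interlocking counting argument. Writing $\mathcal{A},\mathcal{B},\mathcal{C}$ for the sets of votes realizing $\{z_{i_3},d_1\}$, $\{x_{i_1},z_{i_3}\}$, $\{x_{i_1},d_1\}$ respectively, a winning completion keeps every candidate $\le\lambda$. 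Since each $x$ starts at $\lambda+1$ and can lose a point only through a vote of $\mathcal{A}$, the set $\mathcal{A}$ must cover all of $\mathcal{X}$, so $|\mathcal{A}|\ge q$; since $d_1$ starts at $\lambda-q$ and gains one point per vote of $\mathcal{A}\cup\mathcal{C}$, we need $|\mathcal{A}|+|\mathcal{C}|\le q$, forcing $|\mathcal{A}|=q$ with distinct $x$-coordinates and $\mathcal{C}=\emptyset$. Because each $z$ starts at $\lambda-1$ and may gain at most one point, no $z$-coordinate repeats within $\mathcal{A}\cup\mathcal{B}$; thus the $q$ votes of $\mathcal{A}$ have distinct $z$-coordinates and cover all of $\mathcal{Z}$, which leaves no room for any vote of $\mathcal{B}$, giving $\mathcal{B}=\emptyset$. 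Finally, as $\mathcal{A}\cup\mathcal{B}\cup\mathcal{C}=\mathcal{A}$ must cover all of $\mathcal{Y}$ with $q$ votes, the $y$-coordinates of $\mathcal{A}$ are distinct as well, so $\mathcal{A}$ is a matching of size $q$, a solution to the {\sc 3DM} instance. I expect verifying that these three capacity constraints mesh exactly---so that the only feasible completions correspond to matchings---to be the most delicate step.
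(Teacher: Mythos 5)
Your proposal is correct and takes essentially the same route as the paper: it proves correctness of Reduction~\ref{red:2val} with the identical forward-direction completion (realizing top-two $\{z_{i_3},d_1\}$ exactly on the matching votes) and the same capacity-based counting in the backward direction, where the scores $\lambda+1$, $\lambda+1$, $\lambda-1$, $\lambda-q$ force the point-losing votes to form a matching. Your explicit enumeration of the four achievable top-two sets and the bookkeeping with the sets $\mathcal{A},\mathcal{B},\mathcal{C}$ is just a more rigorous rendering of the paper's terser argument (its set $K$ corresponds to your $\mathcal{A}$), and your added remark on \NP{} membership is standard.
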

\begin{proof}
Given a {\sc 3DM} instance $(\mathcal{X}, \mathcal{Y}, \mathcal{Z}, \mathscr{S}),$ we construct a \PWPC \ instance, $(C, \boldsymbol{V} = \boldsymbol{P} \cup \boldsymbol{Q}, c),$ according to Reduction \ref{red:2val}. Let $|C| = m.$ 
\\
First, we prove the $\impliedby$ direction.
Assume that the instance $(C, \boldsymbol{V} = \boldsymbol{P} \cup \boldsymbol{Q}, c)$ of the \PWPC~problem obtained from the reduction is a positive one. Therefore, there exists a total profile $P^* = \bigcup_{i=1}^{l} p^*_i$ such that  

\begin{itemize}
    \item for all $1 \leq i \leq t,$ we have $p^*_i \text{ extends } p_i$;

    \item $c$ is a possible winner and, by Proposition \ref{c_equals_phi_t2}, has score $\lambda.$
\end{itemize}

When we say that a candidate ``gains" or ``loses" points, it is in relation to the complete profile $\boldsymbol{P}'$ in the reduction. 
\begin{enumerate}

    \item  For $1\leq i \leq q$, each element candidate $x_i$ in $X$, , has to lose at least one point. Since a candidate can lose at most one point in any vote, let $p^*_{k_i}$  be the vote in which the element candidate $x_i$  loses a point, where $1 \leq i \leq q$. Let $K = \{ k_i | 1 \leq i \leq q \}$.

    \item  Observe that in all the $q$ votes in $K$, both $d_1$ and an element candidate from $Z$ must be in the top two positions. Without loss of generality, assume that, in these $q$ votes, candidate $d_1$ is in the first position and the element candidate from $Z$ is in the second position. 
    
    \item Therefore, candidate $d_1$ gains a total of $q$ points. Since each $z \in Z$ can gain at most a point, the element candidate of $Z$ in the second position in each of the above $q$ votes must be distinct, i.e., no two votes in $K$ have the same element candidate of $Z$ in the second position.

    \item By construction, candidates $d_1$ and $z$ cannot gain any more points. Since $c$ is a possible winner, it must be the case that each of the $q$ element candidates in $Y$ also lost at least a point each in the $q$ votes in $K$. Therefore, the element candidates of $Y$ in the $q$ votes in $K$ must be distinct.
   
     \item Therefore, the set $\{S_i | i \in K \}$ must form a cover for $\mathcal{X} \cup \mathcal{Y} \cup \mathcal{Z}.$

\end{enumerate}

Now, we prove the other direction. Let $(\mathcal{X}, \mathcal{Y}, \mathcal{Z}, \mathscr{S})$ be a positive instance of {\sc 3DM}. Let $\mathscr{S'} \subseteq \mathscr{S}$ be the cover. Recall that $| \mathscr{S}' | = q.$ We show that $c$ is, indeed, a possible winner in the \PWPC~instance constructed as above. 

\begin{enumerate}

    \item We extend each partial vote $p_i \in \boldsymbol{P}$ as follows.
    \begin{align*}
       p^*_i : d_1 \succ z_{i_3} \succ x_{i_1} \succ y_{i_2} \succ \overrightarrow{C'_i} \text{ if } S_i \in \mathscr{S}' \\
       p^*_i : x_{i_1} \succ y_{i_2} \succ z_{i_3} \succ d_1  \succ \overrightarrow{C_i'} \text{ if } S_i \notin \mathscr{S}'
   \end{align*} 
    Let $\boldsymbol{P}^* = \bigcup_{i=1}^{l} p^*_i$.

\begin{table}[!ht]
\centering
\begin{tabular}{lrrrrrr}  
\toprule
1  & 1 & 0 & 0 & $\hdots$ & 0 & \\
\midrule
       $x_{i_1}$ &  $y_{i_2}$  &  $z_{i_3}$   &  $d_1$ & $C_i$ &    & if $S_i \in \mathscr{S}'$  \\
       
        $d_1$ & $z_{i_3}$ & $x_{i_1}$ &  $y_{i_2}$ & $C_i$&  & if $S_i \notin \mathscr{S}'$  \\
\\
\end{tabular}
\caption{Completions for $2$-approval. }
\label{tab:2_app}
\end{table}

    \item  The following are the scores of the candidates in the profile $\boldsymbol{P}^* \cup \boldsymbol{Q}$. Recall, that $s(\boldsymbol{P}^* \cup \boldsymbol{Q}, c) = \lambda$.
    \begin{itemize}
        \item For all $x \in X,$ we have
        $s(\boldsymbol{P}^* \cup \boldsymbol{Q}, x) = s(\boldsymbol{P}^* , x) + s(\boldsymbol{Q}, x) = s(\boldsymbol{P}' , x) - 1 + s(\boldsymbol{Q}, x) $
        \begin{align*}
            =& \left(\lambda_{\boldsymbol{P}'} + s(\boldsymbol{P}',x) - \lambda_{\boldsymbol{P}'} \right) -1 + \left(\lambda_{\boldsymbol{Q}} +  R_x \right) = \lambda.
        \end{align*}
        
        \item For all $y \in Y,$we have
        $s(\boldsymbol{P}^* \cup \boldsymbol{Q}, y) = s(\boldsymbol{P}^* , y) + s(\boldsymbol{Q}, y) = s(\boldsymbol{P}' , y) - 1 + s(\boldsymbol{Q}, y)$
        \begin{align*}
            =& \left(\lambda_{\boldsymbol{P}'} + s(\boldsymbol{P}',y) - \lambda_{\boldsymbol{P}'} \right) -1 + \left(\lambda_{\boldsymbol{Q}} +  R_y \right) = \lambda.
        \end{align*}
        
        \item For all $z \in Z,$ we have 
        $s(\boldsymbol{P}^* \cup \boldsymbol{Q}, z) = s(\boldsymbol{P}^* , z) + s(\boldsymbol{Q}, z) = s(\boldsymbol{P}' , z) + 1 + s(\boldsymbol{Q}, z) $
        \begin{align*}
            =& \left(\lambda_{\boldsymbol{P}'} + s(\boldsymbol{P}',z) - \lambda_{\boldsymbol{P}'} \right) + 1 + \left(\lambda_{\boldsymbol{Q}} +  R_z \right) = \lambda.
        \end{align*}
        
        \item $s(\boldsymbol{P}^* \cup \boldsymbol{Q}, c) = s(\boldsymbol{P}^* , c) + s(\boldsymbol{Q}, c)  = s(\boldsymbol{P}' , c) + s(\boldsymbol{Q}, c) 
     = \lambda_{\boldsymbol{P}'} + \lambda_{\boldsymbol{Q}} = \lambda.$
     
     \item $s(\boldsymbol{P}^* \cup \boldsymbol{Q}, d_1) = s(\boldsymbol{P}^* , d_1) + s(\boldsymbol{Q}, d_1)  = s(\boldsymbol{P}' , d_1) + q + s(\boldsymbol{Q}, d_1)$
        \begin{align*}
            =& \left(\lambda_{\boldsymbol{P}'} + s(\boldsymbol{P}',d_1)
            - \lambda_{\boldsymbol{P}'} \right) + q + \left(\lambda_{\boldsymbol{Q}} +  R_{d_1} \right) = \lambda.
        \end{align*}

        \item $s( \boldsymbol{P}' \cup \boldsymbol{Q}, w) = s(\boldsymbol{P}', w) + s(\boldsymbol{Q},w)$ $< \lambda_{\boldsymbol{P}'} + \lambda_{\boldsymbol{Q}} < \lambda.$ 
    \end{itemize}
    Therefore, $c$ is a possible winner.
\end{enumerate}
    \vspace{-2pc}
\end{proof}
Next, we generalise the construction to all $2$-valued  rules.
\begin{lemma}
If $r$ is a $2$-valued rule, then
\emph{\PWPC}~w.r.t.\ $r$ is \emph{\NP}-complete.
\end{lemma}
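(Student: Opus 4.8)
The plan is to extend Reduction~\ref{red:2val} from $2$-approval to an arbitrary $2$-valued rule $r$ that is neither plurality nor veto; membership in \NP\ is inherited from Theorem~\ref{class-thm}, so only \NP-hardness is at issue. I would first record the shape of such an $r$: after normalisation its scoring vector is, for every large enough $m$, of the form $\boldsymbol{s}_m=(\underbrace{1,\ldots,1}_{a},\underbrace{0,\ldots,0}_{m-a})$ with $a=a(m)$ non-decreasing in $m$ (by purity), $a\ge 2$ (since $r$ is not plurality), and $m-a\ge 2$ (since $r$ is not veto). Hence for $m$ beyond some constant $m_0$ the vector has a single \emph{descent} between positions $a$ and $a+1$, and a candidate gains or loses exactly one point precisely when it crosses this descent. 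The observation that drives the whole argument is that completing a partial chain inserts its (at most two) missing candidates, and such insertions only ever move existing candidates \emph{downward} and by at most two positions; therefore, in any completion, the only non-missing candidates whose scores can change are those occupying chain-positions $a-1$ and $a$.

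Given a {\sc 3DM} instance, I would keep the candidate set $X\cup Y\cup Z\cup\{c,d_1,w\}$ of Reduction~\ref{red:2val} and pad it with a constant number of \emph{dummy} candidates until the total number $m$ of candidates reaches $m_0$. For each triple $S_i=(x_{i_1},y_{i_2},z_{i_3})$ the chain $p_i$ now puts $x_{i_1}$ in position $a-1$ and $y_{i_2}$ in position $a$, leaves exactly $z_{i_3}$ and $d_1$ missing, and distributes the remaining $m-4$ candidates (the passive element candidates, $c$, $w$, and the dummies) into the $a-2$ scoring slots above the gadget and the $m-a-2$ non-scoring slots below it, in any fixed order. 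Because the number of these candidates equals $(a-2)+(m-a-2)$ exactly, they fit with no further constraint, so -- crucially -- no case split on whether $r$ is ``plurality-like'' or ``veto-like'' is needed. I would place $c$ and $w$ together on whichever side has at least two free slots (for $m\ge m_0$ one always does), so that both are frozen with $s(\boldsymbol{P}',c)=s(\boldsymbol{P}',w)=\lambda_{\boldsymbol{P}'}$, and then invoke Lemma~\ref{lemmaDM} with sink $w$ to build a total profile $\boldsymbol{Q}$ that sets the combined scores in $\boldsymbol{P}'\cup\boldsymbol{Q}$ to $\lambda$ for $c$, $\lambda+1$ for each $X$- and $Y$-candidate, $\lambda-1$ for each $Z$-candidate, $\lambda-q$ for $d_1$, and below $\lambda$ for $w$ and every dummy. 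The offsets have magnitude $O(|\mathscr{S}|)=O(m^3)$, within the bound of Lemma~\ref{lemmaDM}.

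Correctness then transfers almost verbatim from Lemma~\ref{lemma_2val_2}. As $c$ is frozen off the descent, the analogue of Proposition~\ref{c_equals_phi_t2} gives $s(\overline{\boldsymbol{P}}\cup\boldsymbol{Q},c)=\lambda$ for every completion. For the ``if'' direction, a perfect matching $\mathscr{S}'$ induces the completion that, on each selected chain, inserts $d_1$ and $z_{i_3}$ immediately above $x_{i_1}$ -- sending both $x_{i_1}$ and $y_{i_2}$ across the descent -- and leaves the other chains in their $\boldsymbol{P}'$ form; every element candidate and $d_1$ then lands at $\lambda$, so $c$ wins. For the ``only if'' direction, each of the $q$ candidates of $X$ starts at $\lambda+1$ and must lose a point, which it can do only while playing the active role $x_{i_1}$ at position $a-1$; dropping it below the descent forces the completion to insert \emph{both} missing candidates above it, which pushes $z_{i_3}$ and $d_1$ across the descent (each gaining a point) and $y_{i_2}$ below it. The $q$ distinct $X$-candidates thus require $q$ distinct such votes while the budget $\lambda-q$ on $d_1$ permits at most $q$; hence there are exactly $q$ selected votes, and the unit slack on the $Z$-candidates together with the requirement that all of $Y$ also lose a point forces the selected triples to have pairwise distinct coordinates, i.e.\ to form a perfect matching.

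The step I expect to be the main obstacle is making the freezing claim airtight: one must verify that \emph{every} candidate other than $x_{i_1},y_{i_2}$ and the two missing ones -- each passive element candidate, each dummy, and $c$ and $w$ -- stays at least two positions away from the descent, on a fixed side, in every chain, so that its score depends only on $\boldsymbol{P}'$ and the Lemma~\ref{lemmaDM} offsets are well defined. Closely tied to this is the bookkeeping that chooses $m$: one adds only a constant number of dummies (depending on $r$, not on the instance), which keeps $m=\Theta(q)$, all offsets polynomially bounded, and the whole construction polynomial-time computable. Once these are settled, the reduction specialises to Reduction~\ref{red:2val} exactly when $a=2$, completing the proof.
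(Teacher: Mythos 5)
Your proposal follows essentially the same route as the paper's proof: the same gadget placing $x_{i_1}$ and $y_{i_2}$ astride the single descent of the $2$-valued scoring vector (positions $a-1$ and $a$), with $z_{i_3}$ and $d_1$ dropped to form the partial chains, scores calibrated via Lemma~\ref{lemmaDM}, the same completions in both directions, and the same freezing/budget argument inherited from the $2$-approval case. Your two additions --- padding with constantly many dummies so that the vector is genuinely $2$-valued (i.e.\ $m\geq n_0$), and the explicit verification that every non-gadget candidate stays frozen under insertion of the two missing candidates --- are minor refinements of details the paper's proof outline leaves implicit, not a different approach.
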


\begin{proof}
\emph{(Outline)}
The set $C$ of candidates is the same as in the reduction for $2$-approval. One can always construct the total profile $\boldsymbol{P}' = \cup_{i=1}^{l} p'_i$, such that in the vote $p'_i$, corresponding to $S_i \in \mathscr{S},$  candidates $x_{i_1}$ and $y_{i_2}$ are in the two highest positions with score value one, and candidates $z_{i_3}$ and $d_1$ are in the two lowest positions with score value zero. The score of each candidate in the profile $\boldsymbol{P}' \cup \boldsymbol{Q}$ is set identical to that in the reduction for $2$-approval. We drop $d_1$ and $z_{i_3}$ in each $p'_i$ to obtain the partial profile $\boldsymbol{P}$. More precisely, if the scoring vector is $(\underbrace{1, \hdots, 1}_{k}, 0, 0, \hdots, 0)$, we do the following.

\begin{itemize}
    \item For each $S_i$, we construct a total order $p'_i$ such that
    candidates $x_{i_1}$ and $y_{i_2}$ are in positions $k-1$ and $k$ respectively, while 
 candidates $z_{i_3}$ and $d$ are in positions $k+1$ and $k+2$.

     $$p'_i = \overrightarrow{ C^1_i} \succ x_{i_1} \succ y_{i_2} \succ z_{i_3} \succ d_1 \succ \overrightarrow{C^2_i},$$
     
    where $C^1_i$ and $C^2_i$ are partitions of $C \setminus \{x_{i_1} , y_{i_2} , z_{i_3} , d_1  \}$ such that $|C_i^1| = r-2$ and $C_i^2 = C \setminus \left( C_i^1 \cup \{x_{i_1} , y_{i_2} , z_{i_3} , d_1  \} \right)$ . Since $2 \leq r \leq (m-2)$, the positions $(k-1), k, (k+1), \text{ and } (k+2)$ are always valid.
    
    \item Construct partial votes $p_i$ by dropping   $d_1$ and $z_{i_3}$ from $p'_i$.
    
    \item For Proposition $\ref{c_equals_phi_t2}$ to hold for all $2$-valued scoring rules, one ensures that  and $|C_i^1| = k-2$ where $k$ is the number of times the largest score value is in the scoring vector.
    
    \item To apply Lemma \ref{lemmaDM}, one ensures that in each $p'_i$  candidate $c$ is an position smaller than that of $w$.
    
    \item The relative scores of each candidate in the profile $\boldsymbol{P}' \cup \boldsymbol{Q}$ is set in a way similar to that in the reduction for $2$-approval (Reduction \ref{red:2val}), and thus the proof of $\impliedby$ direction is similar.
    
    \item For the $\implies$ direction, the partial chains in $\boldsymbol{P}$ are completed as in Table \ref{tab:t_app}. This makes the score of candidate $c$  greater than or equal to the score of all the other candidates. 
\end{itemize}
\vspace{-2pc}
\end{proof}
    \begin{table}[!ht]
\centering
\begin{tabular}{lrrrrrrrrr}  
\toprule
1  & $\hdots$ & 1 & 1 & 0 & 0 & $\hdots$ & 0 &  \\
\midrule
       &  $\overrightarrow{C^1_i}$  & $x_{i_1}$ &  $y_{i_2}$  &  $z_{i_3}$   &  $d_1$ & $\overrightarrow{C^2_i}$ & & if $S_i \in \mathscr{S}'$    \\
       &  $\overrightarrow{C^1_i}$  & $d_1$ & $z_{i_3}$ & $x_{i_1}$ &  $y_{i_2}$ & $\overrightarrow{C^2_i}$ & & if $S_i \notin \mathscr{S}'$    \\
       \\
\end{tabular}
\caption{Completions for a $2$-valued rules.}
\label{tab:t_app}
\end{table}
\subsection{Hardness of \PWPC~ w.r.t.\ $p$-valued rules, where $p \geq 3$}
\label{sub:PWPC_3val}
In this section, we show NP-completeness of $p$-valued positional scoring rules, for $p \geq 3$. Consider a $p$-valued rule, where $p \geq 3$, which has a size $m$ scoring vector with the distinct values $a_1 > a_2 > \hdots > a_p$, we define, for $1 \leq j \leq p$, a function $\ell (m,j)$ which returns the number of times the score value $a_j$ repeats in the scoring vector. Schematically, a scoring vector of a $p$-valued rule, where $p \geq 3$, can be represented as follows.

$$\left( \underbrace{a_1, \hdots, a_1}_{\ell (m,1)}, \underbrace{a_{2}, \hdots, a_{2}}_{\ell(m , 2)}, \hdots, \underbrace{ a_{p}, \hdots, a_p}_{\ell (m, p)} \right)$$

The following proposition follows from the purity of the scoring rules considered in this paper.

\begin{proposition}
\label{prop:p_val_len}
Let $r$ be a $p$-valued scoring rule. For all positive integers $\gamma$, there exists a length $m \leq \gamma p$ such that, in the scoring vector $\boldsymbol{s}_{m}$, there exists $1 \leq u \leq p$ such that $\ell (m, u) = \gamma.$
\end{proposition}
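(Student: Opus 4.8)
The plan is to follow the largest multiplicity among the $p$ score values as the length grows and to stop at the first length where this maximum hits $\gamma$. For $m \geq n_0$ set $M(m) = \max_{1 \leq u \leq p}\ell(m,u)$. The engine of the proof is a single consequence of purity together with $p$-valuedness: for every $m \geq n_0$, the vector $\boldsymbol{s}_{m+1}$ arises from $\boldsymbol{s}_m$ by inserting one score value, and since both vectors have exactly $p$ distinct values, the inserted value cannot be new and must coincide with one of the current values $a_u$. Consequently, passing from $m$ to $m+1$ raises exactly one multiplicity $\ell(\cdot,u)$ by one and leaves the remaining $p-1$ multiplicities unchanged.

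From this I would read off three properties of $M$. First, $M$ is non-decreasing. Second, $M(m+1)-M(m)\in\{0,1\}$, since only the single incremented coordinate can raise the maximum, and it does so by at most one. Third, $M$ is unbounded, because $M(m)\geq m/p$ (the multiplicities sum to $m$, so the largest is at least the average). A non-decreasing integer sequence that advances in unit steps and diverges attains every integer value that is at least its starting value $M(n_0)$. Hence for every $\gamma\geq M(n_0)$ there is a least length $m^\star\geq n_0$ with $M(m^\star)=\gamma$; at this $m^\star$ some value $a_u$ satisfies $\ell(m^\star,u)=\gamma$, which supplies the required index $u$.

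It remains to bound $m^\star$, and here the choice of the \emph{first} such length is exactly what does the work: before $m^\star$ the maximum was still below $\gamma$, so at $m^\star$ every multiplicity is at most $\gamma$. Summing over the $p$ values gives $m^\star=\sum_{u=1}^{p}\ell(m^\star,u)\leq p\gamma$, which is precisely the required bound $m\leq\gamma p$. The one step that needs care — and the conceptual heart of the argument — is the unit-increment claim, where purity (one insertion per step) and $p$-valuedness (no new value may appear once $p$ values are present) must be combined; everything after it is routine monotonicity bookkeeping. I should note that this reasoning yields the statement for all $\gamma\geq M(n_0)$, i.e.\ for all sufficiently large $\gamma$, which is the only regime the subsequent reductions invoke; the finitely many smaller values lie below $n_0$, where the multiplicity profile is fixed and can be checked directly.
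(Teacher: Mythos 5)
Your main engine --- purity plus $p$-valuedness forces each step from $\boldsymbol{s}_m$ to $\boldsymbol{s}_{m+1}$ (for $m \geq n_0$) to increment exactly one multiplicity by one --- is correct, and the hitting-time argument built on it is sound; note that the paper offers no proof at all (it simply asserts the proposition ``follows from purity''), so your write-up is doing real work. The genuine gap is at the end: the proposition quantifies over \emph{all} positive integers $\gamma$, your argument covers only $\gamma \geq M(n_0)$, and the sentence you use to dismiss the remaining values is not a proof and is factually wrong. Below $n_0$ the multiplicity profile is \emph{not} fixed --- it changes with every insertion --- and ``can be checked directly'' is not available in a statement that quantifies over every $p$-valued rule $r$: for each fixed $\gamma < M(n_0)$ you must still exhibit, for an \emph{arbitrary} such rule, a length $m \leq \gamma p$ whose scoring vector has a block of size exactly $\gamma$. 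Nor does the observation that the reductions only invoke large $\gamma$ discharge this obligation: you are proving the proposition as stated, not the use later made of it.

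The repair needs no new idea, only the removal of your restriction to $m \geq n_0$ and to the maximum $M$. For every $m \geq 2$, purity says $\boldsymbol{s}_{m+1}$ contains all values of $\boldsymbol{s}_m$ plus one inserted value; hence the number of distinct values is non-decreasing in $m$, and since it equals $p$ from $n_0$ onward, it is at most $p$ at \emph{every} length. Each step therefore either increments one existing block size by one or creates a new block of size $1$. Consequently every block size evolves by unit steps starting from $1$ and never decreases; since the largest block size at length $m$ is at least $m/p$, some block size eventually exceeds any given $\gamma$, and by unit steps it must pass through the value $\gamma$ exactly. Now take the \emph{first} length $m^\star$ at which some block size equals $\gamma$: at $m^\star$ no block can have size greater than $\gamma$, because such a block would have had size exactly $\gamma$ at an earlier length, contradicting minimality. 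Hence $m^\star$, being the sum of at most $p$ block sizes each at most $\gamma$, satisfies $m^\star \leq \gamma p$. This proves the proposition for every positive integer $\gamma$, including $\gamma < M(n_0)$ and $\gamma = 1$ (where $m^\star = 2 \leq p$, using $p \geq 2$).
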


We give the reduction first and then prove its correctness.

\begin{reduction}
\label{red:3val}
Let $\mathcal{I} = (\mathcal{X, Y, Z}, \mathscr{S} )$ be a { \sc 3DM } instance,
with $|\mathcal{X}| = |\mathcal{Y}| = |\mathcal{Z}| = q$. 
Let $r$ be the $p$-valued scoring rule which has scoring vectors with blocks of repeating score values. 
More precisely, in the scoring vector of length $m,$ the score value $a_j$ repeats $\ell (m, j)$ times, for $1 \leq j \leq p.$ Let $\delta_j = a_j - a_{j+1},$ for $1 \leq j < p.$
Let $\gamma = 3q.$ 
By Proposition \ref{prop:p_val_len}, there is a number $m \leq 3q p$ such that in the scoring vector $\boldsymbol{s}_{m}$, there exists $1 \leq u \leq p$ such that the block of repeating score value $a_u$ has length $\ell (m', u) = 3q$. We consider the following three cases.
\begin{enumerate}
    \item[Case 1.]$u = 1$
    \item[Case 2.]$u = p$
    \item[Case 3.]$1 < u < p$
\end{enumerate}
We construct the \PWPC~instance as follows.
\begin{enumerate}
    \item The set of candidates is $C = X \cup Y \cup Z \cup \{c, w\} \cup H$ where $c$ denotes the distinguished candidate, the sets $X, Y, \text{ and } Z$ comprise of candidates corresponding to the elements of the sets $\mathcal{X, Y}$ and $\mathcal{Z}$. The set $H$ consists of dummy candidates such that $|H| = m-3q + 2.$
    
    \item We construct the partial profile $\boldsymbol{P}$ as follows.
    
    \begin{itemize}
        \item For each $S_i = (x_{i_1}, y_{i_2}, z_{i_3} ),$ let  $C'_i = C \setminus (\{x_{i_1}, y_{i_2}, z_{i_3} \} \cup H)$. Let $\overrightarrow{C'_i}$ be such that candidate $c$ is ranked lower than $w$, i.e., we have $c \succ w.$
        
         \paragraph{Case 1.}$u = 1.$  Let $H_1 \subseteq H$ such that $|H_1| = \ell (m, 2) -1$ and $H' = H \setminus H_1$.
         \begin{align*}
                p'_i &= \overrightarrow{C'_i} \succ x_{i_1} \succ y_{i_2} \succ \overrightarrow{H_1} \succ z_{i_3} \succ \overrightarrow{H'}\\
                p_i &= \overrightarrow{C'_i} \succ y_{i_2} \succ \overrightarrow{H_1} \succ z_{i_3} \succ \overrightarrow{H'}
        \end{align*}
        
        \paragraph{Case 2.}$u = p.$  Let $H_1 \subseteq H$ such that $|H_1| = \ell (m, p-1) -1$ and $H' = H \setminus H_1$.
        \begin{align*}
                p'_i &= \overrightarrow{H'} \succ x_{i_1} \succ y_{i_2} \succ \overrightarrow{H_1} \succ z_{i_3} \succ \overrightarrow{C_i'}\\
                p_i &= \overrightarrow{H'} \succ y_{i_2} \succ \overrightarrow{H_1} \succ z_{i_3} \succ \overrightarrow{C_i'}
        \end{align*}
        
        \paragraph{Case 3.}$1 < u < p.$  Let $H_1 \subseteq H$ such that $|H_1| = \sum\limits_{i=1}^{u-2} \ell(m, i) + \ell(m, u-1) -1 $ and $H' = H \setminus H_1$.
        \begin{align*}
                p'_i &= \overrightarrow{H_1} \succ x_{i_1} \succ y_{i_2} \succ \overrightarrow{C'_i} \succ z_{i_3} \succ \overrightarrow{H'}\\
                p_i &= \overrightarrow{H_1} \succ y_{i_2} \succ \overrightarrow{C'_i} \succ z_{i_3} \succ \overrightarrow{H'}
        \end{align*}
        
        \item $\boldsymbol{P} = \bigcup_{i=1}^{l} p_i$ is a partial profile where each vote is a partial chain.

         $\boldsymbol{P}' = \bigcup_{i=1}^{l} p'_i$ is a total profile. 
        Moreover, each $p'_i$ extends $p_i.$ Let $s(\boldsymbol{P}', c) = \lambda_{\boldsymbol{P}'}.$ 
        Since $w$ is placed at a position greater $c$ in all the votes of $\boldsymbol{P}',$ we have $s(\boldsymbol{P}', w) < \lambda_{\boldsymbol{P}'}.$ 
    \end{itemize}

   \item Consider $C = X \cup Y \cup Z \cup \{ c \} \cup H \cup \{w\}$. Let $\{w\}$ be the set $D$ required in Lemma \ref{lemmaDM} and $\mathbf{R}$ be as follows.
   
\paragraph{Case 1.}$u = 1$
    \begin{itemize}
       \item $R_{x_i} = \delta_{1} + \delta_{2} - \left(s(\boldsymbol{P}', x_i) -\lambda_{\boldsymbol{P}'} \right)$, for $1 \leq i \leq q.$ 
       
       \item $R_{y_i} = - \delta_{1} - \left(s(\boldsymbol{P}', y_i) - \lambda_{\boldsymbol{P}'} \right)$, for $1 \leq i \leq q.$
       
       \item $R_{z_i} =  - \delta_{2} - \left(s(\boldsymbol{P}', z_i) - \lambda_{\boldsymbol{P}'} \right)$, for $1 \leq i \leq q.$ 
       
       \item $R_{c} = 0.$
       
       \item $R_{h} = 0 - \left(s(\boldsymbol{P}', h) - \lambda_{\boldsymbol{P}'} \right) $, for all $h \in H$.
   \end{itemize}
   
\paragraph{Case 2.}$u = p$
    \begin{itemize}
       \item $R_{x_i} = \delta_{p-2} + \delta_{p-1} - \left(s(\boldsymbol{P}', x_i) -\lambda_{\boldsymbol{P}'} \right)$, for $1 \leq i \leq q.$
       
       \item  $R_{y_i} = - \delta_{p-2} - \left(s(\boldsymbol{P}', y_i) - \lambda_{\boldsymbol{P}'} \right)$, for $1 \leq i \leq q.$
       
       \item  $R_{z_i} =  - \delta_{p-1} - \left(s(\boldsymbol{P}', z_i) - \lambda_{\boldsymbol{P}'} \right)$, for $1 \leq i \leq q.$
       
       \item $R_{c} = 0.$
       
       \item $R_{h} = 0 - \left(s(\boldsymbol{P}', h) - \lambda_{\boldsymbol{P}'} \right)$, for all $h \in H.$
   \end{itemize}

\paragraph{Case 3.}$1 < u < p$
   \begin{itemize}
       \item $R_{x_i} = \delta_{u-1} + \delta_{u} - \left(s(\boldsymbol{P}', x_i) -\lambda_{\boldsymbol{P}'} \right)$, for $1 \leq i \leq q.$ 
       
       \item $R_{y_i} = - \delta_{u-1} - \left(s(\boldsymbol{P}', y_i) - \lambda_{\boldsymbol{P}'} \right)$, for $1 \leq i \leq q.$
       
       \item $R_{z_i} =  - \delta_{u} - \left(s(\boldsymbol{P}', z_i) - \lambda_{\boldsymbol{P}'} \right)$, for $1 \leq i \leq q.$
       
       \item $R_{c} = 0.$
       
       \item $R_{h} = 0 - \left(s(\boldsymbol{P}', h) - \lambda_{\boldsymbol{P}'} \right) $, for all $h \in H.$
   \end{itemize}

   \item By Lemma \ref{lemmaDM}, there exist a $\lambda_{\boldsymbol{Q}} \in \mathbb{N}$ and a total profile $\boldsymbol{Q}$ which can be constructed in time polynomial in $m'$  such that the scores of the candidates in the profile $\boldsymbol{P}' \cup \boldsymbol{Q}$ are as follows. Let $\lambda_{\boldsymbol{P}'} + \lambda_{\boldsymbol{Q}} = \lambda.$
    
\paragraph{Case 1.}$u = 1$
\begin{itemize}
         \item For all $x \in X$, we have 
        $s( \boldsymbol{P}' \cup \boldsymbol{Q}, x) = s(\boldsymbol{P}', x) + s(\boldsymbol{Q},x)$ 
            \begin{align*}
                & = \left(\lambda_{\boldsymbol{P}'} + s(\boldsymbol{P}',x) - \lambda_{\boldsymbol{P}'} \right) + \left(\lambda_{\boldsymbol{Q}} +  R_x \right)  = \lambda + \delta_{2} +\delta_{1}.
            \end{align*}  
        
        \item For all $y \in Y$, we have
        $s( \boldsymbol{P}' \cup \boldsymbol{Q}, y) = s(\boldsymbol{P}', y) + s(\boldsymbol{Q},y)$ 
            \begin{align*}
                & = \left(\lambda_{\boldsymbol{P}'} + s(\boldsymbol{P}',y) - \lambda_{\boldsymbol{P}'} \right) + \left(\lambda_{\boldsymbol{Q}} +  R_y \right) = \lambda - \delta_{1}.
            \end{align*}

        \item For all $z \in Z$, we have 
        $s( \boldsymbol{P}' \cup \boldsymbol{Q}, z) = s(\boldsymbol{P}', z) + s(\boldsymbol{Q},z)$ 
            \begin{align*}
                & = \left(\lambda_{\boldsymbol{P}'} + s(\boldsymbol{P}',z) - \lambda_{\boldsymbol{P}'} \right) + \left(\lambda_{\boldsymbol{Q}} +  R_z \right) = \lambda - \delta_{2}.
            \end{align*}  
    
    \end{itemize} 
\paragraph{Case 2.}$u = p$
 
 \begin{itemize}
         \item For all $x \in X$, we have 
        $s( \boldsymbol{P}' \cup \boldsymbol{Q}, x) = s(\boldsymbol{P}', x) + s(\boldsymbol{Q},x)$ 
            \begin{align*}
                & = \left(\lambda_{\boldsymbol{P}'} + s(\boldsymbol{P}',x) - \lambda_{\boldsymbol{P}'} \right) + \left(\lambda_{\boldsymbol{Q}} +  R_x \right)  = \lambda + \delta_{p-2} +\delta_{p-1}.
            \end{align*}  
        
        \item For all $y \in Y$, we have 
        $s( \boldsymbol{P}' \cup \boldsymbol{Q}, y) = s(\boldsymbol{P}', y) + s(\boldsymbol{Q},y)$ 
            \begin{align*}
                & = \left(\lambda_{\boldsymbol{P}'} + s(\boldsymbol{P}',y) - \lambda_{\boldsymbol{P}'} \right) + \left(\lambda_{\boldsymbol{Q}} +  R_y \right) = \lambda - \delta_{p-2}.
            \end{align*}

        \item For all $z \in Z$, we have 
        $s( \boldsymbol{P}' \cup \boldsymbol{Q}, z) = s(\boldsymbol{P}', z) + s(\boldsymbol{Q},z)$ 
            \begin{align*}
                & = \left(\lambda_{\boldsymbol{P}'} + s(\boldsymbol{P}',z) - \lambda_{\boldsymbol{P}'} \right) + \left(\lambda_{\boldsymbol{Q}} +  R_z \right) = \lambda - \delta_{p-1}.
            \end{align*}  
 \end{itemize}

\paragraph{Case 3.}
    \begin{itemize}
         \item For all $x \in X$, we have 
        $s( \boldsymbol{P}' \cup \boldsymbol{Q}, x) = s(\boldsymbol{P}', x) + s(\boldsymbol{Q},x)$ 
            \begin{align*}
                & = \left(\lambda_{\boldsymbol{P}'} + s(\boldsymbol{P}',x) - \lambda_{\boldsymbol{P}'} \right) + \left(\lambda_{\boldsymbol{Q}} +  R_x \right)  = \lambda + \delta_{u} +\delta_{u-1}.
            \end{align*}  
        
        \item For all $y \in Y$, we have  
        $s( \boldsymbol{P}' \cup \boldsymbol{Q}, y) = s(\boldsymbol{P}', y) + s(\boldsymbol{Q},y)$ 
            \begin{align*}
                & = \left(\lambda_{\boldsymbol{P}'} + s(\boldsymbol{P}',y) - \lambda_{\boldsymbol{P}'} \right) + \left(\lambda_{\boldsymbol{Q}} +  R_y \right) = \lambda - \delta_{u-1}.
            \end{align*}

        \item For all $z \in Z$, we have 
        $s( \boldsymbol{P}' \cup \boldsymbol{Q}, z) = s(\boldsymbol{P}', z) + s(\boldsymbol{Q},z)$ 
            \begin{align*}
                & = \left(\lambda_{\boldsymbol{P}'} + s(\boldsymbol{P}',z) - \lambda_{\boldsymbol{P}'} \right) + \left(\lambda_{\boldsymbol{Q}} +  R_z \right) = \lambda - \delta_{u}.
            \end{align*}  
       
    \end{itemize}
    
    For all the cases, the score of candidate $c$, the dummy candidates in $H$, and $w$ in the profile $\boldsymbol{P}' \cup \boldsymbol{Q}$ are the same.
    \begin{itemize}
        \item $s( \boldsymbol{P}' \cup \boldsymbol{Q}, c) = s(\boldsymbol{P}', c) + s(\boldsymbol{Q},c)$ $= \lambda_{\boldsymbol{P}'} + \lambda_{\boldsymbol{Q}} = \lambda.$
        
        \item For all $h \in H,$ we have  \\
        $s( \boldsymbol{P}' \cup \boldsymbol{Q}, h) = s(\boldsymbol{P}', h) + s(\boldsymbol{Q},h)
        = \left(\lambda_{\boldsymbol{P}'} + s(\boldsymbol{P}',h) - \lambda_{\boldsymbol{P}'} \right) + \left(\lambda_{\boldsymbol{Q}} +  R_{h} \right) = \lambda.$
            
       \item $s( \boldsymbol{P}' \cup \boldsymbol{Q}, w) = s(\boldsymbol{P}', w) + s(\boldsymbol{Q},w)$ $< \lambda_{\boldsymbol{P}'} + \lambda_{\boldsymbol{Q}}  < \lambda.$
    \end{itemize}
    \item We let $C,$ the profile $\boldsymbol{V} =  \boldsymbol{P} \cup \boldsymbol{Q}, \text{ and } c$ be the input to the \PWPC~problem.
\end{enumerate}
\end{reduction}
\begin{lemma}
\label{lem:p_val_1}
Let $r$ be a $p$-valued scoring rule, where $p \geq 3$. Reduction \ref{red:3val} is a polynomial time reduction of {\sc 3DM } to \emph{\PWPC}~ w.r.t.\ $r$.
\end{lemma}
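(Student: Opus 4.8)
The plan is to verify the two defining properties of a reduction: that the construction runs in polynomial time, and that the {\sc 3DM} instance $\mathcal I$ is a yes-instance iff $c$ is a possible winner in the constructed \PWPC~instance. I would carry out the argument in full only for Case 3 ($1<u<p$) and then observe that Cases 1 and 2 are the boundary analogues in which the length-$3q$ block sits at the very top or very bottom; the same bookkeeping applies after substituting $(\delta_1,\delta_2)$, respectively $(\delta_{p-2},\delta_{p-1})$, for $(\delta_{u-1},\delta_u)$. For the running-time claim, I would first invoke Proposition \ref{prop:p_val_len} with $\gamma=3q$ to get a scoring vector of length $m\le 3qp$, which is polynomial in $q$ since $p$ is fixed; then note that $\boldsymbol P$ and $\boldsymbol P'$ have $t\le q^3$ votes each, and finally check that every $R_v$ has the promised shape $\sum_k l_k\delta_k+\sum_k h_k s_k$ with $\sum_k l_k\le O(m)$ and $\sum_k h_k\le t\le O(m^3)$, so that $\sum_j|\eta_{v,j}|\le O(m^4)$ and Lemma \ref{lemmaDM} builds $\boldsymbol Q$ and $\lambda_{\boldsymbol Q}$ in time polynomial in $m$, giving the scores listed in step 4.

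Before the correctness argument I would record two structural facts. The first is the analogue of Proposition \ref{c_equals_phi_t2}: in every completion $c$ keeps score exactly $\lambda$. This holds because $c$ lies inside the block of $3q$ positions carrying the value $a_u$, and the relation $c\succ w$ keeps $w$ just below $c$ inside that block; completing $p_i$ only inserts the single candidate $x_{i_1}$, which can push $c$ down by at most one position, and the buffer $w$ prevents $c$ from ever leaving the $a_u$-block. The second fact is that $w$ can never gain: sitting in the $a_u$-block of every $p_i$, it can only be pushed down by the insertion of $x_{i_1}$, so $\Delta(w)\le 0$, where for any completion $\overline{\boldsymbol P}$ I write $\Delta(v)$ for the score of $v$ in $\overline{\boldsymbol P}\cup\boldsymbol Q$ minus its score in $\boldsymbol P'\cup\boldsymbol Q$.

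For the easy ($\Longleftarrow$) direction I would take a matching $\mathscr S'$ and complete $p_i$ to the active order $\overrightarrow{H_1}\succ y_{i_2}\succ\overrightarrow{C'_i}\succ z_{i_3}\succ x_{i_1}\succ\overrightarrow{H'}$ when $S_i\in\mathscr S'$ and to $p'_i$ otherwise. A direct computation shows that an active vote contributes $-\delta_{u-1}-\delta_u$ to $x_{i_1}$, $+\delta_{u-1}$ to $y_{i_2}$, $+\delta_u$ to $z_{i_3}$, and $0$ to everyone else; since $\mathscr S'$ covers each element exactly once, every element candidate ends at $\lambda$ while $c$ and the dummies stay at $\lambda$ and $w$ stays below, so $c$ wins. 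For the harder ($\Longrightarrow$) direction I would argue by a conservation-plus-tightness count: scores are conserved vote by vote, so $\sum_v\Delta(v)=0$; winning forces $\Delta(x)\le-(\delta_{u-1}+\delta_u)$, $\Delta(y)\le\delta_{u-1}$, $\Delta(z)\le\delta_u$, $\Delta(h)\le 0$, and, by the second fact, $\Delta(w)\le 0$. These upper bounds sum to exactly $0$, so every one of them is met with equality; in particular $\Delta(w)=0$ and $\Delta(h)=0$ for all dummies. I would then run a per-vote case analysis on where $x_{i_1}$ is inserted to show that tightness admits only the passive completion $p'_i$ or the active completion above: inserting $x_{i_1}$ within the $a_u$-block would force $\Delta(w)<0$, and inserting it below the $a_{u+1}$-block would force some $\Delta(h)>0$. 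The equalities $\Delta(x)=-(\delta_{u-1}+\delta_u)$, $\Delta(y)=\delta_{u-1}$, $\Delta(z)=\delta_u$ then say that each element occurs as the matching coordinate of exactly one active triple, so the active triples form a perfect matching of size $q$.

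The hardest part will be this last per-vote case analysis in the ($\Longrightarrow$) direction. The aggregate count pins down the totals, but I still need to check block by block that the two one-sided constraints — $w$ cannot rise and no dummy may rise — leave no insertion of $x_{i_1}$ other than the two intended ones consistent with tightness; this is exactly where the precise choice of the block $a_u$, the lengths of $H_1$ and $H'$, and the placement $c\succ w$ are all used, and where the three cases $u=1$, $u=p$, $1<u<p$ genuinely differ, since the blocks lying above and below the insertion region change at the two boundaries.
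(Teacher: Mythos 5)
Your skeleton is largely sound: the polynomial-time accounting, the forward direction with the active completion $\overrightarrow{H_1}\succ y_{i_2}\succ\overrightarrow{C'_i}\succ z_{i_3}\succ x_{i_1}\succ\overrightarrow{H'}$, and the invariance of $c$'s score all match Reduction \ref{red:3val}, and your conservation-plus-tightness derivation of the exact equalities $\Delta(x)=-(\delta_{u-1}+\delta_u)$, $\Delta(y)=\delta_{u-1}$, $\Delta(z)=\delta_u$, $\Delta(h)=0$, $\Delta(w)=0$ is a cleaner organization of what the paper argues sequentially. But the step you yourself flag as the hardest one rests on a false claim, and it is load-bearing. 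In Case 3 the set $C'_i$ has only $3q-1$ members while the $a_u$-block has $\ell(m,u)=3q$ positions (in $p'_i$ the block is filled by $y_{i_2}$ together with $C'_i$), so once $x_{i_1}$ is removed there is one position of slack: if $x_{i_1}$ is re-inserted anywhere \emph{inside} the $a_u$-block, then $y_{i_2}$ rises out of the block (gaining $\delta_{u-1}$), $x_{i_1}$ takes a position inside the block (losing only $\delta_{u-1}$), and every member of $C'_i$ --- including $w$ and $c$ --- remains inside the block with score $a_u$ unchanged. Hence such an insertion does \emph{not} force $\Delta(w)<0$. This ``partial drop'' completion satisfies both of your one-sided constraints ($w$ cannot rise, no dummy rises), so your per-vote case analysis fails to exclude it; and without excluding it, the equality $\Delta(x)=-(\delta_{u-1}+\delta_u)$ no longer forces one active vote per $x$ (for instance, when $\delta_{u-1}=\delta_u$ two partial drops achieve the same total). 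The same phenomenon occurs in Cases 1 and 2, where the partial drop inserts $x_{i_1}$ among the dummies of $H_1$. A separate, smaller omission: your analysis never considers inserting $x_{i_1}$ \emph{above} its original position (where it would gain points at the expense of $H_1$ boundary dummies); this is excluded by the $H_1$-side analogue of your $H'$ argument, since $H_1$ dummies can only lose or stay per vote and their totals are pinned at zero.

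The gap is repaired by counting rather than by per-vote exclusion. A candidate $z\in Z$ gains (exactly $\delta_u$) only in a vote where $x_{i_1}$ is inserted below $z_{i_3}$, i.e., below the whole $a_u$-block; so the $q$ equalities $\Delta(z)=\delta_u$ force exactly $q$ full-drop votes. A candidate $y\in Y$ gains $\delta_{u-1}$ in every vote where $x_{i_1}$ drops at all, partially or fully; so the $q$ equalities $\Delta(y)=\delta_{u-1}$ allow exactly $q$ dropping votes in total. Comparing the two counts shows the number of partial drops is zero, after which your final inference (each element lies in exactly one active triple, giving a matching of size $q$) goes through. This counting is essentially what the paper does in steps 3--4 of its proof of Lemma \ref{lem:p_val_1}: every drop of an $x$ forces the $y$ of that vote to gain the maximum it can without defeating $c$, each $y$ can gain only once, and there are $q$ candidates in each of $X$ and $Y$, so each $x$ drops exactly once and that single drop must be a full one.
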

\begin{proof}
We prove the ``$\implies$" direction first. Let $(\mathcal{X}, \mathcal{Y}, \mathcal{Z}, \mathscr{S})$ be a positive instance of {\sc 3DM}. Let $\mathscr{S'} \subseteq \mathscr{S}$ be the cover. Recall that $|\mathscr{S}' |= q.$ We construct a \PWPC \ instance as in Reduction \ref{red:3val} and show that $c$ is, indeed, a possible winner. 

\begin{enumerate}

    \item We extend each partial vote $p_i \in \boldsymbol{P}$ as follows.
    \paragraph{Case 1.}$u=1$
    \begin{align*}
                p^*_i &= \overrightarrow{C'_i} \succ y_{i_2} \succ \overrightarrow{H_1} \succ z_{i_3} \succ x_{i_1}  \succ \overrightarrow{H'} \text{ if } S_i \in \mathscr{S}'\\
                p^*_i &= \overrightarrow{C'_i} \succ x_{i_1} \succ y_{i_2} \succ \overrightarrow{H_1} \succ z_{i_3} \succ \overrightarrow{H'} \text{ if } S_i \notin \mathscr{S}'
        \end{align*}
    
    \paragraph{Case 2.}$u = p$
    \begin{align*}
                p^*_i &= \overrightarrow{H'} \succ y_{i_2} \succ \overrightarrow{H_1} \succ z_{i_3} \succ x_{i_1} \succ \overrightarrow{C_i'} \text{ if } S_i \in \mathscr{S}'\\
                p^*_i &= \overrightarrow{H'} \succ x_{i_1} \succ y_{i_2} \succ \overrightarrow{H_1} \succ z_{i_3} \succ \overrightarrow{C_i'} \text{ if } S_i \notin \mathscr{S}'
        \end{align*}
   
   \paragraph{Case 3.}$1 < u < p$
   \begin{align*}
                p^*_i &= \overrightarrow{H_1}  \succ y_{i_2} \succ \overrightarrow{C'_i} \succ z_{i_3} \succ x_{i_1} \succ \overrightarrow{H'} \text{ if } S_i \in \mathscr{S}'\\
                p^*_i &= \overrightarrow{H_1} \succ x_{i_1} \succ y_{i_2} \succ \overrightarrow{C'_i} \succ z_{i_3} \succ \overrightarrow{H'} \text{ if } S_i \notin \mathscr{S}'
        \end{align*}
        
    Let $\boldsymbol{P}^* = \bigcup_{i=1}^{l} p^*_i$.
    Note that the sore of $c$ does not change in any extension and is, therefore, $\lambda$.

    \item  Now, we compute the scores of all the candidates in the completed profile to verify that candidate $c$ is, indeed, a possible winner. We show the detailed computation for Case 1. The other two cases are similar.
    
    \paragraph{Case 1.}$u = 1$
    \\
    The following are the scores of the candidates in the profile $\boldsymbol{P}^* \cup \boldsymbol{Q}$. Recall, that $s(\boldsymbol{P}^* \cup \boldsymbol{Q}, c) = \lambda$.
    \begin{itemize}
        \item For all $x \in X,$ we have 
        $s(\boldsymbol{P}^* \cup \boldsymbol{Q}, x) = s(\boldsymbol{P}^* , x) + s(\boldsymbol{Q}, x) = s(\boldsymbol{P}' , x) - (\delta_{2} + \delta_{1}) + s(\boldsymbol{Q}, x) $
        \begin{align*}
            =& \left(\lambda_{\boldsymbol{P}'} + s(\boldsymbol{P}',x) - \lambda_{\boldsymbol{P}'} \right) - (\delta_{2} + \delta_{1}) + \left(\lambda_{\boldsymbol{Q}} +  R_x \right) = \lambda.
        \end{align*}
        
        \item For all $y \in Y,$ we have 
        $s(\boldsymbol{P}^* \cup \boldsymbol{Q}, y) = s(\boldsymbol{P}^* , y) + s(\boldsymbol{Q}, y) = s(\boldsymbol{P}' , y) + \delta_{p-2} + s(\boldsymbol{Q}, y).$
        \begin{align*}
            =& \left(\lambda_{\boldsymbol{P}'} + s(\boldsymbol{P}',y) - \lambda_{\boldsymbol{P}'} \right) + \delta_{1} + \left(\lambda_{\boldsymbol{Q}} +  R_y \right) = \lambda.
        \end{align*}
        
        \item For all $z \in Z,$ we have 
        $s(\boldsymbol{P}^* \cup \boldsymbol{Q}, z) = s(\boldsymbol{P}^* , z) + s(\boldsymbol{Q}, z) = s(\boldsymbol{P}' , z) + \delta_{p-1} + s(\boldsymbol{Q}, z) .$
        \begin{align*}
            =& \left(\lambda_{\boldsymbol{P}'} + s(\boldsymbol{P}',z) - \lambda_{\boldsymbol{P}'} \right) + \delta_{2} + \left(\lambda_{\boldsymbol{Q}} +  R_z \right) = \lambda.
        \end{align*}
        
        \item $s(\boldsymbol{P}^* \cup \boldsymbol{Q}, c) = s(\boldsymbol{P}^* , c) + s(\boldsymbol{Q}, c)  = s(\boldsymbol{P}' , c) + s(\boldsymbol{Q}, c) 
     = \lambda_{\boldsymbol{P}'} + \lambda_{\boldsymbol{Q}} = \lambda.$
     
     \item For all $c' \in H,$ we have  $s(\boldsymbol{P}^* \cup \boldsymbol{Q}, c') = s(\boldsymbol{P}^* , c') + s(\boldsymbol{Q}, c')  = s(\boldsymbol{P}' , c') + 0 + s(\boldsymbol{Q}, c')$
        \begin{align*}
            =& \left(\lambda_{\boldsymbol{P}'} + s(\boldsymbol{P}',c')
            - \lambda_{\boldsymbol{P}'} \right) + 0 + \left(\lambda_{\boldsymbol{Q}} +  R_{c'} \right) = \lambda.
        \end{align*}

        \item $s( \boldsymbol{P}' \cup \boldsymbol{Q}, w) = s(\boldsymbol{P}', w) + s(\boldsymbol{Q},w)$ $< \lambda_{\boldsymbol{P}'} + \lambda_{\boldsymbol{Q}} < \lambda.$ 
    \end{itemize}
    Therefore, $c$ is a possible winner.
    
\end{enumerate}
In the other direction, we prove the correctness of the reduction for Case 1 in full detail. The other two cases are similar.
\\
Given a {\sc 3DM} instance $(\mathcal{X}, \mathcal{Y}, \mathcal{Z}, \mathscr{S}),$ we construct a \PWPC~ instance, $(C, \boldsymbol{V} = \boldsymbol{P} \cup \boldsymbol{Q}, c),$ according to the above reduction.
Assume that the \PWPC~ instance  $(C, \boldsymbol{V} = \boldsymbol{P} \cup \boldsymbol{Q}, c)$ is a positive one. Therefore, there exists a total profile $P^* = \bigcup_{i=1}^{l} p^*_i$ such that  

\begin{itemize}
    \item for all $1 \leq i \leq t,$ the vote $p^*_i \text{ extends } p_i$;
    \item $c$ is a possible winner. Moreover, no matter how the partial orders are completed, the score of $c$ is $\lambda.$ 
\end{itemize}

When we say that a candidate ``gains" or ``loses" points, it is in relation to the complete profile $\boldsymbol{P}'$ in the reduction. 
\begin{enumerate}
    
    \item  For $1\leq i\leq q$, each element candidate $x_i$ in $X$, has to lose at least $(\delta_{2} + \delta_{1})$ points. Therefore, it has to be in a position greater than $\ell (m, 1)$ in at least one vote. Assume, for now, that each $x_i$  loses at least $(\delta_{2} + \delta_{1})$ points in one vote, i.e., it is in a position greater than or equal to $\ell (m, 1) + \ell (m, 2)$.  
    Let these $q$ votes be $p_{k_1}, \hdots, p_{k_q}$ where $1 \leq k_i \leq t$ and $K = \{ k_i | 1 \leq i \leq q \}$.
    
    \item For each $i \in K$, in the completion $p^*_{i}$ , the element  candidate $x_{i_1}$  loses the points (and, therefore, is in position greater than $\ell (m, 1) + \ell (m, 2)$), candidates $z_{i_3}$ and $y_{i_2}$ gain $\delta_{1}$ and $\delta_{2}$ points respectively.
    
    \item By construction, each element candidate of $Y$ can gain at most $\delta_{1}$ points, and each element candidate of $Z$ can gain at most $\delta_{2}$ points. Moreover, there are no votes where these element candidates can lose points. Therefore, the element candidates of $Y$ and the element candidates of $Z$, which gain points in the $q$ votes in $K$ must be distinct. 
    
    \item We had assumed that each element candidates of $X$ loses at least $(\delta_{2} + \delta_{1})$ points in one vote. Observe that whenever $x \in X$ is in a position greater than $\ell (m', 1),$ an element candidate of $Y$ gains the maximum points it can without defeating $c$, i.e., $\delta_1$ points. Since there are $q$ element candidates in $X$ and $q$ element candidates in $Y$, every time an element candidate of $Y$ gains $\delta_1$ points, an element candidate of $X$ must lose at least $(\delta_1 + \delta_2)$ points. 
    
    \item The remaining partial votes in $\boldsymbol{P}$ ($p_i$ for $1 \leq i \leq t$ and $i \notin K$), must have the same completion as in $\boldsymbol{P}'.$ 
   
     \item Therefore, the set $\{S_i | i \in K \}$ must form a cover for $\mathcal{X} \cup \mathcal{Y} \cup \mathcal{Z}.$
\end{enumerate}
\vspace{-2pc}
\end{proof}
%
\subsection{Hardness of \PWPC~ w.r.t.\ unbounded rules}
In this section, we focus on unbounded rules. The Borda count is an example of such rules. As noted earlier, unbounded scoring rules may have score values which repeat in blocks. Moreover, unlike Borda count, the score values can be non-uniformly decreasing. Recall, that for a scoring vector of length $m$, with $m'$ distinct score values, the function $\ell (m, j)$ returns the number of times the distinct score value $a_j$ repeats in a block, for $1 \leq j \leq m'$. Schematically, such a scoring vector can be represented as 

$$\left( \underbrace{a_1, \hdots, a_1}_{\ell (m,1)}, \underbrace{a_{2}, \hdots, a_{2}}_{\ell(m , 2)}, \hdots, \underbrace{ a_{m'}, \hdots, a_{m'}}_{\ell (m, m')} \right).$$

Now, we prove a fundamental property of scoring vectors of all unbounded rules.

\begin{proposition}
\label{prop:ub_val_len}
Let $r$ be a positional scoring rule and let $\gamma$ and $\beta$ be two positive integers greater than $1$.  Consider the scoring vector $\boldsymbol{s}_m$ of $r$ with length $m = \gamma \beta$. Then either $\boldsymbol{s}_m$ contains at least $\beta$ distinct values or there exists $1 \leq u \leq \gamma \beta $ such that $\ell(\gamma \beta, u) \geq  \gamma$.
\end{proposition}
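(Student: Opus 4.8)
The plan is to argue by a pigeonhole/counting principle, viewing the lengths of the blocks of repeated score values as a partition of the total length $m = \gamma\beta$. Write $m'$ for the number of distinct score values occurring in $\boldsymbol{s}_m$, so that the block lengths satisfy
\[
\sum_{j=1}^{m'} \ell(\gamma\beta, j) = \gamma\beta,
\]
since every one of the $m = \gamma\beta$ positions lies in exactly one block. This identity is the crux: it supplies an \emph{exact} value, not merely an upper bound, for the sum of the block lengths.

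I would then prove the contrapositive of the disjunction. Suppose that neither alternative holds, i.e.\ that $\boldsymbol{s}_m$ has strictly fewer than $\beta$ distinct values, so $m' \leq \beta - 1$, and that every block is short, so $\ell(\gamma\beta, u) \leq \gamma - 1$ for all $1 \leq u \leq m'$. Under these two assumptions the total length is bounded above by
\[
\gamma\beta = \sum_{j=1}^{m'} \ell(\gamma\beta, j) \leq m'(\gamma - 1) \leq (\beta - 1)(\gamma - 1).
\]

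The final step is the elementary inequality $(\beta - 1)(\gamma - 1) < \gamma\beta$, which holds because $\gamma\beta - (\beta - 1)(\gamma - 1) = \beta + \gamma - 1 \geq 1 > 0$. This contradicts the displayed chain $\gamma\beta \leq (\beta-1)(\gamma-1)$, and hence at least one of the two stated alternatives must hold. I do not expect any genuine obstacle here; the argument is purely arithmetic. The only point demanding care is the bookkeeping that the block lengths \emph{partition} all $m$ positions, so that their sum equals $\gamma\beta$ exactly, which is precisely what makes $\gamma\beta$ available as the left-hand side of the inequality and turns the bound into a contradiction. I would also remark that the hypotheses $\gamma, \beta > 1$ are not needed for the contradiction itself, since the gap $\beta + \gamma - 1$ is already positive for $\gamma, \beta \geq 1$; they serve only to make the two alternatives nontrivial in the intended application.
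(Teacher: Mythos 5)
Your proof is correct and follows essentially the same route as the paper's: both argue that if there are fewer than $\beta$ distinct values and every block of equal scores (which must be contiguous by monotonicity) has length less than $\gamma$, then the block lengths cannot sum to $\gamma\beta$, a contradiction. Your write-up merely makes the paper's counting argument explicit via the chain $\gamma\beta \leq (\beta-1)(\gamma-1) < \gamma\beta$.
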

\begin{proof}
If the scoring vector of length $\gamma \beta$ contains at least $\beta$ distinct values we are done. Assume it contains fewer than $\beta$ distinct values. But, by the monotonicity of the rules, if two score values are the same, then they must be in the same block. So, we must have a block in which the same score value repeats more than $\gamma$ times, else the total length would be less than $\gamma \beta$, i.e., there exists $u \leq \gamma \beta$ such that  $\ell(\gamma \beta, u) \geq  \gamma$.
\end{proof}

\begin{reduction}
\label{red:PWPC_ub}
Let $(\mathcal{X}, \mathcal{Y}, \mathcal{Z}, \mathscr{S})$ be a 3DM instance where and $\mathscr{S} = \{S_1,  \hdots S_t\} \subseteq \mathcal{X} \times \mathcal{Y} \times \mathcal{Z}$ such that $S_i = (x_{i_1}, y_{i_2}, z_{i_3})$, for $1 \leq i \leq t.$ 
Let $\boldsymbol{s}_{m}$ be the scoring vector of length $m = (3q+4)(3q)$. By Proposition \ref{prop:ub_val_len}, we need to consider the following two cases.
\begin{itemize}
    \item Case 1. There exists a $u$ such that $\ell(m, u) = 3q.$
    \item Case 2. There are $m' = 3q + 4$ distinct values.
\end{itemize}
For Case 1, the reduction mimics Reduction \ref{red:3val} to create a \PWPC~ instance. For Case 2, the reduction proceeds as follows.

Let $a_1 > a_2 > \hdots > a_{m'}$ be the $m'$ distinct values. We define $\boldsymbol{\delta} = (\delta_1, \hdots, \delta_{m'-1})$ where, $\delta_j = a_j - a_{j+1}$, for $1 \leq j < m'$. 

We construct the following instance of the \PWPC~problem.
\begin{enumerate}
    \item The set of candidates is $C = X \cup Y \cup Z \cup  \{c, g, d,  w\} \cup H$ where $X$, $Y$, and $Z$ contains candidates corresponding to the elements in $\mathcal{X}, \mathcal{Y}, \text{ and } \mathcal{Z}$ respectively. These candidates are called \emph{elements candidates}. The set $H$ contains dummy candidates such that $|H| = m-m'.$
    
    \item We construct the partial profile $\boldsymbol{P}$ as follows.
    \begin{itemize}
        \item Let the set $H$ be partitioned into $H_1, \hdots, H_{m'}$, such that $|H_j| = \ell (m, j) -1$, for $1 \leq j \leq m'$. For each $S_i = (x_{i_1}, y_{i_2}, z_{i_3} )$, let $C_i' = C \setminus \left( \{x_{i_1}, y_{i_2}, z_{i_3} \} \cup \{g, d\} \cup \bigcup\limits_{j = m'-4}^{m'} H_j \right)$ and $\overrightarrow{C_i'}$ be such that the dummy candidates in $H_j$ are in a position with score value $a_j$, for $1 \leq j \leq m'-3$ and candidate $c$ is ranked lower than candidate $w$.
        \begin{align*}
            p'_i &= \overrightarrow{C_i'} \succ g \succ \overrightarrow{H}_{m'-4} \succ d \succ \overrightarrow{H}_{m'-3} \succ x_{i_1} \succ \overrightarrow{H}_{m'-2} \succ y_{i_2} \succ \overrightarrow{H}_{m'-1} \succ z_{i_3} \succ \overrightarrow{H}_{m'} \\
            p_i &= \overrightarrow{C_i'} \succ \overrightarrow{H}_{m'-4} \succ d \succ \overrightarrow{H}_{m'-3} \succ x_{i_1} \succ \overrightarrow{H}_{m'-2} \succ y_{i_2} \succ \overrightarrow{H}_{m'-1} \succ z_{i_3} \succ \overrightarrow{H}_{m'}
        \end{align*}
        \item $\boldsymbol{P} = \bigcup_{i=1}^{l} p_i$ is partial profile where each vote is a \linearpartial~where only one candidate ($g$) has been dropped.
        
        \item $\boldsymbol{P}' = \bigcup_{i=1}^{l} p_i'$ is a total profile. Moreover, each $p'_i$ extends $p_i.$ Let $s(\boldsymbol{P}', c) = \lambda_{\boldsymbol{P}'}.$ Observe that $s(\boldsymbol{P}', w) < \lambda_{\boldsymbol{P}'}$ since $w$ is in a position greater $c$ in all $\overrightarrow{C_i'}$, for $1 \leq i \leq t.$ 
    \end{itemize}
    
    \item Consider $C = X \cup Y \cup Z \cup \{c, g,  d\} \cup \{w\}$. Let $\{w\}$ be the set $D$ required in Lemma \ref{lemmaDM} and $\mathbf{R}$ be as follows
    \begin{itemize}
        \item  $R_{x_i} = -\delta_{m'-3}- \left(s(\boldsymbol{P}', x_i) - \lambda_{\boldsymbol{P}'} \right)$, for $1 \leq i \leq q$.
        
        \item $R_{y_i} = -\delta_{m'-2}- \left(s(\boldsymbol{P}', y_i) - \lambda_{\boldsymbol{P}'} \right)$, for $1 \leq i \leq q$.
        
        \item  $R_{z_i} = -\delta_{m'-1}- \left(s(\boldsymbol{P}', z_i) - \lambda_{\boldsymbol{P}'} \right)$, for $1 \leq i \leq q$.
        \item $R_{c} = 0.$
        
         \item $R_{g} = q \left( \sum\limits_{j = 1}^{4}\delta_{m'- j} \right) - \left(s(\boldsymbol{P}',g) - \lambda_{\boldsymbol{P}'}\right).$
        
        \item $R_d = - q(\delta_{m'-4}) - (s(\boldsymbol{P}' ,d) - \lambda_{\boldsymbol{P}'}).$
        
        \item $R_{h} = 0 - s(\boldsymbol{P}', h) - \lambda_{\boldsymbol{P}'} $, for all $h \in H.$
       
    \end{itemize}
    \item By Lemma \ref{lemmaDM}, there exist a $\lambda_{\boldsymbol{Q}} \in \mathbb{N}$ and a total profile $\boldsymbol{Q}$ which can be constructed in time polynomial in $m'$ such that the scores of the candidates in the profile $ \boldsymbol{P}' \cup \boldsymbol{Q}$ are as follows. Let $\lambda_{\boldsymbol{P}'} + \lambda_{\boldsymbol{Q}} = \lambda.$
    \begin{itemize}
        \item For all $x \in X $, we have 
        $s( \boldsymbol{P}' \cup \boldsymbol{Q}, x) = s(\boldsymbol{P}', x) + s(\boldsymbol{Q},x)$ 
            \begin{align*}
                & = \left(\lambda_{\boldsymbol{P}'} + s(\boldsymbol{P}',x) - \lambda_{\boldsymbol{P}'} \right) + \left(\lambda_{\boldsymbol{Q}} +  R_x \right) 
                 = \lambda - \delta_{m'-3}.
            \end{align*} 
            
        \item For all $y \in Y $, we have  
        $s( \boldsymbol{P}' \cup \boldsymbol{Q}, y) = s(\boldsymbol{P}', y) + s(\boldsymbol{Q},y)$ 
            \begin{align*}
                & = \left(\lambda_{\boldsymbol{P}'} + s(\boldsymbol{P}',y) - \lambda_{\boldsymbol{P}'} \right) + \left(\lambda_{\boldsymbol{Q}} +  R_y \right)  = \lambda - \delta_{m'-2}.
            \end{align*} 
            
        \item For all $z \in Z $,  we have 
        $s( \boldsymbol{P}' \cup \boldsymbol{Q}, z) = s(\boldsymbol{P}', z) + s(\boldsymbol{Q},z)$ 
            \begin{align*}
                & = \left(\lambda_{\boldsymbol{P}'} + s(\boldsymbol{P}',z) - \lambda_{\boldsymbol{P}'} \right) + \left(\lambda_{\boldsymbol{Q}} +  R_z \right)  = \lambda - \delta_{m'-1}.
            \end{align*} 
    
        \item $s( \boldsymbol{P}' \cup \boldsymbol{Q}, c) = s(\boldsymbol{P}', c) + s(\boldsymbol{Q},c) = \lambda_{\boldsymbol{P}'} + \lambda_{\boldsymbol{Q}} = \lambda.$

        \item $s( \boldsymbol{P}' \cup \boldsymbol{Q}, g) = s(\boldsymbol{P}', g) + s(\boldsymbol{Q},g)
        = \left(\lambda_{\boldsymbol{P}'} + s(\boldsymbol{P}',g) - \lambda_{\boldsymbol{P}'} \right) + \left(\lambda_{\boldsymbol{Q}} +  R_g \right) 
        = \lambda + q \left( \sum\limits_{j = 1}^{4}\delta_{m'-j} \right).$ 
            
         \item $s( \boldsymbol{P}' \cup \boldsymbol{Q}, d) = s(\boldsymbol{P}', d) + s(\boldsymbol{Q},d) = \left(\lambda_{\boldsymbol{P}'} + s(\boldsymbol{P}',d) - \lambda_{\boldsymbol{P}'} \right) + \left(\lambda_{\boldsymbol{Q}} +  R_d \right) = \lambda - q(\delta_{m'-4}).$ 
      
        \item For all $h \in H,$  we have   \\
        $s( \boldsymbol{P}' \cup \boldsymbol{Q}, h) = s(\boldsymbol{P}', h) + s(\boldsymbol{Q},h) = \left(\lambda_{\boldsymbol{P}'} + s(\boldsymbol{P}',h) - \lambda_{\boldsymbol{P}'} \right) + \left(\lambda_{\boldsymbol{Q}} +  R_h \right)  = \lambda.$
        
       \item $s( \boldsymbol{P}' \cup \boldsymbol{Q}, w) = s(\boldsymbol{P}', w) + s(\boldsymbol{Q},w) < \lambda_{\boldsymbol{P}'} + \lambda_{\boldsymbol{Q}} 
                 < \lambda.$ 
                 
    \end{itemize}

    \item We let $C,$ the profile $\boldsymbol{V} =  \boldsymbol{P} \cup \boldsymbol{Q}, \text{ and } c$ be the input to the \PWPC~ problem.
\end{enumerate} 
\end{reduction}
The following propositions follow quite naturally from the construction of the partial profile in the above reduction (Reduction \ref{red:PWPC_ub}).
\begin{proposition}
\label{prop:c_less_equal_phi_SDR}
Let $\boldsymbol{P}, \boldsymbol{P}',$ and $\boldsymbol{Q}$ be the profiles as constructed in Reduction \ref{red:PWPC_ub}. For all total profiles $\overline{\boldsymbol{P}}$ that extend $\boldsymbol{P}$, we have $s(\overline{\boldsymbol{P}} \cup \boldsymbol{Q}, c) \leq s(\boldsymbol{P}' \cup \boldsymbol{Q}, c) \leq \lambda.$ 
\end{proposition}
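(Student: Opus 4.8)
The plan is to use the fact that each partial chain $p_i$ is obtained from the total order $p'_i$ by deleting the single candidate $g$, so that the only freedom in completing $p_i$ is where $g$ is reinserted. First I would record this structural observation: since $p_i$ is a linear order on $C \setminus \{g\}$ whose restriction to $C \setminus \{g\}$ coincides with that of $p'_i$, every completion $\overline{p_i}$ of $p_i$ agrees with $p'_i$ on the relative order of all candidates other than $g$, and differs only in the placement of $g$.

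Next I would pin down how the placement of $g$ affects the score of $c$. In $p'_i$ the candidate $c$ lies inside the top block $\overrightarrow{C_i'}$, and since $\overrightarrow{C_i'} \succ g$, we have $c \succ g$ in $p'_i$. Reinserting $g$ into $\overline{p_i}$ therefore changes the position of $c$ in at most one way: if $g$ is placed so that $g \succ c$, then $c$ is shifted to the next lower position, and otherwise $c$ occupies exactly the position it had in $p'_i$. Because only the one candidate $g$ is being inserted, $c$ can drop by at most one position and can never rise. Since the score values satisfy $s_1 \geq s_2 \geq \cdots$, this gives $s(\overline{p_i}, c) \leq s(p'_i, c)$ for each $i$.

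Finally I would sum these per-vote inequalities over $1 \leq i \leq l$ and add the contribution $s(\boldsymbol{Q}, c)$, which is identical in the two profiles because $\boldsymbol{Q}$ is a fixed total profile. This yields $s(\overline{\boldsymbol{P}} \cup \boldsymbol{Q}, c) \leq s(\boldsymbol{P}' \cup \boldsymbol{Q}, c)$, and the right-hand side equals $\lambda$ by the score computation already carried out in Reduction \ref{red:PWPC_ub}, establishing the full chain $s(\overline{\boldsymbol{P}} \cup \boldsymbol{Q}, c) \leq s(\boldsymbol{P}' \cup \boldsymbol{Q}, c) \leq \lambda$.

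I do not expect a genuine obstacle: the argument is pure position bookkeeping. The only points needing care are verifying that $c \succ g$ holds in $p'_i$, so that reinserting $g$ can never improve $c$'s position, and noting that exactly one candidate is reinserted, so the drop is by at most one position. Both are immediate from the form of $p'_i$ and $p_i$ in the reduction.
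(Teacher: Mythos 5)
Your proposal is correct and matches the paper's intent: the paper offers no explicit proof (it merely asserts that the proposition ``follows quite naturally from the construction''), and your bookkeeping argument---only $g$ is missing from each $p_i$, $c \succ g$ holds in $p'_i$ because $c$ lies in $\overrightarrow{C_i'}$, so reinserting $g$ can only leave $c$'s position unchanged or push it down by one, whence $s(\overline{p_i},c) \leq s(p'_i,c)$ by monotonicity of the score values---is precisely the omitted justification. The only caveat worth noting is that your argument (like the proposition itself, whose quantities $g$ and $\lambda$ are defined only there) pertains to Case~2 of Reduction~\ref{red:PWPC_ub}; Case~1 is handled separately in the paper via Lemma~\ref{lem:p_val_1}, so this is consistent with how the proposition is actually invoked.
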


\begin{proposition}
\label{prop:c_equals_phi_SDR}
Let $\boldsymbol{P}, \boldsymbol{P}',$ and $\boldsymbol{Q}$ be the profiles as constructed in Reduction \ref{red:PWPC_ub}. For all total profiles $\overline{\boldsymbol{P}}$ that extend $\boldsymbol{P}$, if $c$ is a possible winner in $\overline{\boldsymbol{P}} \cup \boldsymbol{Q}$ then $s(\overline{\boldsymbol{P}} \cup \boldsymbol{Q}, c) = s(\boldsymbol{P} \cup \boldsymbol{Q}, c) = \lambda.$ 
\end{proposition}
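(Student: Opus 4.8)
The plan is to upgrade the upper bound of Proposition~\ref{prop:c_less_equal_phi_SDR} to an equality by means of a global score-conservation argument. Fix a completion $\overline{\boldsymbol{P}}$ of $\boldsymbol{P}$ and, for each candidate $a$, set $\Delta(a) = s(\overline{\boldsymbol{P}}\cup\boldsymbol{Q},a)-s(\boldsymbol{P}'\cup\boldsymbol{Q},a)$, the change in $a$'s score relative to the canonical completion $\boldsymbol{P}'$. Two basic facts underlie the argument. First, because $g$ is the only candidate dropped in passing from $\boldsymbol{P}'$ to $\boldsymbol{P}$, every completion of a vote $p_i$ arises by reinserting $g$ into the fixed chain $p_i$; consequently any candidate lying above $g$ in $p'_i$ can only move down, and since $c$ and $w$ both belong to $\overrightarrow{C'_i}$ we get $\Delta(c)\le 0$ (this is Proposition~\ref{prop:c_less_equal_phi_SDR}) and $\Delta(w)\le 0$. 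Second, a single vote distributes the fixed multiset of score values $s_1,\dots,s_m$ no matter how its candidates are ranked, so summing over all votes yields the conservation identity $\sum_a\Delta(a)=0$.

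Now suppose $c$ is a possible winner in $\overline{\boldsymbol{P}}\cup\boldsymbol{Q}$. By Proposition~\ref{prop:c_less_equal_phi_SDR} its score is at most $\lambda$, so being a winner forces $s(\overline{\boldsymbol{P}}\cup\boldsymbol{Q},a)\le\lambda$ for \emph{every} candidate $a$. I would translate each of these constraints, using the base scores tabulated in Reduction~\ref{red:PWPC_ub}, into a one-sided bound on the corresponding $\Delta$. The candidates whose base score is at least $\lambda$ can only be held down: every dummy $h\in H$ has base score $\lambda$, hence $\Delta(h)\le 0$, while $g$ has base score exceeding $\lambda$ by $q\left(\sum_{j=1}^{4}\delta_{m'-j}\right)$, hence $\Delta(g)\le -q\left(\sum_{j=1}^{4}\delta_{m'-j}\right)$. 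The remaining candidates $d$, $X$, $Y$, $Z$ start below $\lambda$ and are the only ones the winner condition permits to gain; the same condition caps their gains by $\Delta(d)\le q\,\delta_{m'-4}$, $\Delta(x_i)\le\delta_{m'-3}$, $\Delta(y_i)\le\delta_{m'-2}$, and $\Delta(z_i)\le\delta_{m'-1}$, so their combined contribution is at most $q\left(\sum_{j=1}^{4}\delta_{m'-j}\right)$.

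The conclusion then follows by a tightness argument on the identity $\sum_a\Delta(a)=0$. Split the sum into the group $\{d\}\cup X\cup Y\cup Z$ and the group consisting of $g$, $c$, $w$, and the dummies. The first group contributes at most $q\left(\sum_{j=1}^{4}\delta_{m'-j}\right)$; the second contributes at most $\Delta(g)\le -q\left(\sum_{j=1}^{4}\delta_{m'-j}\right)$, since $\Delta(c),\Delta(w)$ and all $\Delta(h)$ are non-positive. As these two caps are exact negatives and the total is $0$, both must be attained, which forces each non-positive term to vanish; in particular $\Delta(c)=0$. Therefore $s(\overline{\boldsymbol{P}}\cup\boldsymbol{Q},c)=s(\boldsymbol{P}'\cup\boldsymbol{Q},c)=\lambda$, as required.

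I expect the main obstacle to be the justification of the sign facts rather than the arithmetic. The upper caps on $d,x_i,y_i,z_i$ are immediate from the winner hypothesis, but the claims $\Delta(c)\le 0$ and $\Delta(w)\le 0$ genuinely use the structural feature that only $g$ is missing from $\boldsymbol{P}$, which I would isolate as a short observation about reinserting a single element into a fixed chain. A related subtlety is that in some votes $g$ might be placed \emph{above} its canonical slot and thereby gain locally; this is precisely why the argument is conducted through the aggregate quantity $\Delta(g)$ and the global identity, so that any such local gain only makes the net loss of $g$ harder to achieve and does not disturb the accounting.
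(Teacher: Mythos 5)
Your proof is correct, but it is considerably more explicit than what the paper offers: the paper states Propositions~\ref{prop:c_less_equal_phi_SDR} and~\ref{prop:c_equals_phi_SDR} with no proof at all, saying only that they ``follow quite naturally from the construction.'' The natural argument the authors appear to have in mind is local and structural: the upper bound $s(\overline{\boldsymbol{P}}\cup\boldsymbol{Q},c)\le\lambda$ is exactly Proposition~\ref{prop:c_less_equal_phi_SDR} (since $c$ lies above the unique missing candidate $g$ in every chain, it can only be pushed down), and for the matching lower bound one observes that the dummies sitting below $g$'s canonical slot (those in $H_{m'-4},\dots,H_{m'}$) keep score exactly $\lambda$ in \emph{every} completion, because reinserting $g$ can shift them by at most one position within a constant-score block; a winning $c$ must therefore score at least $\lambda$. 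Your route replaces this invariant-candidate observation with a global conservation identity $\sum_a\Delta(a)=0$ plus one-sided caps on each $\Delta$ derived from the winner hypothesis, and then a tightness argument forcing $\Delta(c)=0$. What your version buys is robustness and rigor: it needs no candidate whose score is literally invariant at $\lambda$ (the sets $H_{m'-4},\dots,H_{m'}$ can be empty when the bottom blocks have length one, and the dummies \emph{above} $g$'s slot can in fact lose points when $g$ is inserted above them, so the quick ``dummies pin the winner at $\lambda$'' argument needs care in exactly the cases your accounting handles automatically). You also correctly isolate the two load-bearing structural facts --- $\Delta(c)\le 0$ and $\Delta(w)\le 0$ from membership of $c,w$ in $\overrightarrow{C_i'}$ --- which cannot be obtained from the winner hypothesis alone (the margin $s(\boldsymbol{P}'\cup\boldsymbol{Q},w)<\lambda$ is unspecified), so omitting either would break the tightness step. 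The cost is length: the paper's intended argument, once the invariant dummies are available, is two lines.
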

Note that the converse of the above proposition is not true.
\begin{lemma}
\label{lem:ub_val}
 Let $r$ be an unbounded scoring rule. Reduction \ref{red:PWPC_ub} is a polynomial time reduction of {\sc 3DM } to \emph{\PWPC}~w.r.t.\ $r$.
\end{lemma}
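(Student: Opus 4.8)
The plan is to show two things: that Reduction~\ref{red:PWPC_ub} runs in polynomial time, and that it is correct, i.e.\ the {\sc 3DM} instance is positive if and only if $c$ is a possible winner. Case~1 of the reduction explicitly mimics Reduction~\ref{red:3val}, so its correctness is inherited from Lemma~\ref{lem:p_val_1}, and I would treat only Case~2 in detail. For the running time I would note that $m=(3q+4)(3q)=O(q^2)$, so the candidate set, the partial profile $\boldsymbol{P}$, and the witnessing total profile $\boldsymbol{P}'$ all have size polynomial in $q$; the only point needing a check is that the target residues $\mathbf{R}$ meet the budget $\sum_j|\eta_{i,j}|\le O(m^4)$ demanded by Lemma~\ref{lemmaDM}. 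Each $R$-value is a small integer combination of the $\delta_j$'s, the largest coefficient being $q\le O(\sqrt m)$ (for $g$ and $d$), so the residues have the promised form $\sum_k l_k\delta_k+\sum_k h_k s_k$ and Lemma~\ref{lemmaDM} yields $\boldsymbol{Q}$ in polynomial time.

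The structural observation I would isolate first is that $g$ is the only candidate dropped from any $p_i$, so a completion of $\boldsymbol{P}$ is exactly a choice, for each $i$, of the block into which $g$ is reinserted. If $g$ is placed $k$ blocks below its home block $a_{m'-4}$ (the block it occupies in $\boldsymbol{P}'$), for $0\le k\le 4$, then $g$ loses $\sum_{j=1}^{k}\delta_{m'-j}$, it pushes up by exactly one block each of the candidates it passes among $d,x_{i_1},y_{i_2},z_{i_3}$ — namely $d$ when $k\ge 1$ (gaining $\delta_{m'-4}$), $x_{i_1}$ when $k\ge 2$ (gaining $\delta_{m'-3}$), $y_{i_2}$ when $k\ge 3$ (gaining $\delta_{m'-2}$), $z_{i_3}$ when $k\ge 4$ (gaining $\delta_{m'-1}$) — and it leaves every dummy of $H$ in its original block, since each of $d,x_{i_1},y_{i_2},z_{i_3}$ sits at the top of its block while the dummies merely shuffle within their blocks. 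In particular, $g$'s loss in a vote equals the total gain it induces there.

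For the ``$\implies$'' direction, given a matching $\mathscr{S}'\subseteq\mathscr{S}$ with $|\mathscr{S}'|=q$, I would complete $p_i$ by dropping $g$ to the bottom block $a_{m'}$ whenever $S_i\in\mathscr{S}'$ and by restoring $g$ to its home block (i.e.\ using $p'_i$) whenever $S_i\notin\mathscr{S}'$. Using the scores recorded in step~4 of the reduction together with the bookkeeping above, $g$ loses exactly $q\sum_{j=1}^{4}\delta_{m'-j}$ and lands at $\lambda$, $d$ gains exactly $q\delta_{m'-4}$ and lands at $\lambda$, and because $\mathscr{S}'$ covers each element exactly once, every $x\in X$, $y\in Y$, $z\in Z$ gains its single $\delta$ in precisely one vote and lands at $\lambda$. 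The dummies of $H$ stay at $\lambda$, candidate $w$ stays below $\lambda$, and $c$ stays at $\lambda$ by Proposition~\ref{prop:c_equals_phi_SDR}, so $c$ is a possible winner.

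For the ``$\impliedby$'' direction, the heart of the argument, I would assume $c$ is a possible winner, so by Proposition~\ref{prop:c_equals_phi_SDR} the winning completion has $s(c)=\lambda$ and every other candidate scores at most $\lambda$. Writing $A_k$ for the set of votes in which $g$ is inserted at least $k$ blocks below its home block, for $k=1,2,3,4$ (so $A_4\subseteq A_3\subseteq A_2\subseteq A_1$), the induced gain of $d$ is $|A_1|\delta_{m'-4}$, and $g$'s total loss is at most $|A_1|\delta_{m'-4}+|A_2|\delta_{m'-3}+|A_3|\delta_{m'-2}+|A_4|\delta_{m'-1}$, with equality exactly when $g$ is never pushed above its home block. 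The cap $s(d)\le\lambda$ forces $|A_1|\le q$, and the caps on the individual $x,y,z$ force the element candidates touched within $A_2,A_3,A_4$ to be pairwise distinct, hence $|A_2|,|A_3|,|A_4|\le q$. Thus $g$'s loss is at most $q\sum_{j=1}^{4}\delta_{m'-j}$; but $g$ starts at $\lambda+q\sum_{j=1}^{4}\delta_{m'-j}$ and must finish at $\le\lambda$, so its loss is at least $q\sum_{j=1}^{4}\delta_{m'-j}$. Equality throughout forces $|A_1|=|A_2|=|A_3|=|A_4|=q$, which with the nesting gives $A_1=A_2=A_3=A_4=:K$ with $|K|=q$: in each vote of $K$ candidate $g$ drops all the way to the bottom and elsewhere it stays home, and the distinctness of $x_{i_1},y_{i_2},z_{i_3}$ over $i\in K$ exhibits $\{S_i: i\in K\}$ as a set of $q$ triples covering $\mathcal{X},\mathcal{Y},\mathcal{Z}$ with no repeated coordinate, i.e.\ a {\sc 3DM} matching. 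I expect this budget accounting to be the main obstacle: one must rule out ``partial'' drops of $g$, exclude drops that move $g$ \emph{up} into $\overrightarrow{C_i'}$ (which contribute negative loss and are eliminated by the equality condition under the tight caps), and argue that the per-element caps force distinctness rather than merely bounding counts — this, rather than the mechanical score arithmetic, is where the content lies.
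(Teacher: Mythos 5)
Your proposal is correct and takes essentially the same route as the paper's proof: Case 1 is deferred to Lemma \ref{lem:p_val_1}, the forward direction uses the identical completion (drop $g$ to the bottom block precisely on the votes of the matching, keep $p'_i$ otherwise), and the converse is the same budget argument --- the cap on $d$ bounds the number of votes where $g$ passes $d$ by $q$, the deficit $q\sum_{j=1}^{4}\delta_{m'-j}$ of $g$ then forces exactly $q$ full drops with no upward insertions, and the per-element caps yield distinctness and hence a matching. Your nested sets $A_4\subseteq A_3\subseteq A_2\subseteq A_1$ simply make explicit the accounting that the paper's steps leave implicit; the only point to tidy is that the bound $|A_2|,|A_3|,|A_4|\le q$ should be drawn from this nesting together with $|A_1|\le q$ rather than from distinctness, since the per-element caps force distinctness only after upward moves of $g$ (which could let an element candidate sitting in $\overrightarrow{C_i'}$ offset a second gain by a loss) have been excluded by the equality condition --- a subtlety your closing paragraph correctly identifies.
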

\begin{proof}
    Given a 3DM instance $\mathcal{I} = (\mathcal{X}, \mathcal{Y}, \mathcal{Z}, \mathscr{S}),$ we construct a PW instance, $(C, V = \boldsymbol{P} \cup \boldsymbol{Q}, c),$ according to Reduction \ref{red:PWPC_ub}. Note that there are two cases in the reduction. In both the cases, the \PW~instance is polynomial in $|\mathcal{I}|$.   For Case 1, correctness follows from Lemma \ref{lem:p_val_1}. Here, we consider Case 2.
    \\
    First, we prove the `$\impliedby'$ direction.
   Assume that the \PWPC~instance $(C, V = \boldsymbol{P} \cup \boldsymbol{Q}, c)$ obtained above is a positive one. Therefore, there exists a total profile $\boldsymbol{P}^* = \bigcup_{i=1}^{l} p^*_i$ such that  
    \begin{itemize}
        \item for all $ 1 \leq i \leq t,$ the vote $p^*_i \text{ extends } p_i;$
        \item $c$ is a possible winner and, by Proposition \ref{prop:c_equals_phi_SDR}, has score $\lambda.$
    \end{itemize}
In the following, when one says that a candidate "gains" or "loses" points, it is in relation to the complete profile $\boldsymbol{P}'$ in the reduction.
    \begin{enumerate}
        \item Candidate $g$ must lose at least $q  \sum\limits_{j=1}^{4}\delta_{m'-j} $ points for $c$ to be a possible winner. Therefore, it must be in a position greater than $\sum\limits_{j = 1}^{m'-4} \ell(m,j)$ at least $q$ times.
        
        \item Whenever $g$ is in a position greater than $m- \sum\limits_{j = m'-3}^{m'} \ell(m, j)$, candidate $d$ gains $\delta_{m'-4}.$  Since $d$ cannot gain more than $q(\delta_{m'-4})$ points, there are at most $q$ votes where $g$ is in position greater than $m- \sum\limits_{j =m'-3}^{m'} \ell(m, j)$. Let these votes be $ p^*_{k_1}, \hdots, p^*_{k_q}$ where each $1 \leq k_j \leq t$ and $K = \{ k_j | 1 \leq j  \leq q \}.$
        
        \item  Note that candidate $g$ has to lose at least $q(\sum\limits_{j = 1}^{4} \delta_{m'-j})$ points in these $q$ votes. This is possible if and only if it is in position greater than $\sum\limits_{j = 1}^{m'-1} \ell (m, j)$. 
        Furthermore, whenever $g$ is in position greater than $\sum\limits_{j = 1}^{m'-1} \ell (m, j)$ in a vote $p^*_i,$ candidate $x_{i_1}$ gains $\delta_{m'-3}$ points, candidate $y_{i_2}$ gains $ \delta_{m'-2}$ points, and candidate $_{i_3}$ gain $\delta_{m'-1}$ points, for $i \in K$.
        
        \item Since $|X| = |Y| = |Z| = q$, and each $x \in X$ can gain at most $\delta_{m-3}$ points, each $y \in Y$ can gain at most $\delta_{m'-2}$ points, and each $z \in Z$ can gain at most $\delta_{m'-1}$ points,  it must be the case that the element candidates of $Y$ and $Z$ which gained points in the $q$ votes in $K$ are distinct. 
        
        \item Since no other candidate can gain any more points, the remaining partial votes in $\boldsymbol{P}$ ($p_i$ for $1 \leq i \leq t$ and $i \notin K$), must have the same completion as in $\boldsymbol{P}'.$ 
   
         \item Therefore, the set $\{S_i | i \in K\}$ must form a cover for $\mathcal{X} \cup \mathcal{Y} \cup \mathcal{Z}.$
    \end{enumerate}
    For the `$\implies$' direction, let $(\mathcal{X}, \mathcal{Y}, \mathcal{Z}, \mathscr{S}),$ is a positive instance. We show that in the PW instance, $(C, V = \boldsymbol{P} \cup \boldsymbol{Q}, c),$ constructed in the reduction, $c$ is, indeed, a possible winner. By hypothesis, there is a $\mathscr{S}' \subseteq \mathscr{S}$ and $| \mathscr{S}' | = q.$
    \begin{enumerate}
        \item Complete each vote  $p_i \in \boldsymbol{P}$ to $p^*_i$.
        \begin{itemize}
            \item $p^*_i = \overrightarrow{C_i'} \succ \overrightarrow{H}_{m'-4} \succ d \succ \overrightarrow{H}_{m'-3} \succ x_{i_1} \succ \overrightarrow{H}_{m'-2} \succ y_{i_2} \succ \overrightarrow{H}_{m'-1} \succ z_{i_3} \succ g  \succ \overrightarrow{H}_{m'}$ if $S_i \in \mathscr{S}'$
            
            \item $p^*_i = \overrightarrow{C_i'} \succ g \succ \overrightarrow{H}_{m'-4} \succ d \succ \overrightarrow{H}_{m'-3} \succ x_{i_1} \succ \overrightarrow{H}_{m'-2} \succ y_{i_2} \succ \overrightarrow{H}_{m'-1} \succ z_{i_3} \succ \overrightarrow{H}_{m'}$ if $S_i \notin \mathscr{S}'$
        \end{itemize}
        
        Let $\boldsymbol{P}^* = \bigcup_{i=1}^{l} p^*_i.$
        Note that the score of candidate $c$ does not change in these votes and is, therefore, $\lambda.$
        
        \item We compute the scores of each candidate in $\boldsymbol{P}^* \cup \boldsymbol{Q}.$
        
        \begin{itemize}
            \item For all $x \in X,$ we have $
            s( \boldsymbol{P}^* \cup \boldsymbol{Q}, x) 
            = s(\boldsymbol{P}^*, x) + s(\boldsymbol{Q},x) 
            =  s(\boldsymbol{P}', x) + \delta_{m'-3}   + s(\boldsymbol{Q},x)$
            \begin{align*}
            =  \lambda_{\boldsymbol{P}'} + s(\boldsymbol{P}',x)  - \lambda_{\boldsymbol{P}'}  + \delta_{m'-3} + \lambda_{\boldsymbol{Q}} +  R_x  = \lambda.
            \end{align*}

            \item For all $y \in Y, $ we have $ 
            s( \boldsymbol{P}^* \cup \boldsymbol{Q}, y) 
            = s(\boldsymbol{P}^*, y) + s(\boldsymbol{Q},y)
            = s(\boldsymbol{P}', y)  + \delta_{m'-2} + s(\boldsymbol{Q},y) $
            \begin{align*}
                = \lambda_{\boldsymbol{P}'} + s(\boldsymbol{P}',y) - \lambda_{\boldsymbol{P}'}   + \delta_{m'-2} + \left(\lambda_{\boldsymbol{Q}} +  R_y \right) = \lambda. 
            \end{align*}
            
            \item  For all $z \in Z , $ we have $
            s( \boldsymbol{P}^* \cup \boldsymbol{Q}, z) 
            = s(\boldsymbol{P}^*, z) + s(\boldsymbol{Q},z)
            = s(\boldsymbol{P}', z)  + \delta_{m'-1} + s(\boldsymbol{Q},z)$ 
            \begin{align*}
                = \left(\lambda_{\boldsymbol{P}'} + s(\boldsymbol{P}',z) - \lambda_{\boldsymbol{P}'} \right) + \delta_{m'-1} + \left(\lambda_{\boldsymbol{Q}} +  R_z \right) = \lambda.
            \end{align*}
                
            \item $s( \boldsymbol{P}^* \cup \boldsymbol{Q}, c) = s(\boldsymbol{P}^*, c) + s(\boldsymbol{Q},c)
                = s(\boldsymbol{P}', c) + s(\boldsymbol{Q},c) = \lambda_{\boldsymbol{P}'} + \lambda_{\boldsymbol{Q}} = \lambda.$
                
            \item $s( \boldsymbol{P}^* \cup \boldsymbol{Q}, g) = s(\boldsymbol{P}^*, g) + s(\boldsymbol{Q},g)
                = s(\boldsymbol{P}', g)  - q \sum\limits_{j=1}^{4} \delta_{m'-j} + s(\boldsymbol{Q},g) $
            \begin{align*}
            = \left(\lambda_{\boldsymbol{P}'} +
            s(\boldsymbol{P}',g) - \lambda_{\boldsymbol{P}'} \right)  - q \sum_{j = 1}^{4} \delta_{m'-j} 
            +\left(\lambda_{\boldsymbol{Q}} +  R_d \right)
            = \lambda.
            \end{align*}
                
            \item $s( \boldsymbol{P}^* \cup \boldsymbol{Q}, d) = s(\boldsymbol{P}^*, d) + s(\boldsymbol{Q},d)= s(\boldsymbol{P}', d)  + (\delta_{m'-4})q + s(\boldsymbol{Q},d)$
            \begin{align*}
            = \left(\lambda_{\boldsymbol{P}'} + s(\boldsymbol{P}',d) - \lambda_{\boldsymbol{P}'} \right) + (\delta_{m'-4})q  +\left(\lambda_{\boldsymbol{Q}} +  R_d \right)  = \lambda. 
            \end{align*}
        
       \item The score of the candidates in $H$ remain unchanged, i.e., for all $h \in H,$ $s( \boldsymbol{P}^* \cup \boldsymbol{Q}, h) = s( \boldsymbol{P}' \cup \boldsymbol{Q}, h) = \lambda.$
                 
        \end{itemize}
    Therefore, $c$ is a possible winner.
    \end{enumerate}
    \vspace{-2pc}
\end{proof}

\section{Beyond Partial Chains} \label{sec:other}
\emph{Partitioned preferences} \cite{DBLP:conf/atal/Kenig19}  and \emph{truncated preferences} \cite{baumeister2012campaigns} are two restricted types of partial orders that have received  attention in the literature.

\begin{definition}
Let $\succ$ be a partial order on a set $C$.
\begin{itemize}
    \item We say that $\succ $ is a \emph{partitioned preference}
    if $C$ can be partitioned into disjoint subsets $A_1, . . . ,A_q$ such that:
    
    \emph{(a) for all $i < j \leq  q$, if $c \in A_i$ and $c' \in  A_j$ then $c \succ c'$;}
    
    \emph{(b)  for each $i < q$, the elements in $A_i$ are incomparable under $\succ$ 
        (i.e., $a \nsucc b$ and $b \nsucc a$, for every $a, b \in A_i$).}
        
\item We say that $\succ$ is a \emph{doubly-truncated ballot} if there is a permutation $\pi$ over $\{1, \hdots, |C| \}$ and natural numbers $t$, $b$ such that $\succ$ is of the form  
    $c_{\pi(1)} \succ \hdots \succ c_{\pi(t)} \succ \{c_{\pi(t+1)}, \hdots, c_{\pi(m-b)} \}
    \succ c_{\pi(m-b+1)} \succ \hdots \succ c_{\pi(m)}$.
    \item A doubly-truncated ballot is called
\emph{top-truncated} if $b=0$; it is called   
 bottom-truncated if $t=0$.
 \end{itemize}
\end{definition}

Note that doubly-truncated ballots are a special case of partitioned preferences; thus, so are top-truncated ballots and bottom-truncated ballots. 

We write \PWPV~to denote the 
the restriction of the \PW~problem to partial profiles consisting of partitioned preferences.
Similarly, we write \PWDV, \PWTV, and \PWBV~for the restriction of the \PW~problem to partial profiles consisting of, respectively, doubly-truncated, top-truncated, and bottom-truncated preferences.

The \PWPV \ problem has been studied in \cite{DBLP:conf/atal/Kenig19}.
Before summarising the main results of that paper, we introduce the concept of a differentiating rule and notation for a family of rules.  

\begin{definition}
A scoring rule $r$ is \emph{differentiating} if there exists a $n_0 \in \mathbb{N}_0$ such that for all $m > n_0$, the scoring vector $\boldsymbol{s}_{m}$ contains two positions $i \text{ and } j$, where $1 \leq i < j < m,$ such that $(s_i - s_{i+1}) > (s_j - s_{j+1}).$ We say $r$ is \emph{non-differentiating} if it is not differentiating.
\end{definition}

Let $f$ and $l$ be two positive integers ($f$ for ``first" and $l$ for ``last"). We write $R(f,l)$ to denote the $3$-valued rule with scoring vectors
${\bf s}_m= (\underbrace{2, \hdots, 2}_{f},\underbrace{1, \hdots, 1}_{m-f-l}, \underbrace{0, \hdots, 0}_l)$. Note that  $R(1,1)$ is the rule $(2,1,\ldots,1,0)$ encountered earlier.

The result in \cite[Theorem 5]{DBLP:conf/atal/Kenig19} along with \cite[Lemma 6]{dey2016exact} provide the following (incomplete) classification of the \PWPV~problem.
\begin{theorem}\label{kenig-thm} 
{\rm \cite{DBLP:conf/atal/Kenig19,dey2016exact}}
Let $r$ be a pure positional scoring rule. Then the following statements hold.
\begin{itemize}
    \item If $r$ is $2$-valued or if $r$ is the rule $R(1,1)$, then  the \emph{\PWPV} \ problem is in \emph{\PTIME}.
    
    \item If $r$ is a differentiating rule, then the \emph{\PWPV} \ problem is in \emph{\NP}-complete.
    
    \item If $r$ is a non-differentiating $p$-valued rule, where $p \geq 3$, other than $R(f,l)$ with $f+l>2$, then the \emph{\PWPV} \ problem w.r.t.\ $r$ is \emph{\NP}-complete.
    
    \item If $r$ is an non-differentiating rule, such that all scoring vectors have at least four distinct values, then the \emph{\PWPV} \ problem w.r.t.\ $r$ is \emph{\NP}-complete.
\end{itemize}
\end{theorem}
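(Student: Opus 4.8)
The plan is to prove the classification in two independent regimes --- the tractable (\PTIME) cases and the intractable (\NP-complete) cases --- with membership in \NP~being immediate throughout. For any pure positional scoring rule, $c$ is a possible winner on a partitioned profile exactly when some completion makes $c$ a winner; since the rule is polynomial-time computable, guessing a completion and verifying the scores places \PWPV~in \NP. Thus the whole content is the dichotomy between polynomial-time solvability and \NP-hardness, and the four bullets should be read so that the first (the \PTIME~families) takes precedence in delineating the boundary.

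For the two tractable cases the key observation is that a partitioned preference $A_1 \succ A_2 \succ \cdots \succ A_q$ has blocks that occupy contiguous, \emph{predetermined} position ranges in every completion, so the only freedom a voter contributes is the internal ordering inside each block. When $r$ is $2$-valued, each block lies entirely above the high/low score boundary, entirely below it, or is the unique block straddling it; only the straddling block offers real choice, and there only the \emph{identity} of the high-scoring candidates is free (their number is fixed). I would therefore encode the requirement ``$c$'s score is at least every opponent's score'' as the feasibility of a single flow/bipartite-matching network whose units are the high-score slots in straddling blocks, and solve it by max-flow in polynomial time. For $R(1,1)$ the same idea applies, with a more delicate accounting of the top slot (worth $2$) and the bottom slot (worth $0$) contributed by each voter's first and last blocks; this is precisely the refinement supplied by \cite[Lemma 6]{dey2016exact}.

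For the \NP-hardness cases I would build reductions from an exact-cover style problem such as {\sc Exact 3-Cover} or {\sc 3DM}, using profiles consisting solely of partitioned preferences. The differentiating property guarantees, for all large $m$, two positions with distinct consecutive gaps $(s_i-s_{i+1}) > (s_j-s_{j+1})$, and these two gap magnitudes serve as the ``large'' and ``small'' currencies of a selection gadget that forces one choice per element of the ground set, much as in the reductions of the present paper; rules whose vectors carry at least four distinct values supply even more independent levels, which the fourth bullet exploits. The delicate case, and the one I expect to be the main obstacle, is the third: non-differentiating $p$-valued rules with $p \ge 3$. Here the gadget cannot exploit a decrease in consecutive gap sizes, so the reduction must instead leverage the block multiplicities $\ell(m,j)$ together with the presence of $p \ge 3$ distinct values, and one must verify that the only rules on which this construction degenerates are exactly the excluded families --- the $2$-valued rules, $R(1,1)$, and $R(f,l)$ with $f+l>2$. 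Pinning down that boundary, showing the construction succeeds for every admissible non-differentiating rule and collapses only on the excepted $R(f,l)$, is the technically demanding step.
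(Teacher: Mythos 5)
First, some context your attempt could not have known: the paper does \emph{not} prove this theorem at all --- it is a summary of prior work, and the paper's entire ``proof'' is the sentence preceding the statement, citing \cite[Theorem 5]{DBLP:conf/atal/Kenig19} together with \cite[Lemma 6]{dey2016exact}. Your proposal must therefore be judged as a reconstruction of those external proofs, and as such it has genuine gaps. Beyond the (correct, but trivial) \NP-membership observation and the flow/matching idea for $2$-valued rules --- which is indeed the right spirit for the tractable cases --- every hardness bullet is only sketched, and you explicitly defer the decisive step (non-differentiating $p$-valued rules with $p \geq 3$, together with the verification that the construction degenerates exactly on the excluded families $R(f,l)$ with $f+l>2$) as ``the technically demanding step.'' That step \emph{is} the content of the theorem; a plan that postpones it proves nothing. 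There is also a concrete misattribution: \cite[Lemma 6]{dey2016exact} is not the refinement giving the \PTIME\ algorithm for $R(1,1)$. According to the paper's own discussion, that lemma is the \NP-hardness proof for differentiating rules (via profiles of doubly-truncated ballots), i.e., it supports the second bullet, while the tractability of $2$-valued rules and of $R(1,1)$ comes from \cite[Theorem 5]{DBLP:conf/atal/Kenig19}.

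Second, and more substantively, your device of letting the first bullet ``take precedence'' cannot be implemented as a proof. Under the definition of a differentiating rule as stated in this paper, both $2$-approval and $R(1,1)$ are differentiating: for $m \geq 4$, $2$-approval has $s_2 - s_3 = 1 > 0 = s_3 - s_4$, and $R(1,1)$ has $s_1 - s_2 = 1 > 0 = s_2 - s_3$. So the second bullet, read literally, asserts \NP-completeness for rules that the first bullet places in \PTIME; the two can coexist only if the hardness hypothesis is in fact stronger than what is written (for instance, requiring both consecutive gaps $(s_i-s_{i+1})$ and $(s_j-s_{j+1})$ to be positive), so that these rules fall outside it. A correct proof of the second bullet must identify that hypothesis and show precisely where the selection gadget breaks down for $2$-valued rules and for $R(1,1)$; your sketch never does this, so, as written, your plan for the differentiating case would ``prove'' a statement that is false unless \PTIME\ $=$ \NP.
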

The complexity of the \PWPV~problem remains open for the rules  $R(f,l)$ with $f+l>2$.
Since \PWDV~is a special case of \PWPV, the \PTIME~results in Theorem \ref{kenig-thm} also hold for the \PWDV~problem. Note that the \PTIME~result for $2$-valued rules generalises an earlier result in \cite{baumeister2012campaigns}, which established that the \PWDV~problem w.r.t.\ $t$-approval is in \PTIME. 
Moreover, the partial profile constructed in the NP-hardness proof for all differentiating rules in \cite[Lemma 6]{dey2016exact} and  for $p$-valued rules with $p \geq 4$ in \cite[Lemma 13]{DBLP:conf/atal/Kenig19} has only doubly-truncated ballots. Therefore, the \PWDV~problem w.r.t.\ $p$-valued rules with  $p \geq 4$ is also \NP-complete. We further note that the NP-hardness for non-differentiating unbounded rules with scoring vectors containing at least four distinct values is obtained as a corollary to \cite[Lemma 14]{DBLP:conf/atal/Kenig19}. The proof of this lemma implicitly uses doubly-truncated profiles. This is a generalisation of an earlier result  in \cite{betzler2011unweighted,davies2011complexity}, which established that the \PWDV~problem is \NP-complete for Borda count. Thus,  the existing results for the \PWDV~problem can be summarised as follows.
\begin{theorem} \label{thm:PWDV_old} 
\emph{\cite{betzler2011unweighted,davies2011complexity,baumeister2012campaigns,dey2016exact,DBLP:conf/atal/Kenig19}}
The following are true.
    \begin{itemize}
        \item If $r$ is a $2$-valued rule or $r$ is  the rule $R(1,1)$, then the \emph{\PWDV}~problem w.r.t.\ $r$ is in \emph{\PTIME}.
        \item If $r$ is a $3$-valued rule other than $R(f,l)$ with $f+1 > 2,$ or $r$ is a $p$-valued rule, where $p \geq 4$, or $r$ is an unbounded rule with scoring vectors containing at least four distinct score values,  then the \emph{\PWDV}~problem w.r.t.\ $r$ is \emph{\NP}-complete.
    \end{itemize}
\end{theorem}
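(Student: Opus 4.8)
The plan is to obtain Theorem~\ref{thm:PWDV_old} as a compilation of Theorem~\ref{kenig-thm} together with a structural inspection of the hardness reductions underlying it, exploiting the fact recorded above that every doubly-truncated ballot is a partitioned preference. I would first dispose of the tractable cases. Since each doubly-truncated profile is in particular a profile of partitioned preferences, viewing a \PWDV\ instance as a \PWPV\ instance leaves the answer unchanged, so every \PTIME\ upper bound for \PWPV\ transfers verbatim to \PWDV. Applying this to the first item of Theorem~\ref{kenig-thm} yields at once that \PWDV\ with respect to any $2$-valued rule, and with respect to $R(1,1)$, is in \PTIME, which is the first bullet.

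For the hardness bullet, membership in \NP\ is routine: given a \PWDV\ instance one guesses a total-order completion of each doubly-truncated ballot and verifies in polynomial time that $c$ is a winner. The heart of the argument is the observation that, for every rule $r$ in the list, the \NP-hardness of \PWPV\ with respect to $r$ granted by Theorem~\ref{kenig-thm} can be witnessed by a reduction whose output profile consists \emph{only} of doubly-truncated ballots; such a reduction is then automatically a reduction to \PWDV, because doubly-truncated ballots are partitioned preferences. I would organise the verification according to the shape of $r$. If $r$ is differentiating, which covers the differentiating members of all three listed families, the reduction of \cite[Lemma~6]{dey2016exact} applies and one checks that every partial vote it builds is doubly-truncated. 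If $r$ is $p$-valued with $p \ge 4$, the reduction of \cite[Lemma~13]{DBLP:conf/atal/Kenig19} applies and is likewise seen to output only doubly-truncated ballots. Finally, if $r$ is a non-differentiating unbounded rule whose scoring vectors carry at least four distinct values, one invokes \cite[Lemma~14]{DBLP:conf/atal/Kenig19}, whose construction implicitly produces doubly-truncated profiles; this last case generalises the hardness of \PWDV\ for the Borda count established in \cite{betzler2011unweighted,davies2011complexity}.

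I expect the main obstacle to be exactly this auditing step: the cited lemmas were designed to prove hardness for \PWPV, so for each one I would have to re-read the explicit construction and confirm that every partial vote it emits has the doubly-truncated shape $c_{\pi(1)} \succ \cdots \succ c_{\pi(t)} \succ \{c_{\pi(t+1)},\ldots,c_{\pi(m-b)}\} \succ c_{\pi(m-b+1)} \succ \cdots \succ c_{\pi(m)}$, rather than that of a general partitioned preference. The most delicate sub-case is the family of non-differentiating $3$-valued rules that are not of the form $R(f,l)$ --- for instance the rule with scoring vectors $(\underbrace{3,\ldots,3}_{m-2},\,2,\,0)$ --- since these are covered by the third item of Theorem~\ref{kenig-thm} but fall into neither the ``differentiating'' group nor the ``$p\ge 4$'' group treated above; for them one must separately verify that the reduction establishing their \PWPV-hardness can be arranged to output only doubly-truncated ballots. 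Once each relevant construction has been audited in this fashion, both bullets of Theorem~\ref{thm:PWDV_old} follow.
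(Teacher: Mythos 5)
Your proposal is correct and follows essentially the same route as the paper's own argument: the \PTIME\ cases transfer because \PWDV\ is a special case of \PWPV, and the hardness claims are obtained by auditing the \PWPV-hardness constructions of \cite[Lemma 6]{dey2016exact} (differentiating rules), \cite[Lemma 13]{DBLP:conf/atal/Kenig19} ($p$-valued rules, $p \geq 4$), and \cite[Lemma 14]{DBLP:conf/atal/Kenig19} (non-differentiating unbounded rules with at least four distinct score values) to confirm that they emit only doubly-truncated ballots. If anything, your audit is more explicit than the paper's, which never separately addresses the non-differentiating $3$-valued rules outside the $R(f,l)$ family that you single out as the delicate sub-case.
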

Therefore, the complexity of the \PWDV~problem w.r.t.\ unbounded rules with scoring vector containing three distinct score values remains open. 
\\
Since \PWTV~and \PWBV~are special cases of \PWDV, the \PTIME~results in Theorem \ref{kenig-thm} and in Theorem \ref{thm:PWDV_old} also hold for \PWTV~and \PWBV~problems. The complexity of these two problems w.r.t $p$-valued rules, where $p \geq 3$, other than the rule $R(1,1)$, and unbounded rules remain open. We settle the complexity of \PWTV~and \PWDV~w.r.t.\ a broad group of unbounded rules (such that all scoring vectors containing at least three distinct values and satisfies some additional properties)  and a restricted group of $3$ valued rules. This implies the NP-completeness of \PWDV~problem w.r.t.\ the same group of unbounded rules.
\subsection{Maximum partial score}

Before presenting the reductions, we introduce a few general notions which will help us reason about partial preferences beyond partial chains, namely doubly-truncated and partitioned preferences.

\begin{definition}
Let $C$ be a set of candidates and $\boldsymbol{P}$ be a partial profile.
\begin{itemize}
    \item We say that a candidate $c' \in C$ is \emph{fixed} in a partial vote $p \in \boldsymbol{P}$ if $c$ has the same position in all extensions $p^*$ of $p_i$.
    
    \item We say that a candidate $c' \in C$ is \emph{fixed} in the partial profile $\boldsymbol{P}$ if $c'$ is fixed in every vote in $\boldsymbol{P}$, i.e., for every $p_i \in \boldsymbol{P},$ there is an integer $b_i$ such that $c$ has position $b_i$ in every extension $p^*_i$ of $p_i.$ (Note that, in general, $b_i$ depends on $p_i$.)
    
    \item We say that a position $b$ in a partial vote is \emph{available} if there is no fixed candidate in $b$, i.e., there exists no candidate $c' \in C$ such that $c'$ is in position $b$ in all extensions $p^*$ of $p$.
    
    \item Let fixed$_{\boldsymbol{P}}(c')$ be the total score made by $c' \in C$ from those votes in $\boldsymbol{P}$ where  $c'$ is fixed.
\end{itemize}
\end{definition}
Let $\boldsymbol{P}$ be a doubly-truncated profile. A candidate is \emph{fixed} in $\boldsymbol{P}$ if for every  $p_i \in P,$ candidate $c$ is in the top or in the bottom.

We need the notion of \emph{maximum partial score}, introduced in \cite{DBLP:journals/jcss/BetzlerD10}, to reason about the completions of doubly-truncated profiles. Since we consider only doubly-truncated profiles in this section, our discussion will be focused on these kinds of votes. However, we note that the notion of maximum partial score can be used for any partial profile where the distinguished candidate is fixed. Before presenting the notion formally,we present an example. Consider a doubly-truncated profile which contains the following partial votes.

\begin{align*}
p_1:& e_3 \succ c \succ e_1 \succ \{ e_2, e_4, e_7\} \succ e_6 \\
p_2:& e_1 \succ \{ e_2, e_4, e_7, e_3\} \succ e_6 \succ c.
\end{align*}
The candidate $c$ is fixed in both $p_1$ and $p_2$. Therefore, $c$ is fixed in the profile. Candidate $e_3$ is fixed in $p_1$ but not in $p_2$.
\\
Let $\boldsymbol{P}$ be a partial profile where candidate $c$ is fixed. Let $\boldsymbol{Q}$ be a total profile. 
Let $\overline{\boldsymbol{P}}$ be an extension of $\boldsymbol{P}$. The score of $c$ in $\overline{\boldsymbol{P}} \cup \boldsymbol{Q}$, namely, $s(\overline{\boldsymbol{P}} \cup \boldsymbol{Q},c) = \lambda$. 
For a candidate $c' \neq c,$ the \emph{maximum partial score} of $c'$, denoted $\maxpartial{c'}$, is defined as $\maxpartial{c'} = \lambda - s(c', \boldsymbol{Q}).$ Intuitively, there exists no completion  $\overline{\boldsymbol{P}}$ of $\boldsymbol{P}$ such that $c$ is a possible winner in $\overline{\boldsymbol{P}} \cup \boldsymbol{Q}$ and $c'$ has a score more than $\maxpartial{c'}$ in $\overline{\boldsymbol{P}}$.

Since, in a doubly-truncated vote, all the candidates which are not in the top or in the bottom are not ordered, the concept of maximum partial score makes it convenient to reason about the completions of such votes. Furthermore, it helps us define the \emph{tightness} property \cite{DBLP:journals/jcss/BetzlerD10}, which will be extremely helpful in proving the correctness of our reductions.

\begin{definition}
Let $C$ be a set of candidates, $c \in C$ be a distinguished candidate,  $\boldsymbol{P}$ be a partial profile such that $c$ is fixed in $\boldsymbol{P}$, and $\boldsymbol{Q}$ be a total profile.  We say that $\boldsymbol{P} \cup \boldsymbol{Q}$ has the \emph{tightness} property if the sum of the score values of all the available positions in all the partial votes in $\boldsymbol{P}$ is equal to the quantity $\sum\limits_{c' \in C \setminus \{c\}} \maxpartial{c'} - \text{fixed}_{\boldsymbol{P}}(c')$.
\end{definition}

The following proposition is quite obvious.

\begin{proposition}
\label{prop:tightness}
Let $C$ be a set of candidates, $c \in C$ be a distinguished candidate, $\boldsymbol{P}$ be a partial profile in which $c$ is fixed, and $\boldsymbol{Q}$ be a total profile. Let $\overline{ \boldsymbol{P}}$ be an extension of $\boldsymbol{P}$  such that $c$ is a winner in $\overline{\boldsymbol{P}} \cup \boldsymbol{Q}$. If $\boldsymbol{P} \cup \boldsymbol{Q}$ has the tightness property, then $s(\overline{\boldsymbol{P}} \cup \boldsymbol{Q} , c') = \maxpartial{c'}$
for every candidate $c' \in C \setminus \{c\}$.
\end{proposition}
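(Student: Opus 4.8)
The plan is to reduce the statement to the elementary principle that a sum of quantities, each bounded above by a matching target, can equal the sum of those targets only when every individual bound holds with equality. All of the substance lies in correctly assembling the two sides of that comparison from the winner hypothesis and the tightness hypothesis.

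First I would record what it means for $c$ to be a winner. Since $c$ is fixed in $\boldsymbol{P}$, its total score $s(\overline{\boldsymbol{P}} \cup \boldsymbol{Q}, c)$ equals $\lambda$ in every extension. As $c$ is a winner in $\overline{\boldsymbol{P}} \cup \boldsymbol{Q}$, each remaining candidate $c'$ satisfies $s(\overline{\boldsymbol{P}} \cup \boldsymbol{Q}, c') \le \lambda$; splitting the score as $s(\overline{\boldsymbol{P}}, c') + s(\boldsymbol{Q}, c')$ and rearranging yields the per-candidate upper bound $s(\overline{\boldsymbol{P}}, c') \le \lambda - s(\boldsymbol{Q}, c') = \maxpartial{c'}$. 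This gives one inequality for every $c' \in C \setminus \{c\}$, asserting that the score $c'$ harvests from the completed partial profile never exceeds its maximum partial score.

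Next I would establish a conservation identity that is independent of the chosen extension. Fix a partial vote $p_i$. The fixed candidates of $p_i$ occupy exactly its fixed positions, and since the fixed candidates are in bijection with the fixed positions, the non-fixed candidates of $p_i$ are in bijection with the available positions in every completion. Hence the total score collected by the non-fixed candidates of $p_i$ equals the sum of the score values of the available positions of $p_i$, a quantity that does not depend on how $p_i$ is completed. Summing over all votes, and observing that $c$ — being fixed everywhere — never occupies an available position, I obtain $\sum_{c' \in C \setminus \{c\}} \big( s(\overline{\boldsymbol{P}}, c') - \text{fixed}_{\boldsymbol{P}}(c') \big) = A$, where $A$ denotes the total sum of the score values of all available positions across the partial votes of $\boldsymbol{P}$.

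Finally I would invoke tightness, which asserts precisely that $A = \sum_{c' \in C \setminus \{c\}} \big( \maxpartial{c'} - \text{fixed}_{\boldsymbol{P}}(c') \big)$. Equating the two expressions for $A$ and cancelling the common $\text{fixed}_{\boldsymbol{P}}(c')$ terms gives $\sum_{c' \neq c} s(\overline{\boldsymbol{P}}, c') = \sum_{c' \neq c} \maxpartial{c'}$. Comparing this equality of sums against the termwise inequalities $s(\overline{\boldsymbol{P}}, c') \le \maxpartial{c'}$ from the first step forces every inequality to be tight, so $s(\overline{\boldsymbol{P}}, c') = \maxpartial{c'}$ for each $c' \in C \setminus \{c\}$ (equivalently, every candidate ties $c$ at score $\lambda$ in $\overline{\boldsymbol{P}} \cup \boldsymbol{Q}$), which is the claim. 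I expect the only genuinely delicate point to be the conservation identity: one must argue cleanly that in each vote the non-fixed candidates fill exactly the available positions, so that the score extracted from those positions is invariant across completions; once that invariance is secured, the remaining steps are routine rearrangement and the equal-sums-of-dominated-terms argument.
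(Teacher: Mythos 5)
Your proof is correct and takes essentially the same approach as the paper: the paper's one-line proof by contradiction (if some candidate scored less than its maximum partial score, tightness would force another candidate to exceed its maximum partial score and thus defeat $c$) is exactly the contrapositive of your argument combining the termwise bounds $s(\overline{\boldsymbol{P}}, c') \leq \maxpartial{c'}$ with the equality of sums. Your write-up merely makes explicit what the paper leaves implicit, namely the conservation identity that in every completion the non-fixed candidates occupy exactly the available positions, so the total score harvested from those positions is completion-independent.
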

\begin{proof}
Otherwise, there exists another candidate $c'' \in C \setminus \{c\}$ which makes more than $\maxpartial{c''}$ and, thus, defeats $c$.
\end{proof}

The reductions in this section require construction of partial profiles where candidates have certain pre-specified maximum partial scores. There is a polynomial time algorithm \cite[Lemma 1]{DBLP:journals/jcss/BetzlerD10} to construct votes to realise the maximum partial scores of the candidates. We restate this fact as follows.

\begin{lemma} {\rm{\cite{DBLP:journals/jcss/BetzlerD10}} }
\label{lem:BD}
Given a scoring rule $r$ with the scoring vector $(s_1, \hdots, s_m)$, a set $C$ of $m$ candidates with distinguished candidate $c \in C,$ a value $\mu(c')$, for all $c' \in C \setminus \{c\},$ and a partial profile $\boldsymbol{P}$ where the following properties hold.
\begin{enumerate}
    \item Candidate $c$ is fixed in $\boldsymbol{P}.$
    \item For every $c' \in C \setminus \{c\},$ the value $\mu (c')$ can be written as a sum of at most $| \boldsymbol{P}|$ integers from $s_1, \hdots, s_m.$ Formally, $\mu(c') = \sum\limits_{j=1}^{m} n_j s_j$ where $n_j \in \mathbb{N}_0$ denotes how often the score value $s_j$ is added. Moreover, $\sum\limits_{j=1}^{m} n_j \leq |\boldsymbol{P}|,$
    
   \item There is a dummy candidate $w$, such that $w$ cannot beat the distinguished $c$ in any extension.
\end{enumerate}
Then, a set $\boldsymbol{Q}$ of total votes can be constructed in time polynomial in $|\boldsymbol{P}|$ and $m$, such that for all $c' \in C \setminus \{c\},$ we have $\maxpartial{c'} = \mu (c')$.
\end{lemma}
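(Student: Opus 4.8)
The plan is to reduce the task of realizing prescribed \emph{maximum partial scores} to the task of realizing prescribed \emph{scores in the auxiliary profile} $\boldsymbol{Q}$, and then to build $\boldsymbol{Q}$ by the very same block-and-swap technique used in the proof of Lemma \ref{lemmaDM}. First I would record the key consequence of hypothesis 1: since $c$ is fixed in $\boldsymbol{P}$, its score $F := \text{fixed}_{\boldsymbol{P}}(c)$ from the partial profile is a constant independent of the extension, so for any total profile $\boldsymbol{Q}$ and any extension $\overline{\boldsymbol{P}}$ of $\boldsymbol{P}$ we have $\lambda = s(\overline{\boldsymbol{P}} \cup \boldsymbol{Q}, c) = F + s(\boldsymbol{Q}, c)$. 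Consequently, by the definition of maximum partial score, the requirement $\maxpartial{c'} = \mu(c')$ is equivalent to the single equation $s(\boldsymbol{Q}, c) - s(\boldsymbol{Q}, c') = \mu(c') - F$; that is, it suffices to control the \emph{differences} of scores in the constructed profile $\boldsymbol{Q}$.

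Next I would invoke the block construction from the proof of Lemma \ref{lemmaDM}, now taking the dummy candidate $w$ as the pivot (the analogue of $d$) and treating the remaining $m-1$ candidates, including $c$, as the candidates whose scores are to be set. Using blocks $\beta(w, j)$ together with the swap operation that raises or lowers a chosen candidate's score by $\delta_j = s_j - s_{j+1}$, one obtains a base value $\lambda_{\boldsymbol{Q}}$ such that $s(\boldsymbol{Q}, c) = \lambda_{\boldsymbol{Q}}$ and, for each $c' \in C \setminus \{c, w\}$, $s(\boldsymbol{Q}, c') = \lambda_{\boldsymbol{Q}} + R_{c'}$, where I set $R_{c'} = F - \mu(c')$. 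A direct check then gives $s(\boldsymbol{Q}, c) - s(\boldsymbol{Q}, c') = -R_{c'} = \mu(c') - F$, so that $\maxpartial{c'} = \mu(c')$ as required. The neutralizing blocks $\beta(w, m)$ keep $s(\boldsymbol{Q}, w) < \lambda_{\boldsymbol{Q}}$, which together with hypothesis 3 ensures $w$ never beats $c$; the score of $w$ itself need not be constrained, consistent with $w$ being an auxiliary candidate.

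The only real obstacle is verifying that the construction runs in time polynomial in $|\boldsymbol{P}|$ and $m$, and this is exactly where hypothesis 2 is used. Writing $F = \sum_j f_j s_j$ with $\sum_j f_j \leq |\boldsymbol{P}|$ and $\mu(c') = \sum_j n_j s_j$ with $\sum_j n_j \leq |\boldsymbol{P}|$, and then substituting $s_j = \sum_{l \geq j} \delta_l$ exactly as in the paragraph following Lemma \ref{lemmaDM}, I can rewrite $R_{c'} = \sum_j (f_j - n_j) s_j = \sum_l \eta_{c',l}\, \delta_l$ with $\sum_l |\eta_{c',l}| \leq 2m\,|\boldsymbol{P}|$. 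Since the number of votes produced by the block-and-swap construction is proportional to the total magnitude $\sum_{c'} \sum_l |\eta_{c',l}|$ times a polynomial factor in $m$, the whole profile $\boldsymbol{Q}$ has polynomially many votes and is built in polynomial time. I would expect the bookkeeping of this conversion, together with checking that the adjustments never force a negative score (which is avoided by adding enough balanced blocks to inflate $\lambda_{\boldsymbol{Q}}$), to be the most delicate part, while the combinatorial core is identical to that of Lemma \ref{lemmaDM}.
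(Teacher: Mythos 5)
Your proposal targets a statement for which the paper itself offers no proof: Lemma~\ref{lem:BD} is restated from the literature (it is Lemma~1 of \cite{DBLP:journals/jcss/BetzlerD10}) and is invoked as a black box, so there is no in-paper argument to compare against. Judged on its own, your derivation is essentially correct, and it takes a route the paper does not: you reduce the realization of prescribed maximum partial scores to the realization of prescribed score \emph{differences} in $\boldsymbol{Q}$ --- using that $c$ is fixed, so $\maxpartial{c'} = \mu(c')$ is equivalent to $s(\boldsymbol{Q},c) - s(\boldsymbol{Q},c') = \mu(c') - \text{fixed}_{\boldsymbol{P}}(c)$ --- and then realize those differences by the block-and-swap construction of Lemma~\ref{lemmaDM}, pivoting on $w$ in place of $d$. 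This buys genuine self-containedness: a single vote-construction mechanism then underlies both Section~\ref{sec:results} and Section~\ref{sec:other}, whereas the paper relies on Betzler and Dorn's separate direct construction. Three points need care, though none is fatal. First, you cannot cite Lemma~\ref{lemmaDM} as stated, since its hypothesis $\sum_{j}|\eta_{i,j}| \leq O(m^4)$ can fail when $|\boldsymbol{P}|$ is large relative to $m$; you must appeal to its \emph{proof}, whose vote count scales as a polynomial in $m$ times the total magnitude $\sum_{c'}\sum_{l}|\eta_{c',l}| \leq 2m(m-2)|\boldsymbol{P}|$ --- you do this implicitly, but it should be said explicitly (and the passage to the $\delta$-basis uses the paper's standing normalization $s_m = 0$). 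Second, because $w$ serves as the pivot, its own maximum partial score cannot be set exactly; your construction only guarantees $s(\boldsymbol{Q},w) < \lambda_{\boldsymbol{Q}}$, so the conclusion must be read as holding for $c' \in C \setminus \{c,w\}$, with hypothesis~3 rendering $w$'s exact value irrelevant --- this matches how the lemma is actually used (in Reductions~\ref{red:PWTTB_ubval} and~\ref{red:PWBTB_ubval} only the inequality $\mu(w) \geq t a_1$ is imposed), but the exemption should be stated. Third, your concern about ``negative scores'' is misplaced: negative adjustments $R_{c'} = \text{fixed}_{\boldsymbol{P}}(c) - \mu(c')$ are handled by Case~II of the construction in Lemma~\ref{lemmaDM}, and the balancing blocks $\beta(w,m)$ exist only to keep $w$'s score below $\lambda_{\boldsymbol{Q}}$, not to prevent any score from going negative.
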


We conclude with a proposition on a linear combination of two distinct numbers which will help us prove correctness of the reductions.
\begin{proposition}
\label{prop:lincomb_co_prime}
Let $a_1$ and $a_2$ be two distinct numbers. Let $n_1$ and $n_2$ be two natural numbers and $S = n_1 a_1 + n_2 a_2.$ There exists no  $n_1 \neq n_3 $ and $n_2 \neq n_4$, such that $n_1 + n_2 = n_3 + n_4$ and $n_3 a_1 + n_4 a_2 = S.$
\end{proposition}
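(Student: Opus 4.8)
The plan is to argue by contradiction, treating the two stated constraints as a $2 \times 2$ linear system in the unknowns $(n_3, n_4)$. Suppose, for the sake of contradiction, that numbers $n_3, n_4$ exist with $n_3 \neq n_1$, $n_4 \neq n_2$, $n_3 + n_4 = n_1 + n_2$, and $n_3 a_1 + n_4 a_2 = S = n_1 a_1 + n_2 a_2$.

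First I would eliminate one variable using the sum constraint: write $n_4 = n_1 + n_2 - n_3$ and substitute it into the weighted-sum equation. After expanding and cancelling the common $n_2 a_2$ term, the equation collapses to $n_3(a_1 - a_2) = n_1(a_1 - a_2)$, equivalently $(n_3 - n_1)(a_1 - a_2) = 0$.

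The key (and only) hypothesis used is that $a_1$ and $a_2$ are distinct, so the factor $a_1 - a_2$ is nonzero; dividing through forces $n_3 = n_1$, contradicting $n_3 \neq n_1$, and by the sum constraint this simultaneously yields $n_4 = n_2$. There is no genuine obstacle here: the statement is just the uniqueness of the solution of a nonsingular linear system, the nonsingularity being exactly the condition $a_1 \neq a_2$. Since the argument is a one-line elimination, I would keep the written proof to precisely this computation, noting that its role in the reductions is to guarantee that a candidate's target total admits a unique split between the two relevant score-difference values.
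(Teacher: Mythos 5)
Your proof is correct and follows essentially the same argument as the paper's: a proof by contradiction that uses the sum constraint to eliminate one unknown, cancels terms, and invokes $a_1 \neq a_2$ to force equality of the coefficients. The only cosmetic difference is which variable you eliminate (you derive $n_3 = n_1$ where the paper derives $n_2 = n_4$), which is immaterial by symmetry.
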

\begin{proof}
Suppose there exists $n_3 \neq n_1$ and $n_4 \neq n_2$ such that $n_1 + n_2 = n_3 + n_4$ and $n_3 a_1 + n_4 a_2 = S.$\\
By hypothesis, $n_1 = n_3 + n_4 - n_2.$ Therefore,
\begin{align}
    & n_1 a_1 + n_2 a_2 =  n_3 a_1 + n_4 a_2 \nonumber\\
   \implies~ & (n_3 + n_4 - n_2) a_1 + n_2 a_2 = n_3 a_1 + n_4 a_2 \nonumber\\
   \implies~ & n_2 a_2 - n_2 a_1 = n_4 a_2 - n_4 a_1 \nonumber\\ 
   \implies~ & n_2 ( a_2 - a_1) = n_4 (a_2 - a_1) \nonumber
\end{align}
Since $a_1 \neq a_2$, it must be the case that $n_2 = n_4$ which is a contradiction.
\end{proof}
\subsection{Hardness results for \PWTV}
Recall, that for a scoring vector of length ${m}$, having $m'$ distinct score values, the function $\ell (m, j)$ returns the number of times the distinct score value $a_j$, for $1 \leq j \leq m'$, repeats in a block. Schematically, such a scoring vector can be represented as 

$$\left( \underbrace{a_1, \hdots, a_1}_{\ell (m,1)}, \underbrace{a_{2}, \hdots, a_{2}}_{\ell(m , 2)}, \hdots, \underbrace{ a_{m'}, \hdots, a_{m'}}_{\ell (m, m')} \right)$$
In the reduction below, we consider an unbounded rule $r$ with the following properties. There exists a polynomial $g(u)$ with the property that for all $u$, every scoring vector $\boldsymbol{s}_m$ of $r$ with length $m = g(u)$ has $m' \geq 3$ distinct score values, and if the three smallest score values are $a_{m'-2} > a_{m'-1} > a_{m'}$ , it holds that $m - \ell (m, m'-1) - \ell (m,m') \geq 3u$. Borda count is an obvious example. The lexicographic scoring rule given by $(2^{m}, 2^{m-1}, \hdots, 1)$ for $m$ candidates is another example of the scoring rules considered in this section.
\begin{reduction}
\label{red:PWTTB_ubval}
Let $(\mathcal{X}, \mathcal{Y}, \mathcal{Z}, \mathscr{S})$ be a 3DM instance where and $\mathscr{S} = \{S_1,  \hdots S_t\} \subseteq \mathcal{X} \times \mathcal{Y} \times \mathcal{Z}$ such that $S_i = ( x_{i_1}, y_{i_2}, z_{i_3} )$, for $1 \leq i \leq t$ and $| \mathcal{I}|.$  Let $\boldsymbol{s}_{m}$ be the scoring vector of length $m = g(q)$. The scoring vector has at least three distinct values, i.e., $m' \geq 3$ there are values $a_{m'-2}, a_{m' - 1}, \text{ and } a_{m'}$ such that each of the lengths $\ell (m, m'-1)$ , and $\ell (m, m')$ are fixed and at least one.  We construct an instance of the \PWTV~problem as follows.
\begin{enumerate}
    \item The set of candidates is $C = X \cup Y \cup Z \cup  \{c, w\} \cup H$ where $X$, $Y$, and $Z$ contains candidates corresponding to the elements in $\mathcal{X}, \mathcal{Y}, \text{ and } \mathcal{Z}$ respectively. These candidates are called \emph{element candidates}. The set $H$ contains dummy candidates such that $|H| = m - 3q - 2.$
    
    \item We construct the partial profile $\boldsymbol{P}$ as follows.
    \begin{itemize}
        \item Let the set $H$ be partitioned into $H', H_{m'-1}, \text{ and } H_{m'}$ such that $|H_j| = \ell (m, j) -1$, for $j = \{m'-1, m' \}$.  $H' = H \setminus (H_{m'-1} \cup H_{m'}).$
        For each $S_i = (x_{i_1}, y_{i_2}, z_{i_3} )$, let $C_i' = C \setminus \left( \{x_{i_1}, y_{i_2}, z_{i_3} \} \cup H_{m'-1} \cup H_{m'} \right)$ and $\overrightarrow{C_i'}$ be such that $c \succ w$, i.e., candidate $c$ is always ranked lower than $w$.
        \begin{align*}
            p'_i &= \overrightarrow{C_i'} \succ  x_{i_1} \succ y_{i_2} \succ \overrightarrow{H}_{m'-1} \succ z_{i_3} \succ \overrightarrow{H}_{m'} 
            \\
            p_i &= \overrightarrow{C_i'} \succ ( \{ x_{i_1}, y_{i_2}, z_{i_3} \}  \cup H_{m'-1} \cup {H}_{m'} )
        \end{align*}
        \item $\boldsymbol{P} = \bigcup\limits_{i=1}^{l} p_i$ is profile where each vote is top-truncated.
        \\
        In the vote $p_i,$ note that all the candidates except $\{x_{i_1}, y_{i_2}, z_{i_3} \} \cup H_{m'-1} \cup H_{m'}$ are fixed. In other words, positions $| C'_i| + 1$ though $m$ are available.
        
        $\boldsymbol{P}' = \bigcup\limits_{i=1}^{l} p_i'$ is a total profile. Moreover, each $p'_i$ extends $p_i.$ Let $s(\boldsymbol{P}', c) = \lambda_{\boldsymbol{P}'}.$ Observe that $s(\boldsymbol{P}', w) < \lambda_{\boldsymbol{P}'}$ since $w$ is in a position greater $c$ in all $\overrightarrow{C_i'}$, for $ 1 \leq i \leq t$.
    \end{itemize}
    
\eat{
        \item Consider $C = X \cup Y \cup Z \cup \{c, \} \cup H \cup \{w\}$. Let $\{w\}$ be the set $D$ required in Lemma \ref{lemmaDM} and $\mathbf{R}$ be as follows. Define $f_{c'}$ to be the number of triples in $\mathscr{S}$ which have $c'$ in it.
    \begin{itemize}
        \item for $1 \leq i \leq q$, $R_{x_i} = (\delta_{m-2} + \delta_{m-1} ) - \left(s(\boldsymbol{P}', x_i) - \lambda_{\boldsymbol{P}'} \right)$
        \item for $1 \leq i \leq q$, $R_{y_i} = - \delta_{m-2} - \left(s(\boldsymbol{P}', y_i) - \lambda_{\boldsymbol{P}'} \right)$
        \item for $1 \leq i \leq q$, $R_{z_i} = -\delta_{m-1}- \left(s(\boldsymbol{P}', z_i) - \lambda_{\boldsymbol{P}'} \right)$
        \item $R_{c} = 0$
        \item For all $h \in H$, $R_h = 0 - \left(s(\boldsymbol{P}', z_i) - \lambda_{\boldsymbol{P}'} \right)$
    \end{itemize}
    
    \item By the lemma, there exists $\lambda_{\boldsymbol{Q}} \in \mathbb{N}$ and a total profile $\boldsymbol{Q}$, containing votes polynomial in $m,$ such that the scores of the candidates in the profile $ \boldsymbol{P}' \cup \boldsymbol{Q}$ are as follows. Let $\lambda_{\boldsymbol{P}'} + \lambda_{\boldsymbol{Q}} = \lambda.$
    \begin{itemize}
        \item for all $x \in X $, 
        $s( \boldsymbol{P}' \cup \boldsymbol{Q}, x) = s(\boldsymbol{P}', x) + s(\boldsymbol{Q},x)$ 
            \begin{align*}
                & = \left(\lambda_{\boldsymbol{P}'} + s(\boldsymbol{P}',x) - \lambda_{\boldsymbol{P}'} \right) + \left(\lambda_{\boldsymbol{Q}} +  R_x \right) 
                 = \lambda + (\delta_{m-2} + \delta_{m-1} )
            \end{align*} 
            
        \item for all $y \in Y $, 
        $s( \boldsymbol{P}' \cup \boldsymbol{Q}, y) = s(\boldsymbol{P}', y) + s(\boldsymbol{Q},y)
        = \left(\lambda_{\boldsymbol{P}'} + s(\boldsymbol{P}',y) - \lambda_{\boldsymbol{P}'} \right) + \left(\lambda_{\boldsymbol{Q}} +  R_y \right)  = \lambda - \delta_{m-2}$ 
    
        \item for all $z \in Z $, 
        $s( \boldsymbol{P}' \cup \boldsymbol{Q}, z) = s(\boldsymbol{P}', z) + s(\boldsymbol{Q},z) = \left(\lambda_{\boldsymbol{P}'} + s(\boldsymbol{P}',z) - \lambda_{\boldsymbol{P}'} \right) + \left(\lambda_{\boldsymbol{Q}} +  R_z \right)  = \lambda - \delta_{m-1}$ 
        
        \item $s( \boldsymbol{P}' \cup \boldsymbol{Q}, c) = s(\boldsymbol{P}', c) + s(\boldsymbol{Q},c) = \lambda_{\boldsymbol{P}'} + \lambda_{\boldsymbol{Q}} = \lambda$ 
        
         \item for all $h \in H $, 
        $s( \boldsymbol{P}' \cup \boldsymbol{Q}, h) = s(\boldsymbol{P}', h) + s(\boldsymbol{Q},h) \left(\lambda_{\boldsymbol{P}'} + s(\boldsymbol{P}',h) - \lambda_{\boldsymbol{P}'} \right) + \left(\lambda_{\boldsymbol{Q}} +  R_h \right)  = \lambda $ 
        
       \item $s( \boldsymbol{P}' \cup \boldsymbol{Q}, w) = s(\boldsymbol{P}', w) + s(\boldsymbol{Q},w) < \lambda_{\boldsymbol{P}'} + \lambda_{\boldsymbol{Q}} 
                 < \lambda$ 
    \end{itemize}
}

      \item For an element candidate $e  \in X \cup Y \cup Z,$ let  $f_{e}$ denote the number of triples in $\mathscr{S}$ containing the element of the {\sc 3DM} instance corresponding to candidate $e$. By construction of the partial profile, candidate $e$ is not fixed in $f_{e}$ votes. Let fixed$_{\boldsymbol{P}}(c')$ be the total score made by $c' \in C$ from those votes in $\boldsymbol{P}$ where  $c'$ is fixed.
 Therefore, for any candidate $e \in X \cup Y \cup Z,$ we have $\text{fixed}_{\boldsymbol{P}} (e) = \sum\limits_{i=1}^{t- f_{e}} s_{k_i}$, where $1 \leq k_i \leq m$ is the position of $e$ in a vote where it is fixed.
      
     \item  Consider the following.

    \begin{itemize}
        \item For all $x \in X $, we have 
        $\mu(x) = a_{m'} + (f_x - 1) a_{m'-2} + \text{fixed}_{\boldsymbol{P}}(x).$ 
            
        \item For all $y \in Y $, we have 
        $\mu(y) = a_{m'-2} + (f_y - 1) a_{m'-1} + \text{fixed}_{\boldsymbol{P}}(y).$
    
        \item For all $z \in Z $, we have 
       $\mu(z) = a_{m'-1} + (f_z - 1) a_{m'} + \text{fixed}_{\boldsymbol{P}}(z).$
        
         \item For all $h \in H_j $, we have 
        $\mu(h) = t(a_{j})$  where $j= \{m'-1, m'\}$.
        
         \item For all $h' \in H' $, we have 
        $\mu(h) = \text{fixed}_{\boldsymbol{P}} (h')$.
         
         \item $\mu(w) \geq t a_1.$
        
    \end{itemize}
    
    \item We verify that the profile $\boldsymbol{P}$, and, for all $c' \in C \setminus \{c\}$, the number 
    $\mu(c')$, as specified above, satisfy the properties required by Lemma \ref{lem:BD}.
    \begin{itemize}
        \item Property 1: By the construction of the votes in the reduction, this property is satisfied.
        \item Property 2: For all $e \in X \cup Y \cup Z$, the number $\mu(e)$ is the sum of 
        $(t -f_{e}) + (f_{e} -1) + 1 = t$ score values.
     For all $h \in H$, property 2 is satisfied trivially.
        \item Property 3: Candidate $w$ is fixed in $\boldsymbol{P}$, and in every vote, has a position greater than that of $c$, and therefore, can never defeat $c$ in any extension.
    \end{itemize}
    Therefore, by the lemma, there is a total profile $\boldsymbol{Q}$, which can be constructed in time polynomial in $| \boldsymbol{P} |$ and $m$, such that $\maxpartial{c'} = \mu (c')$, for all $c' \in C \setminus \{c\}$.

    \item We let $C,$ the profile $\boldsymbol{V} =  \boldsymbol{P} \cup \boldsymbol{Q}, \text{ and } c$ be the input to the \PWTV~problem.
\end{enumerate} 
\end{reduction}

\begin{proposition}
\label{prop:tight_PWTTB_ubval}
The profile $\boldsymbol{P} \cup \boldsymbol{Q}$ in Reduction \ref{red:PWTTB_ubval} has the tightness property.
\end{proposition}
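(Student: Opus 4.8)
The plan is to verify the defining equation of tightness directly: writing $L$ for the total of the score values of all available positions across the votes of $\boldsymbol P$, and $R$ for $\sum_{c' \in C \setminus \{c\}}\bigl(\maxpartial{c'} - \text{fixed}_{\boldsymbol{P}}(c')\bigr)$, I would compute $L$ and $R$ separately and check $L=R$. Both sides are linear combinations of the three smallest score values $a_{m'-2}, a_{m'-1}, a_{m'}$, so it suffices to match the coefficient of each.

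First I would compute $L$. In each top-truncated vote $p_i$ the fixed candidates are exactly those of $\overrightarrow{C_i'}$, so the available positions are the last $|\{x_{i_1},y_{i_2},z_{i_3}\}\cup H_{m'-1}\cup H_{m'}| = 1 + \ell(m,m'-1)+\ell(m,m')$ positions of $\boldsymbol{s}_m$. Reading these off the bottom of the block structure, the last $\ell(m,m')$ of them carry value $a_{m'}$, the preceding $\ell(m,m'-1)$ carry $a_{m'-1}$, and the single remaining one carries $a_{m'-2}$; this agrees with the placement of $z_{i_3},\overrightarrow H_{m'}$, then $y_{i_2},\overrightarrow H_{m'-1}$, then $x_{i_1}$ in the completion $p_i'$. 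Hence each vote contributes $a_{m'-2}+\ell(m,m'-1)\,a_{m'-1}+\ell(m,m')\,a_{m'}$, and since there are $t$ votes, $L = t\bigl(a_{m'-2}+\ell(m,m'-1)\,a_{m'-1}+\ell(m,m')\,a_{m'}\bigr)$.

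Next I would compute $R$ by grouping $C\setminus\{c\}$ into the element candidates, the blocks $H_{m'-1},H_{m'}$, the remaining dummies $H'$, and $w$, substituting the values $\mu(c') = \maxpartial{c'}$ fixed in Reduction \ref{red:PWTTB_ubval}. For the element candidates the $\text{fixed}_{\boldsymbol P}$ terms cancel, leaving $\maxpartial{x}-\text{fixed}_{\boldsymbol P}(x)=a_{m'}+(f_x-1)a_{m'-2}$ and analogously $a_{m'-2}+(f_y-1)a_{m'-1}$ for $y$ and $a_{m'-1}+(f_z-1)a_{m'}$ for $z$. Summing over each class and using that every triple meets $\mathcal X,\mathcal Y,\mathcal Z$ in exactly one coordinate, so $\sum_{x\in X} f_x=\sum_{y\in Y} f_y=\sum_{z\in Z} f_z=t$, while for $h\in H_{m'-1}\cup H_{m'}$ one has $\text{fixed}_{\boldsymbol P}(h)=0$ and $\mu(h)=t\,a_j$, and each $h'\in H'$ contributes $0$ because $\mu(h')=\text{fixed}_{\boldsymbol P}(h')$. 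Collecting terms, I expect the coefficient of $a_{m'-2}$ to be $(t-q)+q=t$, that of $a_{m'-1}$ to be $(t-q)+q+t(\ell(m,m'-1)-1)=t\,\ell(m,m'-1)$, and that of $a_{m'}$ to be $q+(t-q)+t(\ell(m,m')-1)=t\,\ell(m,m')$, which reproduces $L$ exactly.

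The main obstacle is the bookkeeping for the reserve candidate $w$: it is fixed in every vote and so occupies no available position, yet it formally appears in $R$. I would dispose of it by observing that its contribution to the score-transfer balance must be read as $\maxpartial{w}-\text{fixed}_{\boldsymbol P}(w)=0$, the large bound $\mu(w)\ge t\,a_1$ serving only to guarantee (via Property~3 of Lemma \ref{lem:BD}) that $w$ can never overtake $c$ and hence stays outside the competition the tightness identity tracks. The only other care needed is in correctly reading the score value attached to each available position from the block structure of $\boldsymbol{s}_m$ and in tracking the $\text{fixed}_{\boldsymbol P}$ cancellations; the combinatorial core is simply the identity $\sum_{e} f_e = t$ for each of the three coordinate classes.
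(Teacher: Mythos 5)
Your proof is correct and takes essentially the same route as the paper's: both expand the two sides of the tightness identity as linear combinations of $a_{m'-2}, a_{m'-1}, a_{m'}$, cancel the $\text{fixed}_{\boldsymbol{P}}$ terms for the element candidates, use $\sum_{x\in X} f_x=\sum_{y\in Y} f_y=\sum_{z\in Z} f_z=t$ to collapse the sums, treat the fixed candidates in $H'\cup\{w\}$ as contributing nothing, and arrive at $t\bigl(a_{m'-2}+\ell(m,m'-1)\,a_{m'-1}+\ell(m,m')\,a_{m'}\bigr)$ on both sides. Your explicit remark that $w$'s term must be read as zero (its large bound $\mu(w)\geq t a_1$ serving only to satisfy the hypotheses of Lemma \ref{lem:BD}) is the same convention the paper adopts implicitly when it asserts that candidates fixed in $\boldsymbol{P}$ contribute nothing to the sum.
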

\begin{proof}
 Recall that $|X| = |Y| = |Z| = q.$ Therefore,

$$\sum\limits_{x \in X}f_x = t \implies \sum\limits_{x \in X}(f_x - 1) = \sum\limits_{x \in X}f_x - \sum\limits_{x \in X} 1 = t - q.$$ 

Similarly, $\sum\limits_{y \in Y}f_y = \sum\limits_{z \in Z}f_z = t$ and $\sum\limits_{y \in Y}(f_y - 1) = \sum\limits_{z \in Z}(f_z - 1) = t-q.$
\\
We focus only on the available positions in the votes of $\boldsymbol{P}$ and the scores the candidates can make in these positions. Thus, for a candidate $e \in X, Y, \text{ and } Z$ we consider the score $\maxpartial{e} - \text{fixed}_{\boldsymbol{P}} (e).$ All candidates $h' \in H'$ and the candidate $w$ are fixed in $\boldsymbol{P}$ and therefore, contributes nothing to the sum. Whereas, all candidates $h \in H_{m'-1} \cup H_{m'}$ are not fixed in any vote in $\boldsymbol{P}$, i.e. $\text{fixed}_{\boldsymbol{P}}(h) = 0.$
 Therefore, the sum of the maximum scores which the candidates can make in the available positions is\\

\begin{align}
& \hspace{-0.75cm} \sum\limits_{c' \in (X \cup Y \cup Z \cup H_{m'-1} \cup H_{m'})} \left( \maxpartial{c'}  - \text{fixed}_{\boldsymbol{P}}(c') \right)
\nonumber \\
   & = \sum\limits_{x \in X} \left( \maxpartial{x} - \text{fixed}_{\boldsymbol{P}}(x) \right)
    + \sum\limits_{y \in Y} \left( \maxpartial{y} - \text{fixed}_{\boldsymbol{P}}(y) \right) \nonumber\\
   & \hspace{1cm} +  \sum\limits_{z \in Z} \left( \maxpartial{z} - \text{fixed}_{\boldsymbol{P}}(z)  \right) + \sum\limits_{h \in H_{m'-1}}\maxpartial{h} 
   + \sum\limits_{h \in H_{m}} \maxpartial{h} 
   \nonumber \\
   & = \sum\limits_{x \in X} \left( a_{m'} + (f_x - 1) a_{m'-2} \right)
    + \sum\limits_{y \in Y} \left( a_{m'-2} + (f_y - 1) a_{m'-1}) \right)
    +  \sum\limits_{z \in Z} \left( a_{m'-1} + (f_z - 1)a_{m'}) \right)\nonumber\\
   & \hspace{1cm}  + \sum\limits_{h \in H_{m'-1}}t(a_{m'-1}) 
   + \sum\limits_{h \in H_{m}} t(a_{m'}) 
   \nonumber \\
   & = q(a_{m'}) + (t-q)(a_{m'-2}) + q(a_{m'-2}) + (t-q)(a_{m'-1}) \nonumber \\ 
   & \hspace{1cm} + q(a_{m'-1}) + (t-q)(a_{m'}) + (\ell(m, m'-1) - 1)t(a_{m'-1}) + (\ell(m, m') - 1)t(a_{m'}) 
   \nonumber \\
   & = \ell(m, m') t(a_{m'}) +  \ell(m, m'-1) t (a_{m'-1}) +  t(a_{m'-2}) \nonumber \\
   & = t \left( a_{m'-2} + \ell(m, m'-1)  a_{m'-1} +  \ell(m, m') a_{m'} \right).
   \label{eq:sum_partial_PWDTB_ub}
\end{align}

Recall, that there are $t$ votes in $\boldsymbol{P}$, one corresponding to every triple in $\mathscr{S}.$ 
Therefore, the sum of the score values of the available positions in the $t$ votes , namely
\begin{itemize}
    \item position $ m - \ell(m, m'-1) - \ell(m, m') $, with score value $a_{m'-2}$,
    \item  positions $ m - \ell(m, m'-1) - \ell(m, m') + 1 $ through 
    $  m -  \ell(m, m') $, each with score value  $a_{m'-1}$, and
    \item positions $m - \ell(m, m') + 1 $ through $m$, with score value $a_{m'}$
\end{itemize} 
 is $t \left( a_{m'-2} + \ell(m, m'-1) (a_{m'-1}) + \ell(m', m) a_{m'} \right)$
which is the same as in (\ref{eq:sum_partial_PWDTB_ub}).
\end{proof}
\begin{proposition}
\label{prop:x_y_in_a_m-2_PWDTB_ub}
In Reduction \ref{red:PWTTB_ubval}, for all completions $\overline{\boldsymbol{P}}$ of $\boldsymbol{P}$, if $c$ is a possible winner in $\overline{\boldsymbol{P}} \cup \boldsymbol{Q}$
then the candidate which is in position 
$ m - \ell(m , m'-1) - \ell(m, m') $, with score value $a_{m'-2}$ is an element candidate of $X \cup Y.$
\end{proposition}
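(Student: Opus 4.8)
The plan is to combine the tightness property with the exact maximum partial scores fixed in the reduction, and then to show that any candidate occupying the unique top available position of a vote would overshoot its maximum partial score unless it belongs to $X \cup Y$. First I would record what Proposition \ref{prop:tight_PWTTB_ubval} and Proposition \ref{prop:tightness} give us together: since $\boldsymbol{P}\cup\boldsymbol{Q}$ is tight, in any completion $\overline{\boldsymbol{P}}$ witnessing $c$ as a winner, \emph{every} candidate $c'\neq c$ attains exactly its maximum partial score $\maxpartial{c'}=\mu(c')$. I would then describe the available positions of each vote $p_i$: after the fixed prefix $\overrightarrow{C'_i}$, the available positions consist of exactly one position of value $a_{m'-2}$ (the position $m-\ell(m,m'-1)-\ell(m,m')$ named in the statement), $\ell(m,m'-1)$ positions of value $a_{m'-1}$, and $\ell(m,m')$ positions of value $a_{m'}$, to be distributed among $x_{i_1},y_{i_2},z_{i_3}$ and the dummies of $H_{m'-1}\cup H_{m'}$. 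Writing $\delta_j=a_j-a_{j+1}$, the whole argument is a case analysis on the type of candidate sitting in the $a_{m'-2}$ position, comparing the score this forces against the exact available budget $\mu(c')-\text{fixed}_{\boldsymbol{P}}(c')$.

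The ``easy'' cases I would dispatch by a direct overshoot computation. For a dummy $h\in H_{m'}$, whose available budget is $t\,a_{m'}$ and which occurs in all $t$ votes, placing it once at $a_{m'-2}$ already yields at least $a_{m'-2}+(t-1)a_{m'}>t\,a_{m'}$, a contradiction; in fact the same minimality forces every $H_{m'}$ dummy to sit at $a_{m'}$ in every vote. For $z\in Z$, with available budget $a_{m'-1}+(f_z-1)a_{m'}$, a placement at $a_{m'-2}$ gives at least $a_{m'-2}+(f_z-1)a_{m'}>a_{m'-1}+(f_z-1)a_{m'}$, which again exceeds the budget; and once $z$ is confined to the two values $a_{m'-1},a_{m'}$, Proposition \ref{prop:lincomb_co_prime} pins its placement to exactly one $a_{m'-1}$ position and $f_z-1$ positions of value $a_{m'}$. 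Thus neither $Z$ nor $H_{m'}$ can occupy the $a_{m'-2}$ position.

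The step I expect to be the main obstacle is ruling out the dummies $h\in H_{m'-1}$, whose available budget is $t\,a_{m'-1}$: here the naive overshoot bound fails, since such a dummy can compensate a single $a_{m'-2}$ position by moving into positions of value $a_{m'}$ (each swap costing $\delta_{m'-2}$ upward against $\delta_{m'-1}$ downward). To close this case I would exploit the \emph{global scarcity} of the lowest positions. Having pinned $H_{m'}$ at $a_{m'}$ and $Z$ at $f_z-1$ positions of value $a_{m'}$, exactly $q$ positions of value $a_{m'}$ remain for $X\cup Y\cup H_{m'-1}$. A short computation with the score relation of an $x\in X$ shows each $x$ can occupy \emph{at most one} $a_{m'}$ position (two would force strictly more $a_{m'-2}$ positions than $x$ has votes), and, when $\delta_{m'-2}\nmid\delta_{m'-1}$, an integrality argument upgrades this to \emph{exactly one}; hence $X$ consumes all $q$ remaining $a_{m'}$ positions, leaving none for $H_{m'-1}$, so by the overshoot bound a dummy with no $a_{m'}$ position cannot afford an $a_{m'-2}$ position. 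The genuinely hard residual regime is the degenerate divisibility case $\delta_{m'-2}\mid\delta_{m'-1}$ (arithmetic blocks), where the aggregate counting becomes rank-deficient and a per-vote feasibility argument is required; I would note, however, that for the rules highlighted in this section, namely Borda and the lexicographic rule, the three bottom blocks satisfy $\ell(m,m'-1)=\ell(m,m')=1$, so $H_{m'-1}=H_{m'}=\emptyset$ and only the clean $Z$-case remains.
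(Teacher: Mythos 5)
Your strategy---tightness plus per-candidate budget accounting---is genuinely different from the paper's proof, which is purely aggregate: the paper sums the available-position budgets of $X \cup Y$ to get $q\,a_{m'} + t\,a_{m'-2} + (t-q)\,a_{m'-1}$ and then simply asserts that if even one vote has no $X\cup Y$ candidate in the $a_{m'-2}$ position, the realized total for $X\cup Y$ is ``strictly less'' than this quantity. The parts of your argument that you do close are correct: the overshoot computations ruling out $Z$ and $H_{m'}$, the use of Proposition \ref{prop:lincomb_co_prime} to pin each $z$ to one $a_{m'-1}$ position and $f_z-1$ positions of value $a_{m'}$, and the counting showing each $x \in X$ occupies at most one $a_{m'}$ position, exactly one when $\delta_{m'-2} \nmid \delta_{m'-1}$.

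The gap you flag in the $H_{m'-1}$ case is genuine, and you should know that it cannot be closed: in that degenerate regime the proposition is false. Consider the pure unbounded rule obtained from $(1,1,0)$ by repeatedly inserting a new largest value at the front, so its length-$m$ vector is $(m-2, m-3, \ldots, 2, 1, 1, 0)$; here $\ell(m,m'-1)=2$, $\ell(m,m')=1$, $\delta_{m'-2}=\delta_{m'-1}=1$, and $g(u)=3u+3$ meets the hypothesis of Reduction \ref{red:PWTTB_ubval}. Run the reduction on the positive {\sc 3DM} instance with $q=2$, $t=4$, $\mathscr{S} = \{(x_1,y_1,z_1),\ (x_2,y_2,z_2),\ (x_1,y_2,z_1),\ (x_2,y_2,z_1)\}$; then $m=9$, the available values in each vote are $(2,1,1,0)$, and the single dummy $h \in H_{m'-1}$ is unfixed in all four votes with budget $t\,a_{m'-1}=4$. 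Complete the four votes with the suffixes $y_1 \succ h \succ x_1 \succ z_1$, $\ h \succ z_2 \succ x_2 \succ y_2$, $\ y_2 \succ h \succ x_1 \succ z_1$, and $y_2 \succ z_1 \succ x_2 \succ h$. A direct check shows every candidate lands exactly on its maximum partial score ($x_1\colon 1{+}1$, $x_2\colon 1{+}1$, $y_1\colon 2$, $y_2\colon 0{+}2{+}2$, $z_1\colon 0{+}0{+}1$, $z_2\colon 1$, $h\colon 1{+}2{+}1{+}0$), so $c$ ties for the win and this completion witnesses $c$ as a possible winner---yet $h$, not an element candidate, holds the $a_{m'-2}$ position in the second vote. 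This is also exactly where the paper's own proof fails: missing the top available position in one vote costs $X\cup Y$ at most $\delta_{m'-2}$ (or $\delta_{m'-2}+\delta_{m'-1}$ when $\ell(m,m'-1)=1$), whereas the aggregate slack is $q\,\delta_{m'-1}$, so the paper's ``strictly less'' inference is only valid when $\delta_{m'-2} > q\,\delta_{m'-1}$, which Borda-like spacing violates. In short, your case analysis isolates the right obstruction, your proof is complete for rules with $\ell(m,m'-1)=\ell(m,m')=1$ (Borda, the lexicographic rule), and what remains is not a missing idea on your side but a counterexample to the proposition as stated; repairing it requires either an added hypothesis on the rule or a rerouting of the proof of Lemma \ref{lem:PWTTB_ub_val}.
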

\begin{proof}
By the construction of the partial profile in Reduction \ref{red:PWTTB_ubval}, there are $t$ votes in $\boldsymbol{P}$. By Proposition \ref{prop:tight_PWTTB_ubval}, the profile $\boldsymbol{P} \cup \boldsymbol{Q}$ has the tightness property. Therefore, the sum of the maximum scores which the elements in $X \cup Y$ can make from the available positions in all the votes in $\boldsymbol{P}$ is
\begin{align}
   & \hspace{-2cm} \sum\limits_{x \in X} \left( \maxpartial{x} - \text{fixed}_{\boldsymbol{P}}(x) \right)
   +  \sum\limits_{y \in Y} \left( \maxpartial{y} - \text{fixed}_{\boldsymbol{P}}(y) \right) \nonumber\\
    & =  q(a_{m'}) + (t-q)(a_{m'-2}) + q(a_{m'-2}) + (t-q)(a_{m'-1}) \nonumber \\
    & = q(a_{m'}) + t(a_{m'-2}) + (t-q)(a_{m'-1}).
    \label{eq:sum_partial_x_y_PWDTB_ub}
\end{align}
Let $\overline{\boldsymbol{P}}$ be a completion of $\boldsymbol{P}$. In each vote $\overline{p} \in \overline{\boldsymbol{P}},$ an element of $X \cup Y$ can gain $a_{m'-2}$, $a_{m'-1}$, or $a_{m'}$ points. 
If there exists a vote in $\overline{\boldsymbol{P}}$ such that an element candidate of $X \cup Y$ is not in position $ m - \ell(m , m'-1) - \ell(m, m') $ the total score will be strictly less than that in \ref{eq:sum_partial_x_y_PWDTB_ub}, and therefore in violation of the tightness property.
\end{proof}
\begin{lemma}
\label{lem:PWTTB_ub_val}
Let $r$ be an unbounded rule such that there exists a polynomial $g(u)$ with the property that for all $u$, every scoring vector $\boldsymbol{s}_m$ of $r$, with length $m = g(u)$, has $m' \geq 3$ distinct score values, and if the three smallest score values are $a_{m'-2} > a_{m'-1} > a_{m'}$ , it holds that $m - \ell (m, m'-1) - \ell (m,m') \geq 3u$. Then reduction \ref{red:PWTTB_ubval} is a polynomial-time reduction of {\sc 3DM } to \emph{\PWTV}~w.r.t.\ $r$.
\end{lemma}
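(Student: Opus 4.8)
The plan is to show that Reduction~\ref{red:PWTTB_ubval} runs in polynomial time and that the given {\sc 3DM} instance has a perfect matching if and only if $c$ is a possible winner in the constructed \PWTV~instance. For the running time, I would note that $m=g(q)$ is polynomial in $q$, that $\boldsymbol{P}$ consists of $t\le O(q^3)$ top-truncated votes, and that the reduction already writes each target $\mu(c')$ as a sum of at most $t=|\boldsymbol{P}|$ score values (its Property~2); hence Lemma~\ref{lem:BD} yields $\boldsymbol{Q}$ in time polynomial in $|\boldsymbol{P}|$ and $m$, and the whole instance is built in polynomial time.

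For the direction ``matching $\Rightarrow$ possible winner'', given a perfect matching $\mathscr{S}'$ I would complete each $p_i$ by its target order $p'_i$ when $S_i\notin\mathscr{S}'$, and by the cyclic order $\overrightarrow{C_i'}\succ y_{i_2}\succ z_{i_3}\succ\overrightarrow{H}_{m'-1}\succ x_{i_1}\succ\overrightarrow{H}_{m'}$ when $S_i\in\mathscr{S}'$. Since each element lies in exactly one triple of $\mathscr{S}'$, each $x$ then takes $a_{m'}$ once and $a_{m'-2}$ in its other $f_x-1$ available votes, each $y$ takes $a_{m'-2}$ once and $a_{m'-1}$ otherwise, each $z$ takes $a_{m'-1}$ once and $a_{m'}$ otherwise, while every dummy keeps its value ($a_{m'-1}$ on $H_{m'-1}$, $a_{m'}$ on $H_{m'}$, fixed on $H'$). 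A direct check then shows every candidate $c'\neq c$ attains exactly $\maxpartial{c'}=\mu(c')$, hence scores $\lambda=s(\overline{\boldsymbol{P}}\cup\boldsymbol{Q},c)$, while $w$ scores strictly less; so $c$ wins.

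For the converse, fix a completion $\overline{\boldsymbol{P}}$ in which $c$ wins. By Proposition~\ref{prop:tight_PWTTB_ubval} the profile is tight, so by Proposition~\ref{prop:tightness} every $c'\neq c$ attains exactly $\mu(c')$. I would first pin the dummies: each $h\in H_{m'-1}$ is available in all $t$ votes with $\mu(h)=t\,a_{m'-1}$ and, being barred from the unique $a_{m'-2}$ slot by Proposition~\ref{prop:x_y_in_a_m-2_PWDTB_ub}, must occupy $a_{m'-1}$ in every vote; symmetrically each $h\in H_{m'}$ occupies $a_{m'}$ everywhere. Hence in each vote the three element candidates $x_{i_1},y_{i_2},z_{i_3}$ fill a permutation of the three slots $a_{m'-2},a_{m'-1},a_{m'}$, with $Z$ excluded from $a_{m'-2}$. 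Each $z$ therefore draws its score from $\{a_{m'-1},a_{m'}\}$ summing to $a_{m'-1}+(f_z-1)a_{m'}$, so Proposition~\ref{prop:lincomb_co_prime} pins it to $a_{m'-1}$ in exactly one vote; as distinct $z$'s lie in distinct votes, this singles out $q$ votes covering $\mathcal{Z}$, and it remains to force coverage of $\mathcal{X}$ and $\mathcal{Y}$.

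That last step is where I expect the main difficulty. In the simplest case the target of each $x$ (resp.\ $y$) can be met only by a single $a_{m'}$ (resp.\ $a_{m'-2}$) placement, which exposes the matching directly; but the lemma allows rules with $\delta_{m'-2}=\delta_{m'-1}$ (e.g.\ Borda), where an $x$ can instead meet its target with two copies of $a_{m'-1}$. Then the $q$ ``$z$-high'' votes need not be pairwise disjoint on their $\mathcal{X}$- and $\mathcal{Y}$-coordinates, and the per-candidate reading fails. I would therefore argue coverage globally: tightness makes each vote contribute exactly one $a_{m'-2}$, one $a_{m'-1}$, and one $a_{m'}$ to the element candidates, and the aggregate counts force $X\cup Y$ into exactly $q$ of the $a_{m'}$ slots and all $t$ of the $a_{m'-2}$ slots. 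Combining these per-vote permutation constraints with the distinctness already secured on $\mathcal{Z}$, a Hall/system-of-distinct-representatives argument should then produce $q$ votes disjoint in all three coordinates, i.e.\ a perfect matching. Turning this global counting into a rigorous argument — rather than relying on the per-candidate uniqueness available in the analogous Lemma~\ref{lem:p_val_1} — is the heart of the proof.
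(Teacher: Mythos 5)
Your proposal is sound, and coincides with the paper's own argument, for the running-time bound, the forward direction (your completions are exactly those in the paper's proof), and the first half of the backward direction: pinning the dummies of $H_{m'-1}$ and $H_{m'}$ via Propositions \ref{prop:tightness} and \ref{prop:x_y_in_a_m-2_PWDTB_ub}, and pinning each $z \in Z$ to a single $a_{m'-1}$ slot via Proposition \ref{prop:lincomb_co_prime}. The genuine gap is exactly where you place it: the extraction of a perfect matching is left as a sketch (``a Hall/system-of-distinct-representatives argument should then produce $q$ votes disjoint in all three coordinates''), and this step does not follow from the aggregate counts you list. Concretely, with Borda-like values $a_{m'-2}=2$, $a_{m'-1}=1$, $a_{m'}=0$ on the three element-accessible slots of each vote, take $q=3$ and the triples $S_1=(x_1,y_1,z_1)$, $S_2=(x_1,y_1,z_2)$, $S_3=(x_2,y_1,z_1)$, $S_4=(x_2,y_2,z_2)$, $S_5=(x_3,y_3,z_3)$, completed so that in $S_1$ and $S_2$ candidate $x_1$ takes value $1$ and $y_1$ takes value $2$; in $S_3$ candidate $x_2$ takes $2$, $y_1$ takes $0$, and $z_1$ takes $1$; and $S_4$, $S_5$ are completed in the intended matching fashion. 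Every candidate then attains exactly its maximum partial score (e.g.\ $y_1$ gets $2+2+0$ and $x_1$ gets $1+1$), so tightness holds and $c$ wins; yet the three votes in which a $Z$-candidate takes value $1$, namely $S_3$, $S_4$, $S_5$, share $x_2$ and are not disjoint. A matching does exist ($S_1$, $S_4$, $S_5$), but it cannot be read off those votes: one must reroute the deviating placements (the two value-$1$ placements of $x_1$ against the value-$0$ placement of $y_1$) into canonical ones, and neither your Hall sketch nor a direct counting argument accomplishes this. Making that exchange argument rigorous is precisely the missing content.

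For comparison, the paper does not close this gap either. Step 4 of its backward direction asserts, citing Proposition \ref{prop:lincomb_co_prime}, that $y_{i_2}$ can only occupy slots of value $a_{m'-2}$ or $a_{m'-1}$, and that each $y$ takes $a_{m'-2}$ exactly once, on pain of defeating $c$ or violating tightness; the matching is then read off the $q$ pinned votes. But that proposition only excludes a second representation over the \emph{same two} values; it says nothing about representations mixing all three values, and such representations exist whenever $a_{m'-2}-a_{m'-1} = a_{m'-1}-a_{m'}$ (in particular for Borda, which the lemma must cover): the completion above respects tightness even though $y_1$ occupies an $a_{m'}$ slot and $x_1$ never occupies one. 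So your diagnosis is correct and applies verbatim to the paper's own argument; what your proposal lacks is the same thing the paper lacks, namely a proof (an exchange/rerouting argument, or a restriction to rules and instances where $\delta_{m'-2} \neq \delta_{m'-1}$ makes per-candidate pinning sound) that an arbitrary winning completion yields a perfect matching.
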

\begin{proof}
    Given a 3DM instance $\mathcal{I} = (\mathcal{X}, \mathcal{Y}, \mathcal{Z}, \mathscr{S}),$ we construct a \PWTV~ instance, $(C, \boldsymbol{V} = \boldsymbol{P} \cup \boldsymbol{Q}, c),$ according to the above reduction. We let $u =|\mathcal{I}| = q.$
    \\
    First, we prove the "$\impliedby$'' direction.
    Assume that the \PWPC~instance $(C, \boldsymbol{V} = \boldsymbol{P} \cup \boldsymbol{Q}, c)$ obtained above is a positive one. Therefore, there exists a total profile $\boldsymbol{P}^* = \bigcup_{i=1}^{l} p^*_i$ such that  
    \begin{itemize}
        \item for all $ 1 \leq i \leq t, p^*_i \text{ extends } p_i$
        \item $c$ is a possible winner and its score remains the same in all extensions (since by construction, $c$ is a fixed candidate in $\boldsymbol{P}$).
    \end{itemize}
 By Proposition \ref{prop:tight_PWTTB_ubval}, $\boldsymbol{P}$ has the tightness property. In the following, use tightness and  other properties of $\boldsymbol{P}$ to reason about $\boldsymbol{P}^*$. Recall the schematic representation of the scoring vector is
 $$\left(  \underbrace{ \hdots, a_{m'-2}}_{\geq 3q}, \underbrace{ a_{m'-1}, \hdots, a_{m'-1}}_{\ell (m, m'-1)}, \underbrace{ a_{m'}, \hdots, a_{m'}}_{\ell (m, m')} \right)$$
    \begin{enumerate}
        
        \item By construction and the maximum partial scores set in the reduction, all candidates $h \in H_{m'}$ must be in a position with score value $a_{m'}.$ If any candidate $h \in H_{m'}$ is in a position with score value greater than $a_{m'}$, it will defeat $c$. Without loss of generality, assume that they are in positions 
        positions $m - \ell(m, m') + 2 $ through $m$.
        
         \item By Proposition \ref{prop:x_y_in_a_m-2_PWDTB_ub}, in all the votes, the candidate in the position $ m - \ell(m, m') - \ell(m, m' -1) $, which has a score value of $a_{m'-2}$, is an element candidate of $X \cup Y.$ \label{point:posa_PWDTB_ub}

        \item Observe that no candidate $h$ in $H_{m'-1}$ can be in position $ m - \ell(m, m') + 1 $. \\
        For, if there exists such an $h$, then, by tightness, it would have to be in position $ m - \ell(m, m') $ at least once. By \ref{point:posa_PWDTB_ub}, this is not possible.
        
        \item Observe that, by Proposition \ref{prop:lincomb_co_prime},  for $1 \leq i \leq t,$ in any completion of a vote $p_i$, for $S_i = (x_{i_1}, y_{i_2}, z_{i_3} )$, the element candidate corresponding to $y_{i_2}$ has to be in a position with score value $a_{m'-2}$ or $a_{m'-1}.$ Otherwise, since the total number of votes is fixed, it would violate tightness. 
        
        \begin{enumerate}
            \item Moreover, by the maximum partial scores set in the construction, for $1 \leq i \leq q,$ every $y_i$ scores $a_{m'-2}$ exactly once. More than once, $y_{i}$ defeats $c$ and less than once violates tightness. Let these votes be $p^*_{k_1}, \hdots, p^*_{k_q}$, where $1 \leq k_i \leq t$ and $K = \{k_i | 1 \leq i \leq q \}.$
            
            \item For $1 \leq i  \leq t \text{ and } i  \notin K,$ in $p^*_{i}$, candidate $y_{i_2}$ is in position $m - \ell(m, m') -  \ell(m, m'-1) + 1 .$
            \label{point:pos_a-2_+1_PWDTB_ub}
        \end{enumerate}
        
        \item Thus, for $1 \leq i \leq t \text{ and } i \notin K,$ in the vote $p^*_{i}$,
        
        \begin{enumerate}
            \item by \ref{point:posa_PWDTB_ub} and \ref{point:pos_a-2_+1_PWDTB_ub}, candidate $x_{i_1}$ is in position $ m - \ell(m, m') - \ell(m, m'-1) $.
            
            \item $z_{i_3}$ must be in $m - \ell(m, m') + 1.$
        \end{enumerate}

        \item  By the tightness property, for $i \in K$, in $p^*_i,$ candidate $z_{i_3}$ must be in position $m - \ell(m , m') - \ell(m, m'-1) +1.$ By the maximum partial scores set for $z \in Z,$ the element candidates of $Z$ in these $q$ votes must be distinct.
        
        \item By the tightness property, each of the $q$ candidates in $X$ have to score $a_{m'}$ points exactly once. Therefore it must be true that the element candidates of $X$ in position $m - \ell (m, m') + 1$ in the votes $p^*_i$, for $i \in K$, are distinct.
        
        \item Therefore, the set $\{S_i | i \in K\}$ must form a cover for $\mathcal{X} \cup \mathcal{Y} \cup \mathcal{Z}.$
    \end{enumerate}
    For the `$\implies$' direction, let $(\mathcal{X}, \mathcal{Y}, \mathcal{Z}, \mathscr{S})$ be a positive instance. We show that in the \PWDV~ instance, $(C, \boldsymbol{V} = \boldsymbol{P} \cup \boldsymbol{Q}, c),$ constructed in the reduction, $c$ is, indeed, a possible winner. By hypothesis, there is a $\mathscr{S}' \subseteq \mathscr{S}$ and $| \mathscr{S}' | = q.$
    \begin{enumerate}
        \item Complete each vote  $p_i \in \boldsymbol{P}$ to $p^*_i$ as follows.
        \begin{itemize}
            \item $p^*_i = \overrightarrow{C_i'}  \succ  y_{i_2} \succ z_{i_3} \succ \overrightarrow{H}_{m'-1} \succ x_{i_1} \succ \overrightarrow{H}_{m'}$  if $S_i \in \mathscr{S}'$
            
            \item $p^*_i = \overrightarrow{C_i'} \succ x_{i_1}  \succ  y_{i_2} \succ  \overrightarrow{H}_{m'-1} \succ z_{i_3} \succ \overrightarrow{H}_{m'}$ if $S_i \notin \mathscr{S}'$
        \end{itemize}
        
        Let $\boldsymbol{P}^* = \bigcup_{i=1}^{l} p^*_i.$

\eat{        
\item We compute the scores of each candidate in $\boldsymbol{P}^* \cup \boldsymbol{Q}$
        \begin{itemize}
            \item For all $x \in X,~
            s( \boldsymbol{P}^* \cup \boldsymbol{Q}, x) 
            = s(\boldsymbol{P}^*, x) + s(\boldsymbol{Q},x) 
            =  s(\boldsymbol{P}', x) + \delta_{m'-3}   + s(\boldsymbol{Q},x)$
            \begin{align*}
            =  \lambda_{\boldsymbol{P}'} + s(\boldsymbol{P}',x)  - \lambda_{\boldsymbol{P}'}  - \delta_{m-2} - \delta_{m'-1} + \lambda_{\boldsymbol{Q}} +  R_x  = \lambda
            \end{align*}

            \item For all $y \in Y, 
            s( \boldsymbol{P}^* \cup \boldsymbol{Q}, y) 
            = s(\boldsymbol{P}^*, y) + s(\boldsymbol{Q},y)
            = s(\boldsymbol{P}', y)  + \delta_{m'-2} + s(\boldsymbol{Q},y) $
            \begin{align*}
                = \lambda_{\boldsymbol{P}'} + s(\boldsymbol{P}',y) - \lambda_{\boldsymbol{P}'}   + \delta_{m'-2} + \left(\lambda_{\boldsymbol{Q}} +  R_y \right) = \lambda 
            \end{align*}
            
            \item  For all $z \in Z , 
            s( \boldsymbol{P}^* \cup \boldsymbol{Q}, z) 
            = s(\boldsymbol{P}^*, z) + s(\boldsymbol{Q},z)
            = s(\boldsymbol{P}', z)  + \delta_{m'-1} + s(\boldsymbol{Q},z)$ \begin{align*}
                = \left(\lambda_{\boldsymbol{P}'} + s(\boldsymbol{P}',z) - \lambda_{\boldsymbol{P}'} \right) + \delta_{m'-1} + \left(\lambda_{\boldsymbol{Q}} +  R_z \right) = \lambda
            \end{align*}
                
            \item $s( \boldsymbol{P}^* \cup \boldsymbol{Q}, c) = s(\boldsymbol{P}^*, c) + s(\boldsymbol{Q},c)
                = s(\boldsymbol{P}', c) + s(\boldsymbol{Q},c) = \lambda_{\boldsymbol{P}'} + \lambda_{\boldsymbol{Q}} = \lambda$
            
             \item The score of the candidates in $H$ remain unchanged, i.e., for all $h \in H,$ $s( \boldsymbol{P}^* \cup \boldsymbol{Q}, h) = s( \boldsymbol{P}' \cup \boldsymbol{Q}, h) = \lambda.$
        \end{itemize}
}    
    \item We verify that the candidates have, indeed, scored no more than the respective maximum partial scores in the completion $\boldsymbol{P}^*$ of $\boldsymbol{P}$. Recall that the position of $c$ is fixed in both $\boldsymbol{Q}$ and $\boldsymbol{P}$.
    
    \begin{itemize}
            \item For all $x \in X,$ we have 
            $ s(\boldsymbol{P}^*, x)
            = a_{m'} + (f_x - 1) a_{m'-2} + \text{fixed}_{\boldsymbol{P}}(x).$
            
             \item For all $y \in Y,$ we have $s(\boldsymbol{P}^*, y)
            = a_{m'-2} + (f_y - 1) a_{m'-1} + \text{fixed}_{\boldsymbol{P}}(y).$
            
            \item For all $z \in Z,$ we have $s(\boldsymbol{P}^*, z)
            = a_{m'-1} + (f_z - 1) a_{m'} + \text{fixed}_{\boldsymbol{P}}(z).$
            
            \item For all $h \in H_{m'-1},$ we have $
            s(\boldsymbol{P}^*, h)
            = t(a_{m'-1}).$
            
            \item For all $h \in H_{m'},$ we have $
            s(\boldsymbol{P}^*, h)
            = t(a_{m'}).$

             \item The positions of all the candidates in $H'$, as constructed in the reduction, are fixed and therefore their scores do not change, i.e., for all $h' \in H',$ we have
            $ s(\boldsymbol{P}^*, h')
            = \text{fixed}_{\boldsymbol{P}}(h')$
            
            \item By construction, we have $s(\boldsymbol{P}^*, w) < s(\boldsymbol{P}^*, c).$
        \end{itemize}
    Therefore, $c$ is a possible winner.
    \end{enumerate}
    \vspace{-2pc}
\end{proof}

The following lemma is a direct consequence of the Lemma \ref{lem:PWTTB_ub_val}.

\begin{lemma}
\label{lem:PWTTB_pval}
Let $r$ be a $p$-valued positional scoring rule, where $p \geq 3$, there exists a polynomial $g(u)$ with the property that for all $u$, every scoring vector $\boldsymbol{s}_m$ of $r$, with length $m = g(u)$, has $m' = p$ distinct score values, and if the three smallest score values are $a_{m'-2} > a_{m'-1} > a_{m'}$ , it holds that $m - \ell (m, m'-1) - \ell (m,m') \geq 3u$. Then \emph{\PWTV}~problem w.r.t.\ $r$ is NP-complete. 
\end{lemma}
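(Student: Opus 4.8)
The plan is to derive this lemma directly from Lemma \ref{lem:PWTTB_ub_val} to obtain \NP-hardness, and then separately record membership in \NP. First I would verify that a $p$-valued rule $r$ with $p \geq 3$ satisfying the stated hypothesis meets the preconditions of Lemma \ref{lem:PWTTB_ub_val}. Since $r$ is $p$-valued, there is an $n_0$ such that every scoring vector of length $m \geq n_0$ has exactly $p$ distinct values; hence, assuming $g(u) \geq n_0$, for $m = g(u)$ we have $m' = p \geq 3$, so the three smallest distinct values $a_{p-2} > a_{p-1} > a_p$ exist, and the hypothesis supplies precisely the block-length inequality $m - \ell(m, p-1) - \ell(m, p) \geq 3u$ that Lemma \ref{lem:PWTTB_ub_val} requires (reading $m' = p$). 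Thus Reduction \ref{red:PWTTB_ubval} is applicable with $u = q$.

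The one point that needs care is that Lemma \ref{lem:PWTTB_ub_val} is phrased for \emph{unbounded} rules, whereas a $p$-valued rule is bounded. I would emphasise that its proof never invokes unboundedness: Reduction \ref{red:PWTTB_ubval}, the tightness computation in Proposition \ref{prop:tight_PWTTB_ubval}, and Proposition \ref{prop:x_y_in_a_m-2_PWDTB_ub} all depend only on the existence of at least three distinct score values and on the block-length inequality, both of which hold here with $m' = p$. Concretely, the candidate set uses $|H| = m - 3q - 2$ dummy candidates, the element and dummy candidates are placed in the available positions carrying the three smallest values $a_{p-2}, a_{p-1}, a_p$, and the correctness argument in both directions manipulates only these three values together with the fixed scores. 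Nothing in this reasoning requires the number of distinct values to grow with $m$. Consequently the reduction and its correctness proof transcribe verbatim, with each occurrence of $m'$ read as $p$, giving a polynomial-time reduction of {\sc 3DM} to \PWTV~w.r.t.\ $r$ and hence \NP-hardness.

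For membership in \NP, I would observe that \PWTV~is a restriction of the general \PW~problem: a certificate is a completion $\boldsymbol{P}^*$ of the partial profile, whose size is polynomial in the input, and since $r$ is computable in time polynomial in $m$, the scores of all candidates in $\boldsymbol{P}^* \cup \boldsymbol{Q}$ and the check that $c$ is a winner are computable in polynomial time. Combining this with the \NP-hardness above yields \NP-completeness of \PWTV~w.r.t.\ $r$.

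The main obstacle, such as it is, lies entirely in the second paragraph: one must confirm that fixing the number of distinct values at $p$ leaves every step of Reduction \ref{red:PWTTB_ubval} intact. In particular I would double-check that $g(u) \geq n_0$ may be assumed (so that $\boldsymbol{s}_m$ indeed has exactly $p$ values), and that the position $m - \ell(m, p-1) - \ell(m, p)$ carrying value $a_{p-2}$ is well defined and available in each vote—both of which follow from $p \geq 3$ and the block-length inequality. Once this bookkeeping is checked, the conclusion is immediate, which is why the statement is flagged as a direct consequence of Lemma \ref{lem:PWTTB_ub_val}.
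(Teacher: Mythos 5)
Your proposal is correct and takes essentially the same route as the paper: the paper likewise obtains this lemma as a direct consequence of Lemma \ref{lem:PWTTB_ub_val}, observing that Reduction \ref{red:PWTTB_ubval} only needs at least three distinct score values (to specify the $\mu(c')$ values) together with the block-length inequality, both of which a $p$-valued rule with $p \geq 3$ satisfying the hypothesis provides. Your write-up is in fact more careful than the paper's two-sentence argument, since you explicitly note that unboundedness is never invoked in the earlier proof and you record \NP\ membership separately.
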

\begin{proof}
Observe that $p$-valued positional scoring rule, where $p \geq 3$, is sufficient for the construction in Reduction \ref{red:PWTTB_ubval}. In particular, to specify the value $\mu(c')$ for each candidate $c' \in C \setminus \{c\}$ we need at least three distinct score values which we always have.
\end{proof}
\subsection{Hardness results for \PWBV}
In this section, we prove \NP-completeness for the \PWBV~problem w.r.t a restricted group of unbounded rules. We consider an unbounded rule $r$ with the following properties. There exists a polynomial $g(u)$ with the property that for all $u$, every scoring vector $\boldsymbol{s}_m$ of $r$ with length $m = g(u)$ has at least three distinct score values, and if the largest three score values are $a_1 > a_2 > a_3$ , it holds that $m - \ell (m, 1) - \ell (m,2) \geq 3u$. Besides the Borda count, examples include broad families of scoring rules with scoring vectors in which the fist $k$ score values are distinct. Schematically, such a scoring vector can be represented as follows
$$\left(  \underbrace{a_{1}, \hdots, a_{1}}_{\ell(m, , 1)}, \underbrace{ a_{2}, \hdots, a_{2}}_{\ell (m, 2)}, \underbrace{ a_{3},  \hdots}_{\geq 3u} \right)$$

\begin{reduction}
\label{red:PWBTB_ubval}
The reduction is similar to Reduction \ref{red:PWTTB_ubval}. We only provide the key steps below.
\begin{itemize}
    \item The set of candidates is $C = X \cup Y \cup Z \cup  \{c, w\} \cup H$ where $X$, $Y$, and $Z$ contains candidates corresponding to the elements in $\mathcal{X}, \mathcal{Y}, \text{ and } \mathcal{Z}$ respectively. These candidates are called \emph{element candidates}. The set $H$ contains dummy candidates such that $|H| = m - 3q - 2.$

    \item We construct the partial profile $\boldsymbol{P}$ as follows.
    \begin{itemize}
        \item Let the set $H$ be partitioned into $H_1, H_2, \text{ and } H'$ such that $|H_j| = \ell (m, j) -1$, for $j \in \{ 1,2 \}$. $H' = H \setminus (H_1 \cup H_2).$
        For each $S_i = (x_{i_1}, y_{i_2}, z_{i_3} )$, let $C_i' = C \setminus \left( \{x_{i_1}, y_{i_2}, z_{i_3} \} \cup H_{1} \cup H_{2} \right)$ and $\overrightarrow{C_i'}$ be such that $c \succ w$, i.e., candidate $c$ is always ranked lower than $w$.
        \begin{align*}
            p'_i &= \overrightarrow{H_1} \succ  x_{i_1} \succ y_{i_2} \succ \overrightarrow{H}_{2} \succ z_{i_3} \succ \overrightarrow{C_i'} 
            \\
            p_i &= \left( H_1 \cup \{ x_{i_1} , y_{i_2} , z_{i_3} \} \cup H_2 \right) \succ \overrightarrow{C_i'}
        \end{align*}
        \item $\boldsymbol{P} = \bigcup\limits_{i=1}^{l} p_i$ is a partial profile where each vote is bottom-truncated.
        \\
        In the vote $p_i,$ note that all the candidates except $\{x_{i_1}, y_{i_2}, z_{i_3} \} \cup H_{1} \cup H_{2}$ are fixed. In other words, positions one though $\ell(m, 1) + \ell (m, 2) + 1$ are available.
        
        $\boldsymbol{P}' = \bigcup\limits_{i=1}^{l} p_i'$ is a total profile. Moreover, each $p'_i$ extends $p_i.$ Let $s(\boldsymbol{P}', c) = \lambda_{\boldsymbol{P}'}.$ Observe that $s(\boldsymbol{P}', w) < \lambda_{\boldsymbol{P}'}$ since $w$ is in a position greater $c$ in all $\overrightarrow{C_i'}$, for $ 1 \leq i \leq t$.
    \end{itemize}
      \item  Now we specify the value $\mu(c')$ for each $c' \in C \setminus \{c\}.$

    \begin{itemize}
        \item For all $x \in X $, we have 
        $\mu(x) = a_{3} + (f_x - 1) a_{1} + \text{fixed}_{\boldsymbol{P}}(x).$ 
            
        \item For all $y \in Y $, we have 
        $\mu(y) = a_{1} + (f_y - 1) a_{2} + \text{fixed}_{\boldsymbol{P}}(y).$
    
        \item For all $z \in Z $, we have 
       $\mu(z) = a_{2} + (f_z - 1) a_{3} + \text{fixed}_{\boldsymbol{P}}(z).$
        
         \item For all $h \in H_j $, we have 
        $\mu(h) = t(a_{j})$  where $1 \leq j \leq m'$.
         
         \item $\mu(w) \geq t a_1.$
        
    \end{itemize}
\end{itemize}
The remaining steps are identical to Reduction \ref{red:PWTTB_ubval}.
\end{reduction}
\begin{proposition}
\label{prop:tight_PWBTB_ubval}
The profile $\boldsymbol{P} \cup \boldsymbol{Q}$ in Reduction \ref{red:PWBTB_ubval} has the tightness property.
\end{proposition}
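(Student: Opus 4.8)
The plan is to follow the proof of Proposition \ref{prop:tight_PWTTB_ubval} essentially line for line, since Reduction \ref{red:PWBTB_ubval} is the ``top-of-the-ballot'' mirror image of Reduction \ref{red:PWTTB_ubval}: the three smallest score values $a_{m'-2}, a_{m'-1}, a_{m'}$ there are replaced by the three largest values $a_1, a_2, a_3$ here, and the block of available positions is moved from the bottom of each vote to the front. First I would record the counting identities that drive the argument: since each triple in $\mathscr{S}$ contains exactly one element of $\mathcal{X}$, summing $f_x$ over $x \in X$ counts each of the $t$ triples once, so $\sum_{x \in X} f_x = t$ and hence $\sum_{x \in X}(f_x - 1) = t - q$; the same holds for $Y$ and $Z$.

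Next I would identify the contributions to the tightness right-hand side $\sum_{c' \in C \setminus \{c\}} \big( \maxpartial{c'} - \text{fixed}_{\boldsymbol{P}}(c') \big)$, recalling that $\maxpartial{c'} = \mu(c')$ by Lemma \ref{lem:BD}. Each $h' \in H'$ and the candidate $w$ are fixed in $\boldsymbol{P}$, so their difference is $0$; each $h \in H_1 \cup H_2$ is never fixed, so $\text{fixed}_{\boldsymbol{P}}(h) = 0$ and it contributes $\maxpartial{h} = t\,a_j$ for $h \in H_j$; and each element candidate contributes $\maxpartial{e} - \text{fixed}_{\boldsymbol{P}}(e)$. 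Substituting the values $\mu(\cdot)$ from the reduction and collecting terms by score value gives
\begin{align*}
&\sum_{x \in X}\!\big(a_3 + (f_x{-}1)a_1\big) + \sum_{y \in Y}\!\big(a_1 + (f_y{-}1)a_2\big) + \sum_{z \in Z}\!\big(a_2 + (f_z{-}1)a_3\big) \\
&\qquad\qquad + (\ell(m,1){-}1)\,t\,a_1 + (\ell(m,2){-}1)\,t\,a_2 \;=\; t\big(\ell(m,1)\,a_1 + \ell(m,2)\,a_2 + a_3\big),
\end{align*}
since the coefficient of $a_1$ is $(t{-}q) + q + (\ell(m,1){-}1)t = \ell(m,1)\,t$, that of $a_2$ is $(t{-}q)+q+(\ell(m,2){-}1)t = \ell(m,2)\,t$, and that of $a_3$ is $q + (t{-}q) = t$.

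Finally I would compute the left-hand side, namely the total score value of all available positions. In each of the $t$ votes $p_i$ the available positions are $1$ through $\ell(m,1) + \ell(m,2) + 1$, which under the bottom-truncated layout carry score value $a_1$ on the first $\ell(m,1)$ positions, $a_2$ on the next $\ell(m,2)$ positions, and $a_3$ on the last one; summing over all $t$ votes yields $t\big(\ell(m,1)\,a_1 + \ell(m,2)\,a_2 + a_3\big)$, which equals the right-hand side, establishing tightness. I expect no real difficulty here: the only thing to watch is the bookkeeping, ensuring that the $\pm q$ cross-terms coming from the $(f_e - 1)$ factors cancel into a single coefficient $t$ on each $a_j$, and that the score values at the available positions are read off from the \emph{front} block of each ballot rather than the rear as in Reduction \ref{red:PWTTB_ubval}.
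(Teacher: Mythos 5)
Your proof is correct and follows exactly the route the paper intends: the paper's own proof of Proposition~\ref{prop:tight_PWBTB_ubval} simply states that the argument is ``similar to the proof of Proposition~\ref{prop:tight_PWTTB_ubval},'' and what you have written is precisely that mirrored computation, with the coefficients of $a_1$, $a_2$, $a_3$ collecting to $\ell(m,1)\,t$, $\ell(m,2)\,t$, and $t$, matching the total score value $t\bigl(\ell(m,1)\,a_1 + \ell(m,2)\,a_2 + a_3\bigr)$ of the available front positions. No gaps; your bookkeeping of the fixed candidates in $H'$ and $w$, and of the never-fixed candidates in $H_1 \cup H_2$, is exactly right.
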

\begin{proof}
The proof of tightness property for the above construction is similar to the proof of Proposition \ref{prop:tight_PWTTB_ubval}.
\end{proof}
\begin{lemma}
\label{lem:PWBTB_ub_val}
Let $r$ be an unbounded rule such that there exists a polynomial $g(u)$ with the property that for all $u$, every scoring vector $\boldsymbol{s}_m$ of $r$ with length $m = g(u)$ has at least three distinct score values, and if the largest three score values are $a_1 > a_2 > a_3$ , it holds that $m - \ell (m, 1) - \ell (m,2) \geq 3u$. Then reduction \ref{red:PWBTB_ubval} is a polynomial-time reduction of {\sc 3DM } to \emph{\PWTV}~w.r.t.\ $r$.
\end{lemma}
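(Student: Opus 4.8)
The plan is to mirror the proof of Lemma~\ref{lem:PWTTB_ub_val} with the top and the bottom of the scoring vector interchanged, so that the element candidates now compete for the three \emph{largest} available score values $a_1 > a_2 > a_3$ rather than the three smallest. First I would dispatch polynomiality: the length $m = g(q)$ is polynomial in $q$ by hypothesis, the candidate set $C = X \cup Y \cup Z \cup \{c,w\} \cup H$ and the $t$ bottom-truncated votes of $\boldsymbol{P}$ have size polynomial in $|\mathcal{I}|$, and the auxiliary total profile $\boldsymbol{Q}$ is obtained from Lemma~\ref{lem:BD}. To invoke that lemma I would check its three hypotheses exactly as in Reduction~\ref{red:PWTTB_ubval}: candidate $c$ is fixed in $\boldsymbol{P}$ (it sits inside the fixed block $\overrightarrow{C_i'}$ of every vote); each target value $\mu(c')$ is a sum of exactly $t = |\boldsymbol{P}|$ score values (for element candidates this is the $(t-f_{c'}) + (f_{c'}-1) + 1 = t$ split, for dummies it is trivial); and $w$, being fixed below $c$ in every vote, can never beat $c$. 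Tightness of $\boldsymbol{P} \cup \boldsymbol{Q}$ is Proposition~\ref{prop:tight_PWBTB_ubval}.

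For the ``$\impliedby$'' direction I would argue that a winning completion must be rigid. The available positions in each vote form the top block: $\ell(m,1)$ positions of value $a_1$, then $\ell(m,2)$ of value $a_2$, and a single position of value $a_3$. Since each $h \in H_1$ (resp.\ $H_2$) occurs in all $t$ votes with budget $t\,a_1$ (resp.\ $t\,a_2$), the tightness equality of Proposition~\ref{prop:tightness} forces every $H_1$ candidate into an $a_1$ position and every $H_2$ candidate into an $a_2$ position in every vote, leaving in each vote precisely one $a_1$, one $a_2$, and one $a_3$ slot for the triple $\{x_{i_1},y_{i_2},z_{i_3}\}$. I would then establish the analogue of Proposition~\ref{prop:x_y_in_a_m-2_PWDTB_ub}: the single $a_3$ slot of each vote is always filled by an element of $X \cup Z$. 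This follows from the same tightness count as in Reduction~\ref{red:PWTTB_ubval}, since the budgets contribute exactly $q$ copies of $a_3$ from $X$ and $t-q$ copies from $Z$, accounting for all $t$ of the $a_3$ slots, whereas $y$'s budget $a_1 + (f_y-1)a_2$ (pinned to the two values $\{a_1,a_2\}$ by Proposition~\ref{prop:lincomb_co_prime}, with $a_1$ attained exactly once) contributes none. Collecting the $q$ votes $K$ in which some $y$ receives $a_1$, the budgets $a_3 + (f_x-1)a_1$ for $x$ and $a_2 + (f_z-1)a_3$ for $z$ then force the element candidates of $X$, $Y$, and $Z$ occupying the $a_1$, $a_2$, $a_3$ slots of the $K$-votes to be pairwise distinct, so $\{S_i \mid i \in K\}$ is a perfect matching.

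For the ``$\implies$'' direction I would exhibit the completions directly from a cover $\mathscr{S}'$ with $|\mathscr{S}'| = q$: for $S_i \in \mathscr{S}'$ set $p_i^* = \overrightarrow{H_1} \succ y_{i_2} \succ \overrightarrow{H_2} \succ z_{i_3} \succ x_{i_1} \succ \overrightarrow{C_i'}$ (so $y \mapsto a_1$, $z \mapsto a_2$, $x \mapsto a_3$), and for $S_i \notin \mathscr{S}'$ take the canonical order $p_i' = \overrightarrow{H_1} \succ x_{i_1} \succ y_{i_2} \succ \overrightarrow{H_2} \succ z_{i_3} \succ \overrightarrow{C_i'}$ (so $x \mapsto a_1$, $y \mapsto a_2$, $z \mapsto a_3$). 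Because each element occurs in exactly one triple of the matching, summing over the $f_{c'}$ votes containing $c'$ reproduces exactly $\mu(x) = a_3 + (f_x-1)a_1$, $\mu(y) = a_1 + (f_y-1)a_2$, $\mu(z) = a_2 + (f_z-1)a_3$, and $t\,a_j$ for the dummies; since $c$ is fixed and $w$ stays below $c$, every candidate meets but does not exceed its maximum partial score, so $c$ is a possible winner.

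The main obstacle I anticipate is the bookkeeping in the ``$\impliedby$'' direction, specifically re-deriving the position lemma (the analogue of Proposition~\ref{prop:x_y_in_a_m-2_PWDTB_ub}) and ruling out ``compensating'' assignments in which a dummy of $H_2$ or an element of $Y$ trades an $a_1$ slot against an $a_3$ slot across different votes. This is precisely where Proposition~\ref{prop:lincomb_co_prime} and the exact tightness count of Proposition~\ref{prop:tight_PWBTB_ubval} must be combined, since it is the extremal/uniqueness argument, rather than any new idea, that converts the global score equality into the per-vote rigidity yielding the matching.
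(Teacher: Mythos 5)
Your proposal is correct and follows essentially the same route as the paper: the paper's own proof of this lemma is literally the one-line remark that it is ``similar to the proof of Lemma~\ref{lem:PWTTB_ub_val}'', and your detailed mirroring --- swapping the roles of the top and bottom blocks, identifying the unique $a_3$ slot as the one that must go to $X \cup Z$, the cover completions $y \mapsto a_1$, $z \mapsto a_2$, $x \mapsto a_3$, and the appeals to Lemma~\ref{lem:BD}, Proposition~\ref{prop:tight_PWBTB_ubval}, Proposition~\ref{prop:tightness}, and Proposition~\ref{prop:lincomb_co_prime} --- is exactly what that reference entails. The one delicate point you flag (ruling out compensating assignments across the three accessible score values, e.g.\ trading an $a_1$ slot against an $a_3$ slot when $a_1 + a_3 = 2a_2$) is handled at the same level of rigor in the paper's template proof of Lemma~\ref{lem:PWTTB_ub_val}, so your attempt reproduces the paper's argument rather than diverging from it.
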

\begin{proof}
The proof is similar to the proof of Lemma \ref{lem:PWTTB_ub_val}.
\end{proof}
The following lemma is a direct consequence of Lemma \ref{lem:PWBTB_ub_val}.
\begin{lemma}
\label{lem:PWBTB_pval}
Let $r$ be a $p$-valued positional scoring rule, where $p \geq 3$, such that there exists a polynomial $g(u)$ with the property that for all $u$, every scoring vector $\boldsymbol{s}_m$ of $r$ with length $m = g(u)$ has $p$ distinct score values, and if the largest three score values are $a_1 > a_2 > a_3$ , it holds that $m - \ell (m, 1) - \ell (m,2) \geq 3u$. Then the \emph{\PWBV}~problem w.r.t.\ $r$ is NP-complete. 
\end{lemma}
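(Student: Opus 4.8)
The plan is to obtain this lemma as an immediate corollary of Lemma~\ref{lem:PWBTB_ub_val}, in exactly the same way that Lemma~\ref{lem:PWTTB_pval} is derived from Lemma~\ref{lem:PWTTB_ub_val}. The guiding observation is that neither Reduction~\ref{red:PWBTB_ubval} nor its correctness proof ever uses unboundedness as such; both rely only on the existence of the three largest distinct score values $a_1 > a_2 > a_3$ and on the single gap inequality $m - \ell(m,1) - \ell(m,2) \geq 3u$. Since a $p$-valued rule with $p \geq 3$ has exactly $p \geq 3$ distinct score values, all of $a_1, a_2, a_3$ are present, so the construction goes through verbatim once the hypothesized polynomial $g(u)$ is supplied.

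First I would fix the length $m = g(q)$ guaranteed by the hypothesis, at which $\boldsymbol{s}_m$ has $p$ distinct values and $m - \ell(m,1) - \ell(m,2) \geq 3q$; this inequality provides exactly the room (enough dummy candidates for $H'$ and enough fixed positions) needed to build the bottom-truncated ballots of Reduction~\ref{red:PWBTB_ubval}. Next I would check that every target value $\mu(c')$ prescribed in that reduction is a sum of at most $|\boldsymbol{P}|$ score values drawn only from $\{a_1, a_2, a_3\}$ together with the fixed contributions, so that Lemma~\ref{lem:BD} constructs the total profile $\boldsymbol{Q}$ realizing these maximum partial scores in polynomial time. With $\boldsymbol{Q}$ in hand, the tightness property (Proposition~\ref{prop:tight_PWBTB_ubval}) holds, and the forward and backward directions of the correctness argument of Lemma~\ref{lem:PWBTB_ub_val} apply without change, giving a polynomial-time reduction from the \NP-complete {\sc 3DM} problem to \PWBV\ w.r.t.\ $r$. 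Membership in \NP\ is immediate because \PWBV\ is a special case of \PW, so the problem is \NP-complete.

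The one point requiring care, and the place I expect a reader to pause, is that Lemma~\ref{lem:PWBTB_ub_val} is stated for \emph{unbounded} rules whereas a $p$-valued rule is bounded. The hard part will therefore be to confirm that nothing in the construction or the tightness computation secretly depended on there being more than three distinct values or on blocks being arbitrarily long. Inspecting the definitions of the $\mu(c')$, which mention only $a_1, a_2, a_3$, and the tightness sum in Proposition~\ref{prop:tight_PWBTB_ubval}, which involves only these values together with the block lengths $\ell(m,1), \ell(m,2)$, shows that the argument is insensitive to whether $r$ is bounded or unbounded; hence restricting to $p$-valued rules with $p \geq 3$ poses no obstacle.
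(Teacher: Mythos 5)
Your proposal is correct and matches the paper's own argument: the paper likewise proves this lemma by observing that Reduction~\ref{red:PWBTB_ubval} and its correctness proof only require the three largest distinct score values (together with the gap condition supplied by $g(u)$), so a $p$-valued rule with $p \geq 3$ suffices and Lemma~\ref{lem:PWBTB_ub_val} carries over directly. Your additional elaboration—checking the $\mu(c')$ values, Lemma~\ref{lem:BD}, and tightness explicitly—is a more careful write-up of the same one-step derivation.
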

\begin{proof}
Observe that $p$-valued positional scoring rule, where $p \geq 3$, is sufficient for the construction in Reduction \ref{red:PWBTB_ubval}. In particular, to specify the value $\mu(c')$ for each candidate $c' \in C \setminus \{c\}$ we need at least three distinct score values which we always have.
\end{proof}
\subsection{Hardness of \PWDV~ w.r.t.\ unbounded rules}
\eat{
We establish hardness for all $3$-valued rules for all but $R(f,l)$ where $f + l > 2.$ Schematically, we represent a $3$-valued rule as follows.

$$\left( \underbrace{a_1, \hdots, a_1}_{\ell (m,1)}, \underbrace{a_{2}, \hdots, a_{2}}_{\ell(m , 2)}, \underbrace{ a_{3}, \hdots, a_3}_{\ell (m, 3)} \right)$$

\begin{reduction}
\label{red:PWDTB_3val}
Let $\mathcal{I} = (\mathcal{X, Y, Z}, \mathscr{S} )$ be a { \sc 3DM } instance,
with $|\mathcal{X}| = |\mathcal{Y}| = |\mathcal{Z}| = q$. Let $r$ be the $3$-valued scoring rule which has scoring vectors with blocks of repeating score values. More precisely, in the scoring vector of length $m,$ the score value $a_j$ repeats $\ell (m, j)$ times, for $1 \leq j \leq 3.$ 
Let $\gamma = 3q.$ By Proposition \ref{prop:p_val_len}, there is a number $m \leq 3q p$ such that in the scoring vector $\boldsymbol{s}_{m}$, there is a block of repeating score value $a_u$ with length $\ell (m, u) = 3q$, where $1 \leq u \leq 3.$ We consider the cases
\begin{enumerate}
    \item $u = 1$
    \item $u = 3$
\end{enumerate}
We construct a \PWDV~instance as follows.
\begin{enumerate}
    \item The set of candidates is $C = X \cup Y \cup Z \cup  \{c, w\} \cup H$ where $X$, $Y$, and $Z$ contains candidates corresponding to the elements in $\mathcal{X}, \mathcal{Y}, \text{ and } \mathcal{Z}$ respectively. These candidates are called \emph{elements candidates}. The set $H$ contains dummy candidates such that $|H| = m - 3q - 2.$
    
    \item We construct the partial $\boldsymbol{P}$ as follows.
    
    \begin{itemize}
        \item For each $S_i = (x_{i_1}, y_{i_2}, z_{i_3} ),$ in $\mathscr{S}$, let  $C'_i = C \setminus (\{x_{i_1}, y_{i_2}, z_{i_3} \} \cup H)$. Let $\overrightarrow{C'_i}$ be such that $c \succ w$, i.e., candidate $c$ is ranked lower than $w$.
        
         \paragraph{Case 1.}$u = 1$  \\
         Let $H_1 \subseteq H$ such that $|H_1| = \ell (m, 2) -1$ and $H' = H \setminus H_1$.
         \begin{align*}
                p'_i &= \overrightarrow{C'_i} \succ x_{i_1} \succ y_{i_2} \succ \overrightarrow{H_1} \succ z_{i_3} \succ \overrightarrow{H'}\\
                p_i &= \overrightarrow{C'_i} \succ \left( \{x_{i_1} , y_{i_2} , z_{i_3} \} \cup H_1  \cup {H'} \right)
        \end{align*}
        
        \paragraph{Case 2.} $u = 3$  \\
        Let $H_1 \subseteq H$ such that $|H_1| = \ell (m, 2) -1$ and $H' = H \setminus H_1$.
        \begin{align*}
            p'_i &= \overrightarrow{H'} \succ x_{i_1} \succ y_{i_2} \succ \overrightarrow{H_1} \succ z_{i_3} \succ \overrightarrow{C_i}\\
            p_i &= \left( H'  \cup  \{x_{i_1}, y_{i_2}, z_{i_3} \} \cup H_1  \right) \succ \overrightarrow{C_i}
        \end{align*}
    
        \item $\boldsymbol{P} = \bigcup_{i=1}^{l} p_i$ is a partial profile where each vote is doubly-truncated. Observe that in case 1, the orders are top-truncated and in case 2, the orders are bottom-truncated.
        
        $\boldsymbol{P}' = \bigcup_{i=1}^{l} p'_i$ is a total profile. Moreover, each $p'_i$ extends $p_i.$ Let $s(\boldsymbol{P}', c) = \lambda_{\boldsymbol{P}'}.$ 
        The positions of $c$ as well as $w$ is fixed in all the votes in $\boldsymbol{P}$. Observe that $s(\boldsymbol{P}', w) < \lambda_{\boldsymbol{P}'}$ since $w$ is in a position greater $c$ in all $\overrightarrow{C_i'}$, for $ 1 \leq i \leq t$.
    \end{itemize}  

\eat{        
           \item Consider $C = X \cup Y \cup Z \cup \{ c \} \cup H \cup \{w\}$. Let $\{w\}$ be the set $D$ required in Lemma \ref{lemmaDM} and $\mathbf{R}$ be as follows. 
\paragraph{Case 1.}$u =1$  
    \begin{itemize}
       \item For $1 \leq i \leq q$, $R_{x_i} = \delta_{1} + \delta_{2} - \left(s(\boldsymbol{P}', x_i) -\lambda_{\boldsymbol{P}'} \right)$
       
       \item For $1 \leq i \leq q$, $R_{y_i} = - \delta_{1} - \left(s(\boldsymbol{P}', y_i) - \lambda_{\boldsymbol{P}'} \right)$
       
       \item For $1 \leq i \leq q$, $R_{z_i} =  - \delta_{2} - \left(s(\boldsymbol{P}', z_i) - \lambda_{\boldsymbol{P}'} \right)$
       
       \item $R_{c} = 0$
       
       \item For all $h \in H, R_{h} = 0 - \left(s(\boldsymbol{P}', h) - \lambda_{\boldsymbol{P}'} \right) $
   \end{itemize}
   The vector $\mathbf{R}$ is the same for Case 2.
  
   \item By Lemma \ref{lemmaDM}, there exists $\lambda_{\boldsymbol{Q}} \in \mathbb{N}$ and a total profile $\boldsymbol{Q}$, containing votes polynomial in $m',$ such that the scores of the candidates in the profile $ \boldsymbol{P}' \cup \boldsymbol{Q}$ are as follows. Let $\lambda_{\boldsymbol{P}'} + \lambda_{\boldsymbol{Q}} = \lambda.$
    \paragraph{Case 1.}
    \begin{itemize}
         \item For all $x \in X$, 
        $s( \boldsymbol{P}' \cup \boldsymbol{Q}, x) = s(\boldsymbol{P}', x) + s(\boldsymbol{Q},x)$ 
            \begin{align*}
                & = \left(\lambda_{\boldsymbol{P}'} + s(\boldsymbol{P}',x) - \lambda_{\boldsymbol{P}'} \right) + \left(\lambda_{\boldsymbol{Q}} +  R_x \right)  = \lambda + \delta_{1} +\delta_{2}
            \end{align*}  
        
        \item For all $y \in Y$, 
        $s( \boldsymbol{P}' \cup \boldsymbol{Q}, y) = s(\boldsymbol{P}', y) + s(\boldsymbol{Q},y)$ 
            \begin{align*}
                & = \left(\lambda_{\boldsymbol{P}'} + s(\boldsymbol{P}',y) - \lambda_{\boldsymbol{P}'} \right) + \left(\lambda_{\boldsymbol{Q}} +  R_y \right) = \lambda - \delta_{1}
            \end{align*}

        \item For all $z \in Z$,
        $s( \boldsymbol{P}' \cup \boldsymbol{Q}, z) = s(\boldsymbol{P}', z) + s(\boldsymbol{Q},z)$ 
            \begin{align*}
                & = \left(\lambda_{\boldsymbol{P}'} + s(\boldsymbol{P}',z) - \lambda_{\boldsymbol{P}'} \right) + \left(\lambda_{\boldsymbol{Q}} +  R_z \right) = \lambda - \delta_{2}
            \end{align*} 
        
        \item $s( \boldsymbol{P}' \cup \boldsymbol{Q}, c) = s(\boldsymbol{P}', c) + s(\boldsymbol{Q},c)$ $= \lambda_{\boldsymbol{P}'} + \lambda_{\boldsymbol{Q}} = \lambda$
        
        \item For all $h \in H,$ $s( \boldsymbol{P}' \cup \boldsymbol{Q}, h) = s(\boldsymbol{P}', h) + s(\boldsymbol{Q},h)
        = \left(\lambda_{\boldsymbol{P}'} + s(\boldsymbol{P}',h) - \lambda_{\boldsymbol{P}'} \right) + \left(\lambda_{\boldsymbol{Q}} +  R_{h} \right) = \lambda$
            
       \item $s( \boldsymbol{P}' \cup \boldsymbol{Q}, w) = s(\boldsymbol{P}', w) + s(\boldsymbol{Q},w)$ $< \lambda_{\boldsymbol{P}'} + \lambda_{\boldsymbol{Q}}  < \lambda$
    \end{itemize}
    The scores of all the candidates are the same in Case 2.
}

    \item For an element candidate $e  \in X \cup Y \cup Z,$ let  $f_{e}$ denote the number of triples in $\mathscr{S}$ containing the element of the {\sc 3DM} instance corresponding to candidate $e$. By construction of the partial profile, the position of candidate $e$ is not fixed in $f_{e}$ votes. Let fixed$_{\boldsymbol{P}}(e)$ be the total score made by $e$ from those votes in $\boldsymbol{P}$ where the position of $e$ is fixed.
 More precisely, for any candidate $e \in X \cup Y \cup Z,$ $\text{fixed}_{\boldsymbol{P}} (e) = \sum\limits_{i=1}^{t- f_{e}} s_{k_i}$, where $1 \leq k_i \leq m$ is the position of $e$ in a vote where it is fixed.
      
     \item  Consider the following.
  
    \paragraph{Case 1.}$u = 1$
    \begin{itemize}
    
        \item For all $x \in X $, we have  
        $\mu (x) = a_3 + (f_x - 1) a_{1} + \text{fixed}_{\boldsymbol{P}}(x).$ 
            
        \item For all $y \in Y $,  we have  
        $\mu(y) = a_{1} + (f_y - 1) a_{2} + \text{fixed}_{\boldsymbol{P}}(y).$
    
        \item For all $z \in Z $,  we have  
       $\mu (z) = a_{2} + (f_z - 1) a_{3} + \text{fixed}_{\boldsymbol{P}}(z).$
        
         \item For all $h \in H_1 $,  we have  
        $\mu (h) = t(a_{2})$. 
        
        \item For all $h' \in H'$,  we have   $\mu (h') = s( \boldsymbol{P}, h' ).$
        
        \item $\mu (w) \geq t a_1.$
    \end{itemize}
The above is same for Case 2.
    
     \item We verify that the profile $\boldsymbol{P}$, and, for all $c' \in C \setminus \{c\}$, the number 
    $\mu(c')$, as specified above, satisfies the properties required by Lemma \ref{lem:BD}.
    \begin{itemize}
        \item Property 1: By the construction of the votes in the reduction, this property is satisfied.
        \item Property 2: For all $e \in X \cup Y \cup Z$, the number $\mu(e)$ is the sum of 
        $(t -f_{e}) + (f_{e} -1) + 1 = t$ score values.
     For all $h \in H$, property 2 is satisfied trivially.
      Note that for $h' \in H'$, we have $\mu (h') = s( \boldsymbol{P}, h')$ which is 
      $\sum\limits_{j=1}^{t} s_{k_j}$ where $1 \leq k_j \leq m$ is the position of $h'$ in the votes in $\boldsymbol{P}$ where it is fixed.
        \item Property 3: Candidate $w$ is fixed in $\boldsymbol{P},$ and in every vote, has a position greater than that of $c$, and therefore, can never defeat $c$ in any extension.
    \end{itemize}
    Therefore, by the lemma, there is a total profile $\boldsymbol{Q}$, which can be constructed in time polynomial in $| \boldsymbol{P} |$ and $m$, such that $\maxpartial{c'} = \mu (c')$, for all $c' \in C \setminus \{c\}$.
    
    \item We let $C,$ the profile $\boldsymbol{V} =  \boldsymbol{P} \cup \boldsymbol{Q}, \text{ and } c$ be the input to the \PWDV~problem.

\end{enumerate}
\end{reduction}

\begin{proposition}
\label{prop:tight_PWDTB_3val}
The profile $\boldsymbol{P} \cup \boldsymbol{Q}$ in Reduction \ref{red:PWDTB_3val} has the tightness property.
\end{proposition}
\begin{proof}
Let $c$ be a possible winner. We focus only on the positions which are not filled and the scores the candidates can make in these positions. Thus, we can ignore the scores made by the element candidates of $X, Y, \text{ and } Z$ from the votes where their positions are fixed. Recall that $|X| = |Y| = |Z| = q.$ Therefore,

$$\sum\limits_{x \in X}f_x = t \implies \sum\limits_{x \in X}(f_x - 1) = \sum\limits_{x \in X}f_x - \sum\limits_{x \in X} 1 = t - q.$$ 

Similarly, $\sum\limits_{y \in Y}f_y = \sum\limits_{z \in Z}f_z = t$ and $\sum\limits_{y \in Y}(f_y - 1) = \sum\limits_{z \in Z}(f_z - 1) = t-q.$
\\
\paragraph{Case 1.}$u = 1$
In the following, when we write, $\maxpartial{c'}$, for a candidate $c'$,  we refer to the maximum partial score $c'$ can make in the votes in $\boldsymbol{P}$, only from the positions which are not filled. Therefore, the sum of the partial score of the candidates with respect to the positions where the candidates are not fixed is
\begin{align}
	  \sum\limits_{c' \in (X \cup Y \cup Z \cup H_{1} \cup H')}\maxpartial{c'} 
   & = \sum\limits_{x \in X} \maxpartial{x} + \sum\limits_{y \in Y} \maxpartial{y} +  \sum\limits_{z \in Z} \maxpartial{z} \nonumber \\
   & \hspace{1cm}  + \sum\limits_{h \in H_{1}} \maxpartial{h} + \sum\limits_{h \in H'} \maxpartial{h}\nonumber \\
   & = q(a_3) + (t-q)(a_{1}) + q(a_{1}) + (t-q)(a_{2}) \nonumber \\ 
   & \hspace{1cm} + q(a_{2}) + (t-q)(a_{3}) + (\ell(m, 2) -1)t(a_{2}) ++ (\ell(m, 3) -1)t(a_{3})\nonumber \\
   & = \ell(m, m') t(a_3) +  \ell(m, 2) t (a_{2}) +  t(a_{1}) \nonumber \\
   & = t \left( a_{1} + \ell(m, 2)  a_{2}+  \ell(m, m') a_{3} \right).
   \label{eq:sum_partial_PWDTB_3}
\end{align}

Recall, that there are total of $t$ votes in $\boldsymbol{P}$, one corresponding to every triple in $\mathscr{S}.$ Therefore, sum of the score values of the available positions in the $t$ votes , i.e., position $\ell (m, 1)$, with score value $a_{1}$, positions $1 + \ell (m, 1)$ through $\ell (m, 1) + \ell (m, 2)$, each with score value  $a_{2}$, and positions $1 + \ell (m, 1) + \ell (m, 2)$ through $m$, each with score value $a_3$, is
$t \left( a_{1} + \ell(m, 2) (a_{2}) + \ell (m, m') a_3 \right)$
which is the same as in (\ref{eq:sum_partial_PWDTB_3}). For Case 2, we consider the available positions $1$ through $m - \ell(m, 3)$ to verify the tightness property.
\end{proof}

\begin{theorem}
\label{lem:PWDTB_3val}
    Let $r$ be a pure positional scoring rule.
    \begin{itemize}
        \item If, in every scoring vector of $r$, the number of positions with the two lowest score values are fixed, then the \emph{\PWTV}~problem w.r.t.\ $r$ is \emph{\NP}-complete. 
        \item If, in every scoring vector of $r$, the number of positions with the two highest score values are fixed, then the \emph{\PWBV}~problem w.r.t.\ $r$ is \emph{\NP}-complete. 
        \item $r$ is a $3$-valued rule other than $R(f,l)$ with $f+l >2$ then the \emph{\PWDV}~problem w.r.t.\ $r$ is \emph{\NP}-complete.
    \end{itemize} 
\end{theorem}
\begin{proof}
Observe that in the reduction above, for Case 1, the reduction produces an instance of the \PWTV~problem; for Case 2, it produces an instance of the \PWBV~problem.
\\
In the ``$\impliedby$" direction, the proof for both of the cases are very similar to the proof of Lemma \ref{lem:PWTTB_ub_val}. Observe that in the partial orders in the proof of Lemma \ref{lem:PWDTB_3val}, all the positions where candidates are not fixed have one of three different score values. 
The lowest and the highest positions have distinct score values, and all the other positions in between have the same score value. There are three element candidates, and the rest are dummy candidates. 
The partial orders in Reduction \ref{red:PWDTB_3val} (both, Case 1 and Case 2) have the same structure. We have proved the tightness of the construction in Proposition \ref{prop:tight_PWDTB_3val}.
\\
Now, we prove the other direction. Let $(\mathcal{X}, \mathcal{Y}, \mathcal{Z}, \mathscr{S})$ be a positive instance of {\sc 3DM}. Let $\mathscr{S'} \subseteq \mathscr{S}$ be the cover. Recall that $|\mathscr{S}' |= q$ and $| \mathscr{S} | = t.$ We construct a \PWDV \ instance as above. 

\begin{enumerate}

    \item We extend each partial vote $p_i \in \boldsymbol{P}$ as follows.
    \paragraph{Case 1.}$u=1$
    \begin{align*}
                p*_i &= \overrightarrow{C'_i} \succ y_{i_2} \succ \overrightarrow{H_1} \succ z_{i_3} \succ x_{i_1}  \succ \overrightarrow{H'} \text{ if } S_i \in \mathscr{S}'\\
                p*_i &= \overrightarrow{C'_i} \succ x_{i_1} \succ y_{i_2} \succ \overrightarrow{H_1} \succ z_{i_3} \succ \overrightarrow{H'} \text{ if } S_i \notin \mathscr{S}'
        \end{align*}
    
    \paragraph{Case 2.}$u = p$
    \begin{align*}
                p*_i &= \overrightarrow{H'} \succ y_{i_2} \succ \overrightarrow{H_1} \succ z_{i_3} \succ x_{i_1} \succ \overrightarrow{C_i'} \text{ if } S_i \in \mathscr{S}'\\
                p*_i &= \overrightarrow{H'} \succ x_{i_1} \succ y_{i_2} \succ \overrightarrow{H_1} \succ z_{i_3} \succ \overrightarrow{C_i'} \text{ if } S_i \notin \mathscr{S}'
        \end{align*}
        
    Let $\boldsymbol{P}^* = \bigcup_{i=1}^{l} p^*_i$. 
    Note that the sore of $c$ does not change in any extension.

\eat{    

    \item  Now, we compute the scores of all the candidates in the completed profile to verify that candidate $c$ is, indeed, a possible winner.
    
    The following are the scores of the candidates in the profile $\boldsymbol{P}^* \cup \boldsymbol{Q}$. Recall, that $s(\boldsymbol{P}^* \cup \boldsymbol{Q}, c) = \lambda$.
    \begin{itemize}
        \item For all $x \in X,$
        $s(\boldsymbol{P}^* \cup \boldsymbol{Q}, x) = s(\boldsymbol{P}^* , x) + s(\boldsymbol{Q}, x) = s(\boldsymbol{P}' , x) - (\delta_{1} + \delta_{2}) + s(\boldsymbol{Q}, x) $
        \begin{align*}
            =& \left(\lambda_{\boldsymbol{P}'} + s(\boldsymbol{P}',x) - \lambda_{\boldsymbol{P}'} \right) - (\delta_{1} + \delta_{2}) + \left(\lambda_{\boldsymbol{Q}} +  R_x \right) = \lambda
        \end{align*}
        
        \item $\bullet$  For all $y \in Y,$
        $s(\boldsymbol{P}^* \cup \boldsymbol{Q}, y) = s(\boldsymbol{P}^* , y) + s(\boldsymbol{Q}, y) = s(\boldsymbol{P}' , y) + \delta_{1} + s(\boldsymbol{Q}, y)$
        \begin{align*}
            =& \left(\lambda_{\boldsymbol{P}'} + s(\boldsymbol{P}',y) - \lambda_{\boldsymbol{P}'} \right) + \delta_{p-2} + \left(\lambda_{\boldsymbol{Q}} +  R_y \right) = \lambda
        \end{align*}
        
        \item For all $z \in Z,$
        $s(\boldsymbol{P}^* \cup \boldsymbol{Q}, z) = s(\boldsymbol{P}^* , z) + s(\boldsymbol{Q}, z) = s(\boldsymbol{P}' , z) + \delta_{2} + s(\boldsymbol{Q}, z) $
        \begin{align*}
            =& \left(\lambda_{\boldsymbol{P}'} + s(\boldsymbol{P}',z) - \lambda_{\boldsymbol{P}'} \right) + \delta_{p-1} + \left(\lambda_{\boldsymbol{Q}} +  R_z \right) = \lambda
        \end{align*}
        
        \item $s(\boldsymbol{P}^* \cup \boldsymbol{Q}, c) = s(\boldsymbol{P}^* , c) + s(\boldsymbol{Q}, c)  = s(\boldsymbol{P}' , c) + s(\boldsymbol{Q}, c) 
     = \lambda_{\boldsymbol{P}'} + \lambda_{\boldsymbol{Q}} = \lambda$
     
     \item For all $h \in H,$ $s(\boldsymbol{P}^* \cup \boldsymbol{Q}, h) = s(\boldsymbol{P}^* , h) + s(\boldsymbol{Q}, h)  = s(\boldsymbol{P}' , h) + 0 + s(\boldsymbol{Q}, h)$
        \begin{align*}
            =& \left(\lambda_{\boldsymbol{P}'} + s(\boldsymbol{P}',h)
            - \lambda_{\boldsymbol{P}'} \right) + 0 + \left(\lambda_{\boldsymbol{Q}} +  R_{h} \right) = \lambda
        \end{align*}

        \item $s( \boldsymbol{P}' \cup \boldsymbol{Q}, w) = s(\boldsymbol{P}', w) + s(\boldsymbol{Q},w)$ $< \lambda_{\boldsymbol{P}'} + \lambda_{\boldsymbol{Q}} < \lambda$ 
    \end{itemize}

    Therefore, $c$ is a possible winner.
}
\item It is a straightforward computation to verify that all the candidates have, indeed, a total score less than or equal to that of $c$. Therefore, $c$ is a possible winner.
\end{enumerate} 
Since the \PWDV~problem is a generalisation of both the \PWTV~problem and \PWBV~problem, we obtain \NP-completeness of the \PWDV~problem for both Case 1 and Case 2.
\end{proof}
}
We conclude with the following result for \PWDV~w.r.t.\ to a broad group of unbounded rules. This generalises the existing results in Theorem \ref{thm:PWDV_old}. Putting together the hardness results in Lemma \ref{lem:PWTTB_ub_val} and Lemma \ref{lem:PWBTB_ub_val}, we can state the following theorem.
\begin{theorem} \label{thm:PWDV_new}
    Let $r$ be an unbounded rule that satisfies one of the following conditions:
    \begin{enumerate}
        \item There exists a polynomial $g(u)$ such that for every $u$, the scoring vector $\boldsymbol{s}_m$ of $r$ with $m= g(u)$ has $m' \geq 3$ distinct score values, and if the three smallest score values are $a_{m'-2} > a_{m'-1} > a_{m'}$ , it holds that $m - \ell (m, m'-1) - \ell (m,m') \geq 3u$.
        
        \item There exists a polynomial $g(u)$ such that for every $u$, the scoring vector $\boldsymbol{s}_m$ of $r$ with $m= g(u)$ has at least three distinct score values, and if the largest three score values are $a_1 > a_2 > a_3$ , it holds that $m - \ell (m, 1) - \ell (m,2) \geq 3u$.
    \end{enumerate} 
    Then the \emph{\PWDV}~problem w.r.t.\ $r$ is \emph{\NP}-complete.
\end{theorem}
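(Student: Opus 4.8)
The plan is to obtain the theorem as an immediate consequence of the two hardness lemmas already proved, Lemma~\ref{lem:PWTTB_ub_val} and Lemma~\ref{lem:PWBTB_ub_val}, by exploiting that top-truncated and bottom-truncated ballots are special cases of doubly-truncated ballots. I would first dispose of membership in \NP: because $r$ is computable in time polynomial in the number of candidates, a nondeterministic algorithm can guess a completion $\overline{\boldsymbol{P}}$ of the input partial profile, compute every candidate's score, and check in polynomial time whether the distinguished candidate $c$ is a winner. Thus \PWDV~w.r.t.\ $r$ lies in \NP, and it remains only to establish \NP-hardness.

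For the hardness part, the decisive observation is structural: a top-truncated ballot is exactly a doubly-truncated ballot with $b=0$, and a bottom-truncated ballot is exactly a doubly-truncated ballot with $t=0$. Hence every legal input to \PWTV~is also a legal input to \PWDV, and likewise for \PWBV. This gives an identity polynomial-time reduction from \PWTV~to \PWDV, and from \PWBV~to \PWDV: the instance is handed over unchanged, and the yes/no answer is preserved because the collection of completions of a profile---and therefore the set of possible winners---does not depend on which of the three problems we regard the profile as belonging to.

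I would then split on the two hypotheses. If $r$ meets condition~1, Lemma~\ref{lem:PWTTB_ub_val} supplies a polynomial-time reduction of {\sc 3DM} to \PWTV~w.r.t.\ $r$; composing it with the identity reduction \PWTV~$\to$~\PWDV~yields a polynomial-time reduction of {\sc 3DM} to \PWDV~w.r.t.\ $r$, so the latter is \NP-hard. If instead $r$ meets condition~2, Lemma~\ref{lem:PWBTB_ub_val} gives a polynomial-time reduction of {\sc 3DM} to \PWBV~w.r.t.\ $r$, which composes with the identity reduction \PWBV~$\to$~\PWDV~to yield \NP-hardness of \PWDV~in this case as well. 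Combining \NP-hardness with the membership argument gives \NP-completeness in both cases.

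I expect no genuine obstacle here, since all the substantive work lives in Reduction~\ref{red:PWTTB_ubval} and Reduction~\ref{red:PWBTB_ubval}; the only point that warrants a line of care is confirming that the profiles those reductions produce are genuinely doubly-truncated (equivalently, that the identity reduction is legitimate), which is immediate from the fact that a truncated ballot with one of its two truncation parameters set to zero satisfies the definition of a doubly-truncated ballot.
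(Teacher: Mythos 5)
Your proposal is correct and follows exactly the paper's own argument: the paper likewise derives Theorem~\ref{thm:PWDV_new} by observing that \PWTV~and \PWBV~are special cases of \PWDV, so the hardness results of Lemma~\ref{lem:PWTTB_ub_val} and Lemma~\ref{lem:PWBTB_ub_val} transfer directly. Your explicit treatment of \NP-membership and of the identity reduction is just a more detailed write-up of the same one-line idea.
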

\begin{proof}
Since \PWTV~and \PWBV~are special cases of \PWDV, the NP-hardness results in Lemma \ref{lem:PWTTB_ub_val} and Lemma \ref{lem:PWBTB_ub_val} also hold for \PWDV.
\end{proof}
Therefore, \PWDV~w.r.t.\ unbounded rules with scoring vectors containing three values such that the length of the block containing the second score value is unbounded remains open.The complexity of \PWDV~w.r.t.\ to all $p$-valued rules except $R(f,l)$, such that $f+1>2$, remains to be established. 


\section{Concluding Remarks}
\begin{table}[ht]
\centering
\begin{tabular}{l|cccccc}  
\toprule
Scoring Rule  & \PW & \PWPC & \PWPV & \PWDV & \PWTV & \PWBV \\
\toprule
Plurality \& Veto& \Ptime  & \Ptime &  \Ptime  &  \Ptime  & \Ptime & \Ptime   \\
\midrule
$2$-valued & NP-c   & \textbf{NP-c}  &  \Ptime   & \Ptime &  \Ptime   & \Ptime  \\
\midrule
$R(1,1)$ & NP-c  & \textbf{NP-c} &  \Ptime  &  \Ptime &  \Ptime   & \Ptime    \\
\midrule
$R(f,l)$, $f+l>2$ & NP-c  & \textbf{NP-c} &  ?  &  ? &  ?  &  ?    \\
\midrule
All other $3$-valued            & NP-c   & \textbf{NP-c}  &  NP-c   & NP-c   & \textbf{NP-c}$^*$[Lem. \ref{lem:PWTTB_pval}]  & \textbf{NP-c}$^*$[Lem. \ref{lem:PWBTB_pval}]     \\
\midrule
$p$-valued,  $p \geq 4$           & NP-c   & \textbf{NP-c}  & NP-c   & NP-c  & \textbf{NP-c}$^*$[Lem. \ref{lem:PWTTB_pval}]  & \textbf{NP-c}$^*$[Lem. \ref{lem:PWBTB_pval}]     \\
\midrule
Unbounded rules & NP-c   & \textbf{NP-c}  &  NP-c    & \textbf{NP-c}$^*$[Thm. \ref{thm:PWDV_new}]  & \textbf{NP-c}$^*$[Lem. \ref{lem:PWTTB_ub_val}]  & \textbf{NP-c}$^*$[Lem. \ref{lem:PWBTB_ub_val}]  \\
\bottomrule 
\end{tabular}
\newline 
\caption{Classification of the PW problem and its various restrictions. The results in boldface have been established in this paper. 
\newline $*$See the respective results for restrictions.
}
         
\label{tab:results}
\end{table}
The contributions in this paper can be summarised as follows.

\begin{itemize}
    \item We obtained a complete classification of the complexity of the \PW~problem on partial chains w.r.t.\ to all pure positional scoring rules. 
Since the classification we obtained is the same as that of the \PW~problem on arbitrary partial orders,  we gave a new,  self-contained (and, in our view, more principled) proof of the original classification theorem for \PW.

    \item We established new \NP-completeness results for the \PW~problem on top-truncated, and bottom-truncated partial orders. These results also hold for the \PW~problem on doubly-truncated partial orders.
    
    \item Our results, together with their comparison to earlier related results in the literature,   are depicted in Table \ref{tab:results}.
\end{itemize}

In terms of future work, it remains an open problem to pinpoint the complexity of the \PW~problem w.r.t.\ rules $R(f,l)$ with $f+l>2$ on doubly-truncated partial orders and on partitioned partial orders. In a different direction, there is a rich body of work on algorithmic problems about manipulation in voting (\PW~is a special case of one of these problems), where computational hardness is regarded as a feature because it provides an obstacle to such manipulation (see \cite{DBLP:reference/choice/ConitzerW16} for a survey). More recent work in this area includes the study of manipulation in voting when only incomplete preferences, expressed as partial orders, are available \cite{DBLP:journals/tcs/DeyMN18}. Furthermore, top-truncated partial orders have been studied in this setting
\cite{DBLP:journals/aamas/MenonL17}. It would be natural to investigate manipulation in voting with partial chains as incomplete preferences.

As a broader agenda, we note that a framework aiming to create bridges between computational social choice and relational databases was introduced in \cite{kimelfeld2018computational} and studied further in \cite{DBLP:conf/pods/KimelfeldKT19}. In that framework, the main concepts are the \emph{necessary answers} and the \emph{possible answers} to queries about winners in elections together with relational context about candidates, voters, and candidates' positions on issues. 
It should be pointed out that the necessary answers to natural database queries may be intractable (co\NP-complete), even w.r.t.\ the plurality rule.
Thus, our work motivates the investigation of the complexity of the necessary answers and the possible answers to queries on partial chains and other restricted classes of partial orders.

\paragraph{Acknowledgement}The work of both authors have been supported by NSF Grant IIS 1814152.

\bibliographystyle{unsrt}  
\bibliography{references}  


\end{document}